\documentclass[11pt]{article}
\usepackage[margin=1in]{geometry}
\usepackage{amsmath,amssymb}
\usepackage{mathrsfs}
\usepackage{float}
\usepackage{placeins}
\usepackage{hyperref}
\usepackage{graphicx}
\usepackage{bm}
\usepackage{bbm}
\usepackage{color}
\usepackage{enumitem}
\usepackage{array}
\usepackage{empheq}
\usepackage{amsthm}
\usepackage{tikz} \usetikzlibrary{ positioning, calc, arrows.meta }

\newcommand{\doubleangle}[1]{\langle\kern-0.2em\langle #1 \rangle\kern-0.2em\rangle}
\newcommand{\pv}{\mbox{;}\,}

\newcommand{\I}{\text{i}}
\newtheorem{theorem}{Theorem}
\newtheorem{proposition}{Proposition}
\newtheorem{corollary}{Corollary}
\newtheorem{remark}{Remark}
\usepackage[normalem]{ulem}
\usepackage{authblk}  

\begin{document}

\title{Boolean Cumulants and Exact Reduced Descriptions of Renewal-Driven Systems}

\author[1]{Marco Bianucci}
\author[2]{Riccardo Mannella}

\affil[1]{CNR--ISMAR, Lerici (SP), Italy}
\affil[2]{Dipartimento di Fisica, Universit\`a di Pisa, 56100 Pisa, Italy}

\date{\today}
\maketitle
\tableofcontents
\newpage
\begin{abstract}
Reduced descriptions of unresolved fluctuations are commonly based on Gaussian
processes, although many realistic forcings have finite correlation times and a
renewal structure. For memoryless step (Kubo--Anderson) noise, we show that the
reduced dynamics admit two exact and complementary descriptions: a kernel
representation governed by the Boolean cumulants of the jump distribution, and a
frozen-noise representation adapted to stationary probability densities.
For exponentially distributed waiting times, the totally time-ordered
$G$-cumulants coincide with the Boolean cumulants of the jump law, and the memory
kernel is resummed exactly as the Boolean generator $\eta$ evaluated on a
resolvent operator. Second-order closure is exact if and only if the jump law is
symmetric Bernoulli; otherwise, the leading closure error is controlled by
$(b_4/b_2)\lambda^2$. The Boolean hierarchy, however, acts on the kernel, not on
the stationary measure. Stationary densities are approximated by replacing the
jump law with its $N$-point Gauss quadrature: each surrogate preserves all
multi-time correlations up to order $2N-1$ for any waiting-time law, its kernel
is a Pad\'e resummation of the Boolean one, and it is itself an exactly solvable
renewal problem. For the linear system with linear multiplicative interaction
(LIMI/CAM), the frozen-noise representation reduces the dynamics to a random
affine recursion and yields exact support boundaries, singularity exponents and
Kesten tail indices, confirmed numerically. Rates, moments and closure errors are
governed by the Boolean hierarchy, whereas stationary densities are determined by
the geometry of the jump distribution. For memoryless renewal noise, the
combinatorial structure controlling finite-correlation reductions is Boolean.
\end{abstract}


\section{Introduction\label{sec:intro}}

\begin{figure*}[h!]
\centering
\begin{tikzpicture}[
    node distance=0.4cm and 0cm,
    every node/.style={font=\footnotesize},
    box/.style={
        draw,
        rounded corners,
        align=center,
        minimum width=4.cm,
        minimum height=0.9cm
    },
    titlebox/.style={
        draw,
        rounded corners,
        fill=gray!15,
        align=center,
        minimum width=4.cm,
        minimum height=0.9cm
    },
    resultbox/.style={
        draw,
        rounded corners,
        fill=blue!5,
        align=center,
        minimum width=4.cm,
        minimum height=1cm
    },
    arr/.style={->,thick}
]


\node[titlebox] (renewal)
{SAME RENEWAL \\[1mm]STEP PROCESS $\xi[t]$};


\node[titlebox, below left=of renewal]
(kernelroute)
{KERNEL ROUTE\\
$G$-Cumulants};

\node[titlebox, below right=of renewal]
(frozenroute)
{FROZEN-NOISE ROUTE\\
Piecewise deterministic};

\draw[arr] (renewal) -- (kernelroute);
\draw[arr] (renewal) -- (frozenroute);


\node[box, below=of kernelroute]
(cumulants)
{Boolean cumulants\\
$b_n[p(\xi)]$};

\node[box, below=of frozenroute]
(atoms)
{Atoms of $p(\xi)$\\
$\{\xi_1,\xi_2,\ldots\}$};

\draw[arr] (kernelroute) -- (cumulants);
\draw[arr] (frozenroute) -- (atoms);


\node[box, below=of cumulants]
(exactkernel)
{Exact Kernel\\[1mm]
$\eta(\mathcal X)$};

\node[box, below=of atoms]
(frozenflows)
{Frozen Flows\\[1mm]
$\dot x=-C(x)+\epsilon\xi I(x)$};

\draw[arr] (cumulants) -- (exactkernel);
\draw[arr] (atoms) -- (frozenflows);


\node[resultbox, below=of exactkernel]
(kernelhier)
{Kernel Hierarchy\\[1mm]
$b_2,b_3,b_4,\ldots$};

\node[resultbox, below=of frozenflows]
(atomhier)
{Atom Hierarchy\\[1mm]
$N=2,3,4,\ldots$ atoms};

\draw[arr] (exactkernel) -- (kernelhier);
\draw[arr] (frozenflows) -- (atomhier);


\node[resultbox, below=of kernelhier]
(kernelobs)
{Rates\\
Moments\\
Spectra\\
Closure Errors};

\node[resultbox, below=of atomhier]
(densityobs)
{Stationary PDF\\
Support Boundaries\\
Singularities\\
Kesten Tails};

\draw[arr] (kernelhier) -- (kernelobs);
\draw[arr] (atomhier) -- (densityobs);


\node[box, below=of kernelobs]
(secondclosure)
{Second-order closure\\[1mm]
$\eta(z)=b_2z^2$};

\node[box, below=of densityobs]
(twoatom)
{Two-atom law ($N=2$)\\[1mm]
$\displaystyle
p(\xi)=\delta(\xi-a)/2
+\delta(\xi+a)/2
$};

\draw[arr] (kernelobs) -- (secondclosure);
\draw[arr] (densityobs) -- (twoatom);

\draw[<->,thick]
(secondclosure.east) --
(twoatom.west |- secondclosure.east);

\node[font=\scriptsize]
at ($(secondclosure)!0.5!(twoatom)+(0,0.18)$)
{Bernoulli equivalence};


\node[box, below=of secondclosure]
(improvekernel)
{Adding cumulants\\
improves kernels};

\node[box, below=of twoatom]
(improvedensity)
{Adding atoms\\
improves densities};

\draw[arr] (secondclosure) -- (improvekernel);
\draw[arr] (twoatom) -- (improvedensity);


\node[titlebox,
      below=0.9cm of $(improvekernel)!0.5!(improvedensity)$]
(final)
{COMPLEMENTARY EXACT DESCRIPTIONS};

\draw[arr] (improvekernel) |- (final);
\draw[arr] (improvedensity) |- (final);

\end{tikzpicture}

\caption{
Conceptual structure of the present work. Renewal step processes admit two exact and
complementary descriptions. The Boolean route organizes the reduced dynamics in
kernel space through the exact kernel $\eta(\mathcal X)$ and the associated cumulant
hierarchy. The frozen-noise route organizes stationary measures in density
space through frozen flows and the atom hierarchy associated with the jump
distribution.  
}
\label{fig:conceptual_map}
\end{figure*}

Fluctuations in complex systems are routinely represented as stochastic
forcings superimposed on deterministic dynamics, accounting for unresolved
degrees of freedom such as thermal agitation, synaptic inputs, rapid market
fluctuations, or atmospheric forcing. A standard assumption is that such
forcings can be treated as Gaussian processes, an idea traditionally justified
through the central limit theorem (CLT) under the premise that a large number
of independent microscopic fluctuations act on substantially slower
macroscopic variables.

The validity of this picture relies on a strong separation of time scales.
To illustrate the basic setting, consider the stochastic differential equation
\begin{equation}
\label{SDE}
\dot{x} = -C(x) + \epsilon\,I(x)\,\xi[t],
\end{equation}
where $C(x)$ is a generally nonlinear drift, $\xi[t]$ is a stochastic process
with unit variance and correlation time $\tau$, and $I(x)$ is a
state-dependent coupling. Throughout this work, square brackets denote
time-dependent stochastic processes, whereas parentheses refer to particular
realizations. The parameter $\epsilon$ controls the strength of the forcing.

In the singular limit of infinite scale separation, where the correlation time
$\tau$ is much shorter than the characteristic relaxation time of the
deterministic dynamics, the latter can be regarded as locally frozen. A local
form of the CLT then applies, and Eq.~\eqref{SDE} reduces to the familiar
Fokker--Planck equation
\begin{align}
\partial_t P(x\pv t)
\approx
\partial_x\!\bigl[C(x)P\bigr]
+
\epsilon^2\tau\,
\partial_x\!\bigl[I(x)\,\partial_x(I(x)P)\bigr].
\label{ME_secOrder_FPE}
\end{align}

Maintaining a finite diffusive contribution in this limit requires
$\epsilon^2\tau$ to remain constant, effectively replacing the original
forcing by a Gaussian white-noise process.

Many physical, biological, and economic systems, however, operate far from
this asymptotic regime. Correlation times are finite, the statistics of the
unresolved variables are often unknown, and the forcing is frequently more
naturally described as a renewal process consisting of random laminar epochs
rather than as the coarse-grained average of an underlying Gaussian field.
Adopting a Gaussian description under these conditions amounts to introducing
an additional unresolved microscopic layer whose collective action generates
the observed colored noise, as in the Ornstein--Uhlenbeck construction.

The widespread use of Gaussian forcing stems largely from its analytical
tractability~\cite{fFPReport1978}. This tractability is exact only for linear
dynamics. For $C(x)=\gamma x$ and $I(x)=1$, Gaussianity is transferred
linearly from the forcing to the state variable, and the Fokker--Planck
equation provides a complete description. Under nonlinear drift or
state-dependent coupling, this property is lost. Standard projection and
adiabatic-elimination techniques, including the Mori--Zwanzig
formalism~\cite{Zwanzig2001,grigolini1989,chkPNAS}, then yield, for
$t\gg\tau$, an effective evolution equation of the form
\begin{align}
\partial_t P(x\pv t)
=
\partial_x\!\bigl[C(x)P\bigr]
+
\epsilon^2\,
\partial_x\!\bigl[I(x)\,\partial_x(H(x)P)\bigr],
\label{FPE_GEN}
\end{align}
where $H=H_{\rm BFPE}$ at leading order in the perturbation
strength~\cite{bbmJSP191}, or $H=H_{\rm LLA}$ within the local-linear
approximation~\cite{bbmJSP191,tgPRA38}.

The derivation of Eq.~\eqref{FPE_GEN} relies on two key assumptions. First,
the integrated $n$-time correlations of the forcing must scale as $\tau^n$,
with $\tau$ significantly shorter than the deterministic timescale. Second,
either the coupling must remain perturbatively weak ($\epsilon\ll1$), or the
forcing must be sufficiently Gaussian for the dynamics to explore only
locally linear regions of phase space. Under the latter assumption, errors are
ascribed entirely to dynamical nonlinearities, while the statistical
hierarchy is regarded as exact.

This naturally raises a more fundamental question. If the forcing is generated
by a finite-correlation renewal process rather than by an underlying Gaussian
field, which statistical hierarchy governs the reduced dynamics? More
generally, is Gaussianity the appropriate organizing principle for
finite-correlation stochastic reductions, or does a different algebraic
structure emerge naturally from the renewal process itself?

In the present work we address these questions by keeping the dynamics exact
and focusing instead on the statistical structure of the forcing. Using the
generalized-cumulant formalism~\cite{bbJSTAT4,bCSF148,bCSF159} together with
the exact multi-time correlations of renewal processes~\cite{bblmCSF196,
bblmCSF202}, we show that the natural hierarchy governing memoryless
step-renewal noise is not the hierarchy of classical cumulants but the
Boolean hierarchy of non-commutative probability.

The generalized-cumulant construction, originally introduced by
Kubo~\cite{kuboGenCumJPSJ17,kuboGenCumJMP4} and subsequently developed by
Freed and collaborators~\cite{fJCP49,ydfJCP62}, replaces the logarithm of the
characteristic function by a projection operator acting on time-ordered
expansions. Although the formalism was historically debated because of issues
related to time ordering~\cite{fJMP17,fJMP20}, a rigorous formulation was
established in Ref.~\cite{bbJSTAT4}. Building upon that framework, we show
that the totally time-ordered ($G$-)cumulants possess an exact combinatorial
interpretation and coincide with the Boolean cumulants associated with
interval partitions.

A second result emerges when the renewal process is viewed from a different
perspective. Besides the kernel description naturally produced by the
generalized-cumulant formalism, renewal-driven systems admit a complementary
frozen-noise description in which the dynamics is represented as a succession
of deterministic epochs. The two approaches are exact, but they organize
different aspects of the same stochastic process.

The Boolean formulation naturally acts in kernel space. It is organized by the
Boolean cumulants of the jump distribution and by the corresponding exact
kernel, providing direct access to relaxation rates, low-order moments,
spectral observables, and closure criteria. The frozen-noise formulation acts
instead in density space. There the fundamental objects are the frozen flows
generated by the atoms of the jump distribution, from which support
boundaries, singularity structures, and tail properties of the stationary
measure can be inferred.

The relationship between these two descriptions is summarized schematically in
Fig.~\ref{fig:conceptual_map}. The left branch follows the hierarchy of
Boolean cumulants and the associated kernel expansion, while the right branch
follows the hierarchy of atoms underlying the jump distribution and the
corresponding frozen flows. The two routes are complementary: adding
cumulants systematically improves the kernel description, whereas adding atoms
systematically improves the approximation of stationary densities.

A central message of the present work is that these two hierarchies should not
be conflated. Kernel truncations control the accuracy of the reduced dynamics,
whereas stationary densities possess a distinct geometric structure governed
directly by the jump distribution and by the family of frozen deterministic
flows. Consequently, excellent agreement for rates and moments does not
necessarily imply an equally accurate description of the stationary measure.
The two hierarchies are nevertheless related: at the same order they are a
polynomial truncation and a Pad\'e resummation of the same Boolean generator,
and only the latter corresponds to a genuine renewal process
(Section~\ref{sec:atom_expansion}).

\medskip

\noindent
The main results can be summarized as follows:

\begin{description}

\item[\emph{(i) Exact Boolean Structure of Renewal Noise.}]
For memoryless step-renewal processes, the totally time-ordered
($G$-)cumulants coincide exactly with the Boolean cumulants of the jump
distribution. For exponentially distributed waiting times,
\[
\doubleangle{\xi(u_1)\cdots\xi(u_n)}^{(G)}
=
b_n[p(\xi)]\,\phi(u_n-u_1),
\]
at arbitrary time separations. This establishes an explicit bridge between
renewal processes and Boolean probability theory.

\item[\emph{(ii) Exact Kernel Resummation.}]
The Boolean identification allows the generalized-cumulant hierarchy to be
resummed exactly. The resulting memory kernel is expressed through the Boolean
generator $\eta$ evaluated on a resolvent operator and applies to arbitrary
drift and state-dependent coupling.
An independent derivation from the frozen-noise representation extends the
exact kernel to arbitrary waiting-time densities.
\item[\emph{(iii) Exact Closure and Controlled Error Estimates.}]
The hierarchy terminates at second order if and only if the jump distribution
is symmetric Bernoulli. Telegraph noise is therefore exactly solvable because,
within the Boolean kernel hierarchy, the Bernoulli distribution plays the same
algebraic role that the Gaussian distribution plays in the hierarchy of
classical cumulants. For general jump laws, the leading closure error scales
as $(b_4/b_2)\lambda^2$, cleanly separating statistical and dynamical
contributions.

\item[\emph{(iv) Kernel and Density Hierarchies.}]
Kernel localization and truncation govern rates and low-order moments but do
not control stationary probability densities, which are organized instead by
the atoms of the jump distribution. Replacing the jump law with its $N$-point
Gauss quadrature preserves all multi-time correlations up to order $2N-1$ for
any waiting-time law; for Poissonian renewal, the resulting kernel is a Pad\'e
resummation of the Boolean one. The two hierarchies are thus two truncations of
the same object, and only the atomic one yields genuine stationary densities.

\item[\emph{(v) Exact Stationary Statistics.}]
A complementary frozen-noise representation provides direct access to
stationary measures. For the LIMI/CAM model, the dynamics reduce to a random
affine recursion, yielding exact expressions for support boundaries,
singularity exponents, and Kesten tail indices.

\item[\emph{(vi) Non-Perturbative Applicability and Numerical Validation.}]
Because the theory is organized through the Boolean generator rather than a
moment hierarchy, it extends naturally to jump distributions with divergent
moments. Analytical predictions for closure errors, support boundaries,
singular exponents, and Kesten tails are confirmed by extensive numerical
simulations.

\end{description}

Finally, it is important to distinguish step-renewal processes from spike or
shot-noise processes. Although both originate from renewal statistics, they
lead to fundamentally different reduced descriptions. Poissonian spike noise
generates a time-local master equation governed by the ordinary moments of the
jump distribution~\cite{bbmJSTAT2026}, whereas step-renewal processes retain
memory through finite residence times. It is precisely this persistence that
produces a non-local memory kernel and gives rise to the Boolean structure
identified here.

The paper is organized as follows. Sections~\ref{sec:model}--
\ref{sec:nonmarkov} establish the algebraic foundations of the theory,
derive the exact Boolean identification for renewal noise, obtain the
resummed memory kernel, characterize exact closure, and discuss the role of
memorylessness and its non-Markovian extensions.
Sections~\ref{sec:scope} and~\ref{sec:scales} analyze the scope of the theory
and derive quantitative error estimates for finite-order truncations.
Section~\ref{sec:two_routes} introduces the distinction between the kernel and
frozen-noise descriptions and clarifies their complementary domains of
applicability, and Section~\ref{sec:unbounded_expansion} develops the atom
hierarchy, which approximates stationary densities directly in density space
and is related to the kernel hierarchy by a Pad\'e resummation. Sections~\ref{sec:CAM} and~\ref{sec:numres} develop the LIMI/CAM
example, derive exact results for stationary densities, support boundaries,
singularities, and Kesten tails, and validate these predictions numerically.
Finally, Section~\ref{sec:beyond_cam} discusses the extent to which the
resulting picture extends beyond linear drift and coupling before concluding in
Section~\ref{sec:conclusions}.
\section{The model and the cumulant framework\label{sec:model}}
We begin by recalling the generalized-cumulant formalism in a form suitable for renewal-driven
 systems. 
 At this stage, no assumption is made on the statistics of the forcing beyond the existence
  of the relevant correlation functions. The role of this section is twofold. First, it introduces
  the time-ordered cumulant hierarchy governing the reduced dynamics. 
  Second, it identifies the combinatorial structure underlying the corresponding 
  $G$-cumulants. 
  In Section~\ref{sec:renewal}, the specialization to renewal noise will select interval 
  partitions and lead to their exact identification with Boolean cumulants.

\subsection{Stochastic Liouville equation and $G$-cumulants}

For a given realization of the noise $\xi(t)$, the conditional probability density $P(x,\xi(t)\pv t)$ satisfies the stochastic Liouville equation associated with Eq.~\eqref{SDE}:
\begin{equation}
\label{stochLiouv}
\partial_t P(x,\xi(t)\pv t)=\big[\mathcal L_a+\epsilon\,\xi(t)\mathcal L_I\big]
P(x,\xi(t)\pv t),
\qquad
\mathcal L_a:=\partial_xC(x),\quad \mathcal L_I:=\partial_xI(x).
\end{equation}
Introducing the interaction representation, $\tilde P=e^{-\mathcal L_a t}P$, one obtains $\partial_t\tilde P=\epsilon\,\xi(t)\tilde{\mathcal L}_I(t)\tilde P$, where
\begin{equation}
\label{LIt}
\tilde{\mathcal L}_I(t):=e^{-\mathcal L_at}\mathcal L_Ie^{\mathcal L_at}
=e^{-\mathcal L_a^\times t}[\mathcal L_I]
\end{equation}
represents the adjoint Lie evolution of $\mathcal L_I$ generated by $\mathcal L_a$~\cite{bJMP59}. Averaging over the noise realizations yields the reduced density in the form of a generalized characteristic function of the stochastic operator
 $\mathscr S(t):=\int_0^t\Omega(u)\mathrm du$, with $\Omega(u):=\xi(u)\tilde{\mathcal L}_I(u)$:
\begin{equation}
\label{pExp}
\tilde P(x\pv t)=\Big\langle\exp_O\Big[\epsilon\int_0^t\mathrm du\,
\Omega(u)\Big]\Big\rangle P(x\pv 0)
:=\Big\langle\exp_O\Big[\epsilon\int_0^t\mathrm du\,
\tilde{\mathcal L}_I(u)\xi(u)\Big]\Big\rangle P(x\pv 0),
\end{equation}
where $\exp_O$ denotes the exponential ordered under the partial time-ordering (PTO) map $O$
(i.e., the standard time-ordered exponential). 
Following Ref.~\cite{bbJSTAT4}, generalized ($M$-)cumulants are defined by expressing the characteristic function as a generalized exponential of a cumulant generator:
\begin{align}
\label{pExp_2}
\tilde P(x\pv t)=\exp_{M_O}\Big[
\sum_{n\ge1}\epsilon^n\!\!\int_0^t\!\!\mathrm du_{n}\!\!\int_0^{u_{n}}\!\!\!
\mathrm du_{n-1}\cdots\!\!\int_0^{u_2}\!\!\mathrm du_1\;
\doubleangle{\Omega(u_1)\Omega(u_2)\cdots\Omega(u_n)}
\Big]P(x\pv 0).
\end{align}
Among the admissible ordering maps $M_O$, two play a distinguished role. Choosing PTO ($M_O=O$) yields a time-local master equation, whereas total time-ordering (TTO, $M_O=G$) yields a master equation governed by a memory kernel~\cite[\S4.4.3]{bbJSTAT4}\footnote{While PTO and TTO coincide for single-time arguments as in Eq.~\eqref{pExp}, TTO acts non-trivially on multi-time objects, like the argument of the exponential in Eq.~\eqref{pExp_2}.}. Under $G$-ordering, the reduced density obeys
\begin{equation}
\label{MEG}
\partial_t\tilde P(x\pv t)=\int_0^t\mathrm du\;
G\big(-\I\epsilon\tilde{\mathcal L}_I(\cdot)\pv t,u\big)\,\tilde P(x\pv u),
\end{equation}
with the Green function expanded in the corresponding $G$-cumulants:
\begin{align}
\label{GCum_x_L}
G\big(-\I\epsilon\tilde{\mathcal L}_I(\cdot)\pv t,u\big)
&=\sum_{n\ge1}\epsilon^n\!\!\int_u^t\!\!\mathrm du_{n-1}\!\!\int_u^{u_{n-1}}\!\!\!
\mathrm du_{n-2}\cdots\!\!\int_u^{u_3}\!\!\mathrm du_2\;
\tilde{\mathcal L}_I(t)\tilde{\mathcal L}_I(u_{n-1})\cdots\tilde{\mathcal L}_I(u)
\nonumber\\
&\hspace{4.2cm}\times
\doubleangle{\xi(u)\xi(u_2)\cdots\xi(t)}^{(G)} .
\end{align}
Here, $G$ designates both the total-ordering map and the generated kernel, reflecting the fact that total time-ordering generates the memory kernel. Returning to the Schr\"odinger picture, Eq.~\eqref{MEG} becomes
\begin{equation}
\label{MEG_}
\partial_t P(x\pv t)={\mathcal L}_aP(x\pv t)+
\int_0^t\mathrm du\;
\mathcal G\big( t,u\big)\, P(x\pv u),
\end{equation}
where
\begin{equation}
\label{G_}
\mathcal G(t,u):=e^{\mathcal L_at}G\big(-\I\epsilon\tilde{\mathcal L}_I(\cdot)\pv t,u\big)e^{-\mathcal L_au}.
\end{equation}

\subsection{Properties of total time ordering}

Two key properties distinguish the TTO map from other admissible choices.

First, $G$-ordering exhibits an \emph{inheritance property}: every cumulant of the composite operator $\Omega(u)=\xi(u)\tilde{\mathcal L}_I(u)$ factorizes into an ordered product of Liouvillians multiplied by the scalar noise cumulant~\cite{bCSF159},
\begin{align}
\label{G-cumulant_Omega}
\doubleangle{\Omega(u_n)\cdots\Omega(u_1)}^{(G)}
&=\tilde{\mathcal L}_I(u_n)\cdots\tilde{\mathcal L}_I(u_1)\;
\doubleangle{\xi(u_1)\cdots\xi(u_n)}^{(G)} .
\end{align}
Consequently, any vanishing-cumulant condition satisfied by the scalar noise is inherited directly by the operator. Under the PTO map, by contrast, non-commuting factors mix cumulants of different orders, preventing higher-order operator cumulants from vanishing even when scalar cumulants do.

Second, $G$-cumulants enter directly into the memory kernel appearing in Eq.~\eqref{GCum_x_L}. Consequently, the entire problem of deriving the reduced dynamics is reduced to determining the scalar $G$-cumulants of the driving noise.

\subsection{Combinatorial identity of $G$-cumulants}

The moment--cumulant relations governing the $G$-map possess a clear combinatorial interpretation. Denoting by $\mathbb P:=\rangle\langle$ the projection operator that factorizes averages, the inversion formula from Ref.~\cite[Eq.~(95)]{bbJSTAT4} takes the form:
\begin{align}
\label{k_nVSm_n}
\doubleangle{\xi(u_1)\xi(u_2)\cdots\xi(u_n)}^{(G)}&=
\langle \xi(u_1)(1-\mathbb P)\xi(u_2)(1-\mathbb P)\cdots(1-\mathbb P)\xi(u_n)\rangle
\nonumber \\
&=\sum_{\pi\in\mathcal I(n)}(-1)^{|\pi|-1}\ \prod_{B\in\pi}
\langle\xi(u_{b_1})\cdots\xi(u_{b_{|B|}})\rangle.
\end{align}
Here, the M\"obius function simplifies to an alternating sign $(-1)^{|\pi|-1}$. The sum ranges over the $2^{n-1}$ compositions of the ordered sequence $u_1\le\cdots\le u_n$ into consecutive blocks, where $|\pi|$ denotes the number of blocks in partition $\pi$. The inverse relation expresses moments in terms of $G$-cumulants:
\begin{equation}
\label{m_vs_G}
\langle\xi(u_1)\cdots\xi(u_n)\rangle
=\sum_{\pi\in\mathcal I(n)}\ \prod_{B\in\pi}
\doubleangle{\xi(u_{b_1})\cdots\xi(u_{b_{|B|}})}^{(G)},
\end{equation}
where the sum runs over the lattice of interval partitions $\mathcal I(n)$ (Appendix~\ref{app:boolean}).

Equations~\eqref{k_nVSm_n} and~\eqref{m_vs_G} are precisely the moment--cumulant relations defining Boolean cumulants through M\"obius inversion on the lattice of interval partitions $\mathcal I(n)$~\cite{SpeicherWoroudi1997,NicaSpeicher2006}. Thus, the generalized cumulants introduced in the context of time-ordered stochastic dynamics possess an intrinsically Boolean combinatorial structure. At this stage, the correspondence is purely algebraic and makes no reference to a specific stochastic process. The crucial observation developed in the next section is that renewal noise naturally generates interval partitions as its underlying correlation structure. As a consequence, the $G$-cumulants of renewal processes become literally the Boolean cumulants of the jump distribution.
Their place within the broader classification of independence notions~\cite{Muraki2003} is investigated in a companion work.

\section{Renewal Step Noise: the Exact Boolean Identification
\label{sec:renewal}}

\subsection{Process Definition and Interval-Partition Support
\label{definition_and_partition}}

We consider the step (Kubo--Anderson) renewal process $\xi[t]$, in which the
noise remains constant over random laminar intervals and is resampled at
renewal events. During each epoch, the value of $\xi$ is drawn independently
from a jump distribution $p(\xi)$, while the epoch duration is drawn from a
waiting-time density $\psi(\theta)$. Throughout this work, we assume
$\overline{\xi}=0$ and, unless stated otherwise, unit variance $\overline{\xi^2}=1$, while leaving the
higher moments $\overline{\xi^n}$ of $p(\xi)$ arbitrary.

For exponential waiting times,
$\psi(\theta)=e^{-\theta/\tau}/\tau$,
the process is Markovian, stationary, and free of aging. Its two-time
correlation function takes the simple form

\[
\langle\xi(u_1)\xi(u_2)\rangle
=
e^{-|u_2-u_1|/\tau}.
\]

The resulting process constitutes a standard model of finite-correlation
forcing, extending the familiar two-state telegraph process to arbitrary
discrete or continuous jump distributions.

Although non-exponential waiting times lead to aging effects and to memory
kernels that are no longer of convolution type (see
Section~\ref{sec:nonmarkov}), the underlying combinatorial structure is much
more general and does not depend on the choice of $\psi(\theta)$.

Consider a collection of ordered observation times
$u_1\le\cdots\le u_n$. We define a \emph{block} as a subset of observation
times belonging to the same renewal epoch and therefore sharing the same noise
realization. Because the process is piecewise constant, if two times $u_i$ and
$u_k$ belong to the same epoch, then every intermediate time $u_j$ satisfying
$i<j<k$ must belong to that epoch as well. Consequently, the admissible blocks
are necessarily contiguous and form interval partitions of the ordered index
set.

Configurations involving crossing or nested partitions therefore have
identically zero probability. The correlation functions of step-renewal noise
are supported exclusively on interval partitions, independently of the
particular waiting-time distribution $\psi(\theta)$.

This observation is the key structural ingredient underlying the general
results of Ref.~\cite{bblmCSF202}. In the present context, it provides the
crucial link between the combinatorics of renewal processes and the Boolean
cumulant structure identified in Section~\ref{sec:model}. Indeed, once the
support of the correlation functions is restricted to interval partitions, the
Boolean nature of the resulting cumulant hierarchy becomes a direct
consequence of the renewal structure itself.

\subsection{Exact evaluation of the generalized characteristic function\label{sec:frozen}}

The Boolean-cumulant description developed below is not the only exact route to the reduced dynamics. For step-renewal noise, the piecewise-constant nature of the forcing allows the generalized characteristic function  in Eq.~\eqref{pExp} to be evaluated directly, yielding an exact pathwise representation that will play a central role in the analysis of stationary statistics.

 Within a laminar interval where $\xi[t]$ remains constant, the generator in the Schr\"odinger picture reduces to the time-independent operator
\begin{equation}
\mathcal L_\xi:=\mathcal L_a+\epsilon\,\xi\,\mathcal L_I ,
\label{eq:frozen_generator}
\end{equation}
rendering the ordered exponential equivalent to a standard operator exponential.

Let $t_0=0<t_1<\cdots<t_N<t$ denote, respectively, the initial time and the renewal instants in a given realization, $\xi_k$ the value held during $[t_{k-1},t_k]$, and $\theta_k:=t_k-t_{k-1}$ the duration of the $k$-th laminar interval, with $\theta_{N+1}:=t-t_N$ representing the residual duration. The full propagator factorizes into a product of static-noise propagators:
\begin{equation}
\exp_O\Big[\int_0^t\!\mathrm du\,
\big(\mathcal L_a+\epsilon\,\xi(u)\,\mathcal L_I\big)\Big]
=\prod_{k}^{\longleftarrow}\exp\big[\mathcal L_{\xi_k}\,\theta_k\big],
\label{eq:frozen_factorization}
\end{equation}
where the arrow indicates chronological ordering (earlier times to the right).

Since the jump values $\xi_k$ are independent draws from $p(\xi)$ and are independent of interval durations, averaging over $\xi_k$ yields the effective propagator
\begin{equation}
\mathcal E(\theta):=\big\langle e^{\mathcal L_\xi \theta}\big\rangle_\xi
=\big\langle e^{(\mathcal L_a+\epsilon\xi\mathcal L_I)\theta}\big\rangle_\xi ,
\label{eq:frozen_propagator}
\end{equation}
which is the moment-generating function of the jump distribution evaluated on the frozen operator $\mathcal L_\xi$. Averaging over the renewal events yields the exact representation:
\begin{empheq}[box=\fbox]{align}
P(x\pv t)=\sum_{N\ge0}\ \int\!\mathrm d\theta_1\cdots\mathrm d\theta_{N+1}\;
\delta\Big(t-\sum_{k=1}^{N+1}\theta_k\Big)\,
\psi(\theta_1)\cdots\psi(\theta_N)\,\Psi(\theta_{N+1})
\prod_{k=1}^{N+1}\mathcal E(\theta_k)\;P(x\pv 0),
\label{eq:frozen_series}
\end{empheq}
where $\Psi(\theta)=\int_\theta^\infty\psi(\theta')\,\mathrm d\theta'$ is the survival probability of the final interval. Equation~\eqref{eq:frozen_series} is exact for arbitrary $C(x)$, $I(x)$, and $\psi(\theta)$.

\subsection{Complementary Exact Descriptions of Renewal Dynamics \label{sec:two_descriptions}}

Equation~\eqref{eq:frozen_series} and the resummed master equation derived in
Section~\ref{sec:kernel} should not be viewed as competing representations.
They provide two exact descriptions of the same process, but give access to 
complementary classes of observables and analytical results.

\paragraph{Pathwise information and stationary statistics.}Equation~\eqref{eq:frozen_series} is fundamentally a pathwise description.
The dynamics are represented as a sequence of deterministic evolutions under
successive frozen values of the noise, making the geometry of individual
trajectories directly accessible. This representation naturally exposes
non-perturbative properties of the stationary state. The support of the
invariant measure, the local behavior around frozen fixed points, and the tail
indices arising in unstable regimes all follow directly from
Eq.~\eqref{eq:frozen_series} (Sections~\ref{sec:stationary}
and~\ref{sec:numres}). 
These quantities are controlled by global features of the jump 
distribution, such as its support and atomic structure, and therefore 
lie beyond the reach of any finite cumulant truncation.

\paragraph{Closure structure and reduced dynamics.}Although exact, Eq.~\eqref{eq:frozen_series} is generally not explicit.
Practical evaluation requires computing the operator exponential
$e^{\mathcal L_\xi \theta}$, averaging over the jump distribution, and performing
the renewal summation. For general drift and coupling functions, these steps
rarely admit closed-form expressions. 
Even when these operations can be carried out analytically, 
the resulting representation does not reveal transparently which 
features of the jump distribution control the reduced dynamics.

By contrast, the master-equation formulation makes the governing statistical structure explicit.
Its kernel is governed by the Boolean generating function
$\eta$, exact closure is linked to the vanishing of higher Boolean cumulants,
and the leading closure error is controlled by the ratio $b_4/b_2$. These
structural properties are not visible directly in
Eq.~\eqref{eq:frozen_series}, despite the fact that both descriptions are
exactly equivalent.

A further distinction is that the factorization
\eqref{eq:frozen_factorization} relies crucially on the piecewise-constant
character of the forcing. It therefore applies specifically to step-renewal
noise and does not extend to shot noise, continuously varying colored noises,
or more general stochastic perturbations. The generalized-cumulant framework,
by contrast, remains applicable far beyond the renewal setting.

In what follows, Eq.~\eqref{eq:frozen_series} will serve three purposes. First,
it provides an independent derivation of the resummed kernel
(Section~\ref{sec:second_derivation}). Second, it forms the basis of the
stationary-state analysis developed in
Sections~\ref{sec:stationary} and~\ref{sec:numres}. Third, in the Poissonian
case it yields an exact simulation scheme free of time-discretization errors.
In summary, the memory-kernel formulation is the natural tool for studying closure,
relaxation rates, moment dynamics, and spectral observables, whereas the frozen-noise
representation is the natural tool for studying stationary probability densities, support
boundaries, singular structures, and tail behavior. The two descriptions are therefore
complementary rather than redundant, and both are needed for a complete characterization of
renewal-driven systems.

\subsection{Multi-time correlations and Boolean identification}

For an arbitrary waiting-time density $\psi$, the exact $n$-time correlation function $C_n(u_1,\dots,u_n):=\langle\xi(u_1)\cdots\xi(u_n)\rangle$ can be expressed as a sum over 
the $2^{n-1}$ compositions of the ordered sequence $u_1\le\dots\le u_n$~\cite{bblmCSF202}:
\begin{equation}
C_n(u_1,\dots,u_n)=\sum_{p=1}^{n}\ \sum_{\{m_i\}:\sum_{i=1}^p m_i=n}
\overline{\xi^{m_1}}\,\overline{\xi^{m_2}}\cdots\overline{\xi^{m_p}}\;
\mathcal C\big(\{m_i\};u_1,\dots,u_n\big),
\label{eq:renewal_general}
\end{equation}
where $p:=|\pi|$ is the number of blocks in the partition, $\overline{\xi^{m_i}}$ are the moments of the jump distribution, and $\mathcal C$ represents the joint probability that each block falls within a single laminar region while adjacent blocks are separated by at least one renewal event (it replaces the notation with triangular brackets
exploited in 
Ref.~\cite{bblmCSF202}).

Equation~\eqref{eq:renewal_general} is the analytic
counterpart of the structural statement given in Section~\ref{definition_and_partition}: the support is $\mathcal I(n)$
whatever $\psi$, and the per-block coefficients are the moments of the scalar random variable $\xi$.
The emergence of Boolean cumulants requires an additional ingredient, namely the memoryless property of exponential waiting times.

For exponential waiting times, the renewal process acquires an additional structure. Memorylessness causes the probability associated with each interval partition to factorize into independent gap contributions, allowing Eq.~\eqref{eq:renewal_general} to be rewritten in a form adapted to Boolean combinatorics.
In fact, defining the lags $d_i:=u_{i+1}-u_i$ between consecutive observation times and $x_i:=e^{-d_i/\tau}$, Eq.~\eqref{eq:renewal_general} reduces 
to~\cite{bblmCSF202}\footnote{Writing
the composition as $n=n_1+n_2+\dots+n_p$, from 
 Ref.~\cite[Eq.~(41)]{bblmCSF202}, Eq.~\eqref{eq:renewal_general} for exponential waiting time
 becomes explicit:
\begin{align}
\label{eq:corrGen}
&C_n(u_1,\dots,u_n)
= \sum_{p=1}^{n}\ \sum_{\{n_i\}:\sum_i n_i=n}
\overline{\xi^{n_1}}\;\overline{\xi^{n_2}}\cdots\overline{\xi^{n_p}}
\nonumber\\
&\qquad\times
\underbrace{e^{-(u_{n_1}-u_1)/\tau}}_{\text{no renewal inside block }1}
\underbrace{\Big(1-e^{-(u_{n_1+1}-u_{n_1})/\tau}\Big)}_{\text{at least one renewal between blocks }1,2}
\underbrace{e^{-(u_{n_1+n_2}-u_{n_1+1})/\tau}}_{\text{no renewal inside block }2}
\cdots
\end{align}
Each block contributes the survival probability over its own internal gaps,
each junction between consecutive blocks the probability that at least one
renewal occurred there. In terms of  $d_i$ 
and $x_i$, and labelling a composition by
its set $S\subseteq\{1,\dots,n-1\}$ of cut gaps, Eq.~\eqref{eq:corrGen} reads
compactly as Eq.~\eqref{eq:renewal_Cn}}:
\begin{equation}
C_n(u_1,\dots,u_n)
=\sum_{S\subseteq\{1,\dots,n-1\}}\ \Big[\prod_{i\notin S}x_i\Big]\Big[\prod_{i\in S}(1-x_i)\Big]
\prod_{B\in\pi_S}\overline{\xi^{|B|}} ,
\label{eq:renewal_Cn}
\end{equation}
where $S\subseteq\{1,\dots,n-1\}$ denotes the set of cut gaps (i.e. the set of gaps between consecutive blocks) and $\pi_S$ is the corresponding interval partition.

Equation~\eqref{eq:renewal_Cn} can be mapped onto a multiplicative Boolean structure by expanding the junction factors $(1-x_i)$. The mechanism is already visible at fourth order, where the conversion from moments to Boolean cumulants can be followed explicitly.
In this simple situation (and assuming $\overline\xi=0$) Eq.~\eqref{eq:renewal_Cn} contains two non-vanishing terms ($S=\emptyset$, i.e.\ no cut and a single block, and $S=\{2\}$, i.e.\ one cut at the central gap and two blocks):
\begin{equation}
C_4(u_1,u_2,u_3,u_4)=\overline{\xi^4}\,x_1x_2x_3 + \big(\overline{\xi^2}\big)^2\,x_1(1-x_2)x_3 .
\label{eq:C4_moment_form}
\end{equation}
Expanding $(1-x_2)$ and regrouping terms yields:
\begin{equation}
C_4(u_1,u_2,u_3,u_4)=\underbrace{\big[\overline{\xi^4}-\big(\overline{\xi^2}\big)^2\big]}_{=b_4}
x_1x_2x_3 + \underbrace{\overline{\xi^2}}_{=b_2}\,x_1\cdot\underbrace{\overline{\xi^2}}_{=b_2}\,x_3 ,
\label{eq:C4_boolean_form}
\end{equation}
where $b_2=\overline{\xi^2}$ and $b_4=\overline{\xi^4}-(\overline{\xi^2})^2$ are the second and fourth Boolean cumulants of $p(\xi)$. This transformation removes the explicit junction factors $(1-x_2)$, expressing the correlation function as a sum over interval partitions weighted purely by Boolean cumulants and internal survival probabilities.

        \begin{figure*}[h!]  
                \centering
                \includegraphics[width=\textwidth]{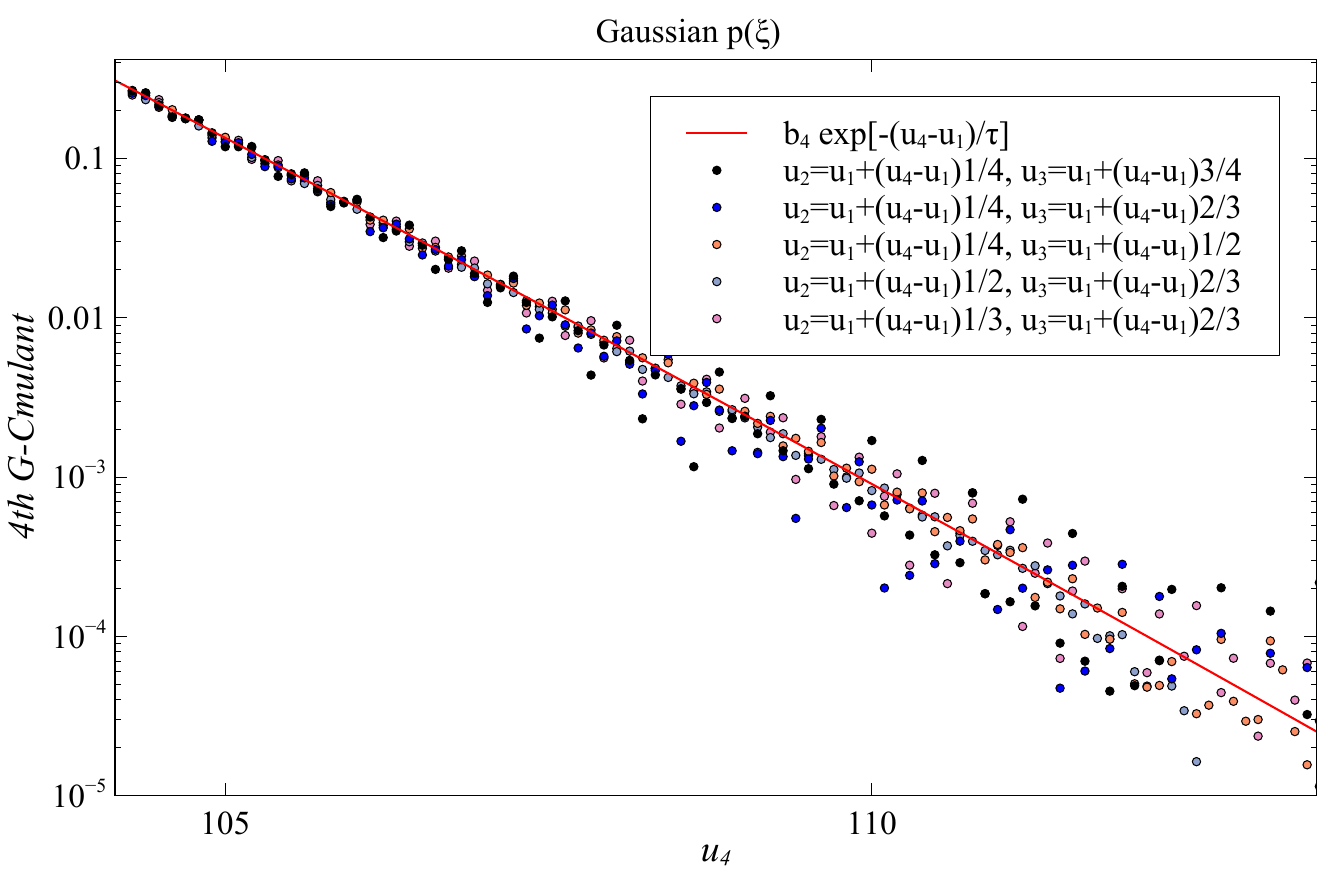}
                \caption{
                        Log-plots of the 4-time  $G$-cumulant 
                        $\doubleangle{\xi(u_1)\xi(u_2)\xi(u_3)\xi(u_4)}^{(G)}$ with $u_1=100$, in the case of exponential waiting time with $\tau = 1$, for the 
                        Gaussian PDF $p(\xi) =  \exp[-\xi^2/2]/\sqrt{2\pi}$ and different intermediate times.
                        Circles represent the results of numerical simulations.
                        Solid line is the theoretical result  in Eq.~\eqref{eq:n_G=bphi}.
                         }
                \label{gaue_T1_t1_100}
        \end{figure*}

The cancellation observed at fourth order is not accidental. Extending the same algebraic rearrangement to arbitrary order (Appendix~\ref{app:cancellation}) yields
\begin{equation}
C_n(u_1,\dots,u_n)=\sum_{\pi\in\mathcal I(n)}\prod_{B\in\pi}
b_{|B|}\,\phi\big(\mathrm{span}(B)\big),
\qquad \mathrm{span}(B):=u_{\max B}-u_{\min B},
\label{eq:renewal_boolean_form}
\end{equation}
where $\phi(u)=e^{-u/\tau}$, and the product of internal factors telescopes to $\prod_{i\in\text{int}(B)}x_i=\phi(\mathrm{span}(B))$.

Equation~\eqref{eq:renewal_boolean_form} has exactly the same structure as the moment--cumulant relation defining $G$-cumulants, Eq.~\eqref{m_vs_G}. The identification is therefore immediate.

\begin{proposition}[Boolean cumulants as exact $G$-cumulants]
\label{prop:main}
For a Poissonian step renewal process with jump distribution $p(\xi)$, the totally time-ordered cumulants are given exactly for all time separations by:
\begin{equation}
\doubleangle{\xi(u_1)\cdots\xi(u_n)}^{(G)}
=b_n\big[p(\xi)\big]\;\phi(u_n-u_1),
\label{eq:n_G=bphi}
\end{equation}
where $b_n[p(\xi)]$ are the Boolean cumulants of $p(\xi)$, and $\phi(u)=e^{-u/\tau}$.
\end{proposition}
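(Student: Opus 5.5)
The plan is to use uniqueness of Möbius inversion on the lattice of interval partitions. Equation~\eqref{m_vs_G} says that, for each fixed ordered tuple $u_1\le\cdots\le u_n$, the moments are the sum over $\pi\in\mathcal I(n)$ of products of $G$-cumulants of the blocks. Equation~\eqref{k_nVSm_n} inverts this relation. Its right-hand side involves only the full-tuple moment and moments of strictly shorter contiguous sub-tuples, so it determines the $G$-cumulants recursively in $n$.

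Accordingly, I will prove the statement by strong induction on $n$. Define the candidate family $\kappa_m(u_{i},\dots,u_{i+m-1}):=b_m\,\phi(u_{i+m-1}-u_i)$ on every contiguous sub-tuple. Any family that satisfies Eq.~\eqref{m_vs_G} for all tuples coincides with the $G$-cumulants, because the single-block term $\pi=\{[n]\}$ can be isolated:
\[
\doubleangle{\xi(u_1)\cdots\xi(u_n)}^{(G)}=C_n(u_1,\dots,u_n)-\sum_{\pi\in\mathcal I(n),\,|\pi|\ge2}\ \prod_{B\in\pi}\doubleangle{\xi(u_B)}^{(G)}.
\]
By the induction hypothesis, every block in the sum has fewer than $n$ elements, so each factor equals $b_{|B|}\phi(\mathrm{span}(B))$. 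Substituting Eq.~\eqref{eq:renewal_boolean_form} for $C_n$, all multi-block terms cancel and leave $b_n\phi(u_n-u_1)$. The base case $n=1$ is immediate: $\langle\xi\rangle=\overline{\xi}=b_1$ and $\phi(0)=1$.

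The real content is therefore Eq.~\eqref{eq:renewal_boolean_form}, which the paper defers to Appendix~\ref{app:cancellation}. I will sketch it so the argument is self-contained. Start from Eq.~\eqref{eq:renewal_Cn}, write each junction factor as $1-x_i$, and expand every product $\prod_{i\in S}(1-x_i)$ as a sum over subsets $T\subseteq S$ with sign $(-1)^{|T|}$ and weight $\prod_{i\in T}x_i$.

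Next, reorganize the resulting double sum by the coarser partition $\sigma$ whose cuts are $S\setminus T$. In each term, the factor $x_i$ appears exactly for $i\notin S\setminus T$, so the $x$-dependence is $\prod_{B\in\sigma}\phi(\mathrm{span}(B))$, using the telescoping $\prod_{\text{internal gaps}}x_i=\phi(\mathrm{span})$. The coefficient of $\sigma$ is then a sum over refinements of each block $B$ of $\sigma$, namely over choices of which internal gaps lie in $T$. For each block this sum equals $\sum_{\rho\in\mathcal I(|B|)}(-1)^{|\rho|-1}\prod_{C\in\rho}\overline{\xi^{|C|}}$. By the scalar Boolean moment--cumulant inversion, i.e.\ Eq.~\eqref{k_nVSm_n} applied to a single random variable, this is precisely $b_{|B|}$.

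The step I expect to be the main obstacle is this bookkeeping: checking that the signs $(-1)^{|T|}$ factorize block by block into the alternating Möbius signs, and that the product of $x$'s collapses exactly to internal gaps. Once that is verified, the uniqueness argument above closes the proof. Note that the identification holds pointwise in all time separations, with no asymptotic assumption, because both Eq.~\eqref{eq:renewal_boolean_form} and the inversion are exact identities.
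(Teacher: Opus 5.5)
Your proposal is correct and follows the paper's route. You establish the interval-partition form \eqref{eq:renewal_boolean_form} of $C_n$ and then identify the block weights $b_{|B|}\phi(\mathrm{span}(B))$ with the $G$-cumulants through uniqueness of the inversion \eqref{m_vs_G}, and your strong induction makes explicit the step the paper calls ``immediate.'' The only variation is in how you derive \eqref{eq:renewal_boolean_form}. You expand the junction factors $(1-x_i)$ and recognize the alternating M\"obius sum for $b_{|B|}$ block by block, which is the direction the paper illustrates at fourth order. Appendix~\ref{app:cancellation} instead expands each block moment into Boolean cumulants and cancels the cut gaps via $x_i+(1-x_i)=1$. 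The two are dual forms of the same inclusion--exclusion on gap subsets, and your sign factorization $(-1)^{|T|}=\prod_{B}(-1)^{|\rho_B|-1}$ indeed goes through.
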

Proposition~\ref{prop:main} provides the key link between the algebraic structure developed
in Section~\ref{sec:model} and the dynamics of renewal processes. The interval-partition
combinatorics generated by the renewal structure translates exactly into the Boolean
cumulants of the jump distribution. Consequently, the coefficients governing the
generalized-cumulant expansion are determined entirely by the Boolean cumulant hierarchy of
$p(\xi)$. For Poissonian renewal noise, the $G$-cumulants are therefore not merely analogous
to Boolean cumulants: they coincide exactly with the Boolean cumulants of the jump
distribution, modulated only by the survival probability of a single renewal epoch.

\begin{remark}[Probabilistic interpretation]
Conditional on a specific realization of renewal times, a block corresponds to a set of sampling points residing within a single epoch. Since jump values across different epochs are independent, any mixed $G$-cumulant spanning multiple epochs vanishes identically. Averaging over all renewal trains yields:
\begin{equation}
\doubleangle{\xi(u_1)\cdots\xi(u_n)}^{(G)}
=b_n\big[p(\xi)\big]\times
\Pr\big[\text{all } u_1,\dots,u_n \text{ lie within the same renewal epoch}\big].
\label{eq:kappa_as_probability}
\end{equation}
Thus, the connected $G$-cumulant at order $n$ isolates the single-epoch contribution, whereas multi-epoch configurations contribute exclusively to disconnected moments.
\end{remark}

In Fig.~\ref{gaue_T1_t1_100}, we compare the theoretical and numerical estimates of the fourth $G$-cumulant of 
a step-renewal process with Gaussian amplitude distribution $p(\xi)$. The cumulant is plotted as a function of 
the latest time $u_4$ for different choices of the intermediate times. Symbols represent the
results from the numerical simulations 
of the step-noise process $\xi[t]$, while solid lines correspond to the analytical prediction of 
Proposition~\ref{prop:main}, Eq.~\eqref{eq:n_G=bphi}. For symmetric jump distributions, the fourth $G$-cumulant can be expressed as 
\[ \doubleangle{\xi(u_1)\xi(u_2)\xi(u_3)\xi(u_4)}^{(G)} = \left\langle \xi(u_1)\xi(u_2)\xi(u_3)\xi(u_4)\right\rangle - \left\langle \xi(u_1)\xi(u_2)\right\rangle \left\langle \xi(u_3)\xi(u_4)\right\rangle , \]
while the fourth Boolean cumulant is given by $b_4= \overline{\xi^4}-\left(\overline{\xi^2}\right)^2$.

\begin{remark}[Span dependence and memorylessness]
In Eq.~\eqref{eq:n_G=bphi}, the intermediate time points enter the $G$-cumulant strictly through the total span $u_n-u_1$. 

This reduction to the total span relies entirely on the memoryless property \[ \phi(a+b)=\phi(a)\phi(b). \]

For non-exponential waiting times, this factorization fails, modifying the hierarchy as analyzed in Section~\ref{sec:nonmarkov}.
\end{remark}

\begin{remark}[Comparison with moment-based approximations]
\label{rem:moments_vs_cumulants}
Approximations that replace $(1-x_i)\to 1$ across cut gaps in Eq.~\eqref{eq:renewal_Cn}~\cite{bblmCSF202} recover the exponential factor $\phi(u_n-u_1)$ multiplied by ordinary moments $\overline{\xi^n}$ rather than Boolean cumulants $b_n$. This simplification is valid only in the large-gap limit ($d_i \gg \tau$). In contrast, Eq.~\eqref{eq:n_G=bphi} is exact across all time separations. This distinction is critical for evaluating the memory kernel in Eq.~\eqref{GCum_x_L}, where integration is dominated by short time lags ($t-u\lesssim\tau$).
\end{remark}

\section{The Boolean cumulants of the jump distribution\label{sec:four}}

Proposition~\ref{prop:main} reduces the full noise statistics entering the reduced dynamics to the sequence of Boolean cumulants $\{b_n\}$. The next step is therefore to understand the structure of these cumulants for representative jump laws. In particular, we wish to identify when the resulting hierarchy terminates, when it remains infinite, and how rapidly the higher-order terms grow.

Setting all times to coincide in Eq.~\eqref{eq:renewal_Cn} yields the standard moment--cumulant relation $\overline{\xi^n} = \sum_{\pi \in \mathcal I(n)} \prod_{B \in \pi} b_{|B|}$. In terms of ordinary (rather than exponential) generating functions, $1 + M_o(z) := \sum_{n \ge 0} \overline{\xi^n} z^n = \langle (1 - \xi z)^{-1} \rangle$, the cumulant generating function $\eta(z) = \sum_{n \ge 1} b_n z^n$ satisfies
\begin{equation}
\eta(z) = \frac{M_o(z)}{1 + M_o(z)} = 1 - \frac{1}{1 + M_o(z)} = 1 - \left\langle \frac{1}{1 - \xi z} \right\rangle^{-1},
\label{eq:boolean_ogf}
\end{equation}
which is the defining relation for Boolean cumulants via M\"obius inversion on interval partitions $\mathcal I(n)$~\cite{SpeicherWoroudi1997}.

To illustrate the variety of possible behaviors, we now evaluate Eq.~\eqref{eq:boolean_ogf} for four representative jump distributions, ranging from the uniquely truncating Bernoulli case to heavy-tailed laws with divergent moments, all assumed symmetric and normalized to unit variance. Definitions and standard properties of Boolean cumulants, along with their physical applications, are gathered in Appendix~\ref{app:boolean}. Note that while the Taylor coefficients of $M_o(z)$ and $\eta(z)$ exist only when all moments are finite, the Cauchy-type average in Eq.~\eqref{eq:boolean_ogf} is well defined for any probability measure $p(\xi)$, since the integrand decays as $|\xi|^{-1}$ for complex $z$ off the real axis.

\paragraph{(i) Symmetric Bernoulli, $\xi = \pm a$.} Here $M_o(z) = \sum_{n \ge 1} a^{2n} z^{2n} = a^2 z^2 / (1 - a^2 z^2)$, which gives
\begin{equation}
\eta(z) = a^2 z^2, \qquad b_2 = a^2, \qquad b_n = 0 \quad (n \ge 3).
\label{eq:bernoulli_eta}
\end{equation}
All Boolean cumulants beyond second order vanish identically. This is the unique finite-variance distribution with this property. In non-commutative probability, this property identifies the symmetric Bernoulli distribution as the \emph{Boolean Gaussian}~\cite{Muraki2003}---the counterpart for interval hierarchies of the classical Gaussian or the free semicircle distribution.

Combined with Proposition~\ref{prop:main}, Eq.~\eqref{eq:bernoulli_eta} provides a direct, one-line proof of the result in Ref.~\cite{bCSF159} that telegraph noise is an exact $G$-Gaussian process ($\doubleangle{\xi(u_1)\cdots\xi(u_n)}^{(G)} \propto b_n = 0$ for $n \ge 3$).

\paragraph{(ii) Uniform distribution on $[-a,a]$.} With moments $\overline{\xi^{2k}} = a^{2k}/(2k+1)$, the moment generating function reads $M_o(z) = \frac{1}{2az} \log\left(\frac{1+az}{1-az}\right) - 1$, leading to
\begin{equation}
\eta(z) = 1 - \frac{2az}{\log\left(\frac{1+az}{1-az}\right)}.
\end{equation}
Expanding $\eta(z)=1-az/\operatorname{artanh}(az)$ gives $b_{2m+1} = 0$ and even-order cumulants
\begin{equation}
b_{2m} = c_m\, a^{2m},
\qquad
1-\frac{y}{\operatorname{artanh}y}=\sum_{m\ge1}c_m\,y^{2m}
=\frac{y^2}{3}+\frac{4y^4}{45}+\frac{44y^6}{945}+\frac{428y^8}{14175}+\cdots .
\end{equation}
For unit variance ($a = \sqrt{3}$), the explicit coefficients are $b_2 = 1$, $b_4 = 4/5$, $b_6 = 44/35$, $b_8 = 428/175, \dots$ (numerically $1, 0.80, 1.26, 2.45, 5.30, \dots$). The hierarchy remains infinite, but the growth of the higher-order Boolean cumulants is comparatively moderate.

\paragraph{(iii) Gaussian, $\mathcal N(0,\sigma^2)$.} With moments $\overline{\xi^{2k}} = \sigma^{2k}(2k-1)!!$, one obtains
\begin{equation}
b_{2k} = \sigma^{2k} I_k, \qquad I_k \in \{1, 2, 10, 74, 706, 8162, 110410, \dots\},
\label{eq:gaussian_boolean}
\end{equation}
where $I_k$ are the Boolean cumulants of $\mathcal N(0,1)$, generated by
\begin{equation}
I(z) = \sum_{k \ge 1} I_k z^k = 1 - \left( \sum_{k=0}^{\infty} (2k-1)!! \, z^k \right)^{-1}.
\end{equation}
Combinatorially, $I_k$ counts the number of \emph{indecomposable} pair partitions of $2k$ points---pairings in which no proper non-empty subset of blocks covers $\{1, \dots, 2j\}$ for any $j < k$. 

This combinatorial feature links the present construction to field-theoretic diagrammatics. Equation~\eqref{eq:boolean_ogf} matches the ordinary-generating-function form of Dyson's equation for a propagator, $G = G_0 + G_0 \Sigma G = G_0 / (1 - \Sigma G_0)$, with the self-energy $\Sigma$ played by $\eta$. For Gaussian jumps, this connection is exact: Gaussian moments enumerate all pairings, whereas the Boolean transformation organizes moments into chains of consecutive interval blocks. Extracting the cumulants thus decomposes each pairing into its indecomposable factors. From this perspective, for a Gaussian jump distribution, \emph{the Boolean cumulants represent the one-particle-irreducible (1PI) content of the moments}, and Eq.~\eqref{eq:boolean_ogf} acts as Dyson's resummation of irreducible chains.

As consistency checks, the semicircle distribution (whose moments count non-crossing pairings) has Boolean cumulants given by the Catalan numbers $1,1,2,5,14,42,\dots$, which enumerate indecomposable non-crossing pairings; conversely, the Bernoulli distribution (which supports only a single interval structure) truncates at $b_2$.

\paragraph{(iv) Power-law tails, $p(\xi) \propto (1 + \xi^\beta)^{-1}$.} For even $\beta$, moments $\overline{\xi^n} \propto [\sin\pi(n+1)/\beta]^{-1}$ exist only for $n < \beta - 1$, meaning that only finitely many Boolean cumulants can be defined:
\begin{center}
\renewcommand{\arraystretch}{1.2}
\begin{tabular}{clc}
\hline
$\beta$ & Defined Boolean cumulants & First divergent term \\
\hline
$6$  & $b_2=1, \ b_4=3$ & $b_6$ \\
$8$  & $b_2=1, \ b_4=\sqrt2, \ b_6=6+3\sqrt2$ & $b_8$ \\
$10$ & $b_2=1, \ b_4=\tfrac{\sqrt5}{2}, \ b_6=\tfrac52+\tfrac{\sqrt5}{2}, \ b_8=\tfrac{123}{7}+\tfrac{\sqrt{13449}}{7}$ & $b_{10}$ \\
\hline
\end{tabular}
\end{center}
The hierarchy terminates here for the opposite reason to case (i): not because higher terms vanish, but because the next term diverges. As detailed in Section~\ref{sec:heavy}, this limitation restricts perturbative cumulant \emph{expansions}, rather than the non-perturbative validity of the formalism.

\medskip

In summary, among these distributions, only the symmetric Bernoulli hierarchy truncates. The Gaussian distribution is the furthest from closing: $b_4/b_2 = 2$ (versus $4/5$ for the uniform law), and its Boolean cumulants asymptotically saturate the moments, with the ratio of indecomposable to total pairings $b_{2k}/\overline{\xi^{2k}}$ ($1, 0.67, 0.67, 0.70, 0.75, 0.79, \dots$) approaching unity. Taken together, these examples illustrate the qualitative landscape of the Boolean hierarchy. The symmetric Bernoulli distribution is uniquely characterized by exact termination of the expansion, the uniform law provides an infinite but slowly growing hierarchy, and the Gaussian distribution produces the largest closure corrections among the finite-variance examples considered here. Heavy-tailed laws introduce a qualitatively different limitation, associated with the divergence of moments rather than with the underlying combinatorial structure. Thus, while the Gaussian distribution occupies a distinguished position within the classical cumulant hierarchy, the symmetric Bernoulli distribution plays the corresponding role within the Boolean hierarchy. As will be shown below, this distinction is not merely combinatorial: it translates directly into the structure of the reduced dynamics, the exact closure criterion, and the size of truncation errors.
\section{The memory kernel and its exact resummation\label{sec:kernel}}

\subsection{Series expansion and operator resummation}

Proposition~\ref{prop:main} reduces the statistical content of the memory-kernel expansion to the Boolean cumulants of the jump distribution. Substituting Eq.~\eqref{eq:n_G=bphi} into Eq.~\eqref{GCum_x_L} and exploiting the inheritance property~\eqref{G-cumulant_Omega} yields a remarkable simplification: all stochastic information enters solely through the Boolean cumulant hierarchy $\{b_n\}$, while the remaining structure is determined by the deterministic dynamics.

For a zero-mean jump distribution ($\overline\xi=0$), the memory kernel becomes
\begin{align}
G\big(-\I\epsilon\tilde{\mathcal L}_I(\cdot)\pv t,u\big)
&=\epsilon^2b_2\,e^{-(t-u)/\tau}\,\tilde{\mathcal L}_I(t)\tilde{\mathcal L}_I(u)
\nonumber\\
&\;+e^{-(t-u)/\tau}\tilde{\mathcal L}_I(t)
\left[\sum_{n\ge3}\epsilon^nb_n\!\!\int_u^t\!\!\mathrm du_{n-1}\cdots
\!\!\int_u^{u_3}\!\!\mathrm du_2\,
\tilde{\mathcal L}_I(u_{n-1})\cdots\tilde{\mathcal L}_I(u_2)\right]
\tilde{\mathcal L}_I(u).
\label{eq:kernel_boolean}
\end{align}

At this point, the infinite hierarchy of multi-time correlations has been reduced to the 
sequence $\{b_n\}$ characterizing the jump distribution $p(\xi)$, while the operators 
and the global exponential envelope are determined entirely by the unperturbed dynamics.
The remaining task is therefore purely operational: resum the corresponding operator series.

Transforming to the Schr\"odinger picture via Eq.~\eqref{G_} (i.e., getting rid of the interaction
representation), the $n$-th term of the series $\mathcal G := \sum_{n\ge 2} \mathcal G_n$ depends solely on the time difference $\Delta:=t-u$ and takes the form
\begin{equation}
\mathcal G_n(\Delta)=\epsilon^nb_n\,e^{-\Delta/\tau}\!\!\!\int\limits_{\substack{d_i>0\\ \sum d_i=\Delta}}
\!\!\!\mathcal L_I\,e^{\mathcal L_ad_1}\,\mathcal L_I\,e^{\mathcal L_ad_2}
\cdots e^{\mathcal L_ad_{n-1}}\,\mathcal L_I\;\prod_i\mathrm dd_i .
\label{eq:kernel_chain}
\end{equation}
Here, as in Section~\ref{sec:renewal}, $d_i:=u_{i+1}-u_i$ are the lags
between consecutive insertion times, with $u_1=u$ and $u_n=t$.
The key simplification again stems from memorylessness. Since \[ \phi(\Delta)=e^{-\Delta/\tau} \]
depends only on the total span $\Delta=\sum_i d_i$, the survival factor factorizes as \[
\phi(\Delta)=\prod_i \phi(d_i). \] Equation~\eqref{eq:kernel_chain} is therefore an $(n-1)$-fold
convolution, and its Laplace transform reduces to a product of elementary resolvents,
\begin{equation}
\int_0^\infty\!\mathrm dd\;e^{-sd}e^{-d/\tau}e^{\mathcal L_ad}
=\Big(s+\tfrac1\tau-\mathcal L_a\Big)^{-1}=:P_a^{-1}(s),
\end{equation}
yielding the compact operational expression
\begin{equation}
\hat{\mathcal G}_n(s)=\epsilon^nb_n\,\mathcal L_I\big(P_a^{-1}\mathcal L_I\big)^{n-1}
=b_n\,P_a\mathcal X^n,
\qquad \mathcal X:=\epsilon P_a^{-1}(s)\,\mathcal L_I .
\label{eq:Kn_resolvent}
\end{equation}

Summing over $n \ge 2$ with $b_1 = 0$ evaluates the memory kernel in closed form:
\begin{empheq}[box=\fbox]{align}
\hat{\mathcal G}(s):=\sum_{n\ge2}\hat{\mathcal G}_n(s)=P_a(s)\,\eta(\mathcal X)
=\Big(s+\tfrac1\tau-\mathcal L_a\Big)\,
\eta\!\Big(\epsilon\big(s+\tfrac1\tau-\mathcal L_a\big)^{-1}\mathcal L_I\Big).
\label{eq:kernel_resummed}
\end{empheq}

Equation~\eqref{eq:kernel_resummed} constitutes the central dynamical result of the Boolean-cumulant construction. The entire dependence on the jump distribution is encoded in the scalar Boolean generating function $\eta$, whereas the dynamical information is contained in the resolvent operator $\mathcal X$, with non-commutativity between $\mathcal L_a$ and $\mathcal L_I$ encoded in the operator ordering within $\eta$. The reduced dynamics is therefore governed by the same object that generates the Boolean cumulants of the jump statistics.

Equation~\eqref{eq:kernel_resummed} admits a natural interpretation as an operator-valued Dyson resummation where $\eta(\mathcal X)$ serves as the self-energy operator.

\paragraph{Context and preceding formulations.}
Averaged resolvent techniques for linear systems driven by Kubo--Anderson or kangaroo processes were introduced by Brissaud and Frisch~\cite{BrissaudFrisch1974}, who identified the Kubo number as the effective perturbation parameter. Subsequently, Yoon, Deutch, and Freed~\cite{ydfJCP62} used TTO to formulate resolvent expansions for memory kernels, \textit{implicitly} employing M\"obius inversion on interval partitions $\mathcal I(n)$ [Eq.~(3.10) in Ref.~\cite{ydfJCP62}].
The present derivation reveals the algebraic structure underlying these earlier formulations. Total time-ordered cumulants coincide with Boolean cumulants and factorize as $b_n[p(\xi)]\,\phi(\mathrm{span})$. As a consequence, the entire hierarchy can be resummed into the Boolean generator $\eta(\mathcal X)$. This identification is not merely formal. It leads directly to an explicit criterion for hierarchy termination, singles out symmetric Bernoulli noise as the unique exactly closing jump process, and provides quantitative estimates for the error associated with finite-order closures.

\subsection{Validation checks}
The resummed kernel \eqref{eq:kernel_resummed} admits several independent consistency checks. In particular, it can be recovered both from exact Markovian embeddings of the renewal process and from limiting cases whose solutions are already known. These comparisons provide non-trivial validation of the Boolean-kernel construction.
\paragraph{Joint Markov elimination.}
The first check compares Eq.~\eqref{eq:kernel_resummed} with the exact Markovian representation of the renewal process. For exponential waiting times, the composite pair $(x,\xi)$ forms a Markov process governed by
\begin{equation}
\partial_tP(x,\xi,t)=\mathcal L_\xi P
+\frac1\tau\Big[p(\xi)\!\int\!P(x,\xi',t)\,\mathrm d\xi'-P\Big],
\qquad \mathcal L_\xi=\mathcal L_a+\epsilon\xi\mathcal L_I .
\end{equation}
Integrating out the noise variable $\xi$ in Laplace space yields the exact reduced density $\hat P(s) = \big[R(s)^{-1} - \tau^{-1}\big]^{-1} P(x\pv 0)$, where $R(s) := \big\langle(s + \tau^{-1} - \mathcal L_\xi)^{-1}\big\rangle_\xi$ is the averaged frozen resolvent. The total generator is $\hat{\mathcal G}_{\rm tot}(s) = (s + \tau^{-1}) - R(s)^{-1}$. 

Subtracting the deterministic drift contribution $\mathcal L_a$ recovers Eq.~\eqref{eq:kernel_resummed} exactly, since $R = [1 + M_o(\mathcal X)]P_a^{-1}$ and $(1 + M_o)^{-1} = 1 - \eta$ together imply $\hat{\mathcal G}_{\rm tot} - \mathcal L_a = P_a\,\eta(\mathcal X)$. The Boolean-kernel representation is therefore fully consistent with the exact Markovian embedding of the renewal process.

\paragraph{Free L\'evy walk.}
The second check considers a limiting case whose exact solution is already known and widely used in continuous-time random-walk theory. In the absence of drift ($\mathcal L_a = 0$), the distinction between interaction and Schr\"odinger pictures vanishes, and the resolvent operator becomes scalar in Fourier space ($\mathcal L_I = \partial_x \to \I k$). Using $1 + M_o(w) = \langle(1 - \xi w)^{-1}\rangle$, Eq.~\eqref{eq:kernel_resummed} reduces to
\begin{equation}
\hat P(k,s)=\frac{\tau A}{\tau-A},
\qquad A:=\Big\langle\frac{1}{s+\tau^{-1}-\I k\xi}\Big\rangle ,
\end{equation}
which is the Montroll--Weiss formula $\langle\hat\Psi\rangle / (1 - \langle\hat\psi\rangle)$ for continuous-time random walks with exponential waiting times and arbitrary velocity distributions~\cite{zdkRMP87}. For symmetric binary jumps ($\xi = \pm a$), this recovers the telegrapher's equation~\cite{bCSF159}. 
Structurally, the Montroll--Weiss denominator is precisely the Boolean $\eta$-transform of the velocity distribution. Thus, the classical Montroll--Weiss theory appears as a particular realization of the general Boolean-kernel framework developed here.

\subsection{Alternative derivation and non-exponential generalization\label{sec:second_derivation}}

The resummed kernel admits a second, completely independent derivation based on the frozen-noise representation of Section~\ref{sec:frozen}. This derivation is important for two reasons. First, it reveals the origin of the resolvent structure directly at the level of the renewal process. Second, it extends naturally to arbitrary waiting-time distributions, whereas the Boolean identification of Proposition~\ref{prop:main} relies on memorylessness. The two approaches therefore play complementary roles. The frozen-noise route establishes the exact kernel and its generalization beyond Poissonian renewal statistics, while the cumulant route identifies the algebraic structure of its coefficients and yields explicit closure and error criteria.

Taking the Laplace transform of the renewal series \eqref{eq:frozen_series} yields the geometric representation $\hat P(s) = [1 - \hat\psi_{\mathcal E}(s)]^{-1} \hat\Psi_{\mathcal E}(s) P(x\pv 0)$, where
\begin{equation}
\hat\psi_{\mathcal E}(s):=\int_0^\infty\!\!\mathrm d\theta\;e^{-s\theta}\,\psi(\theta)\,
\mathcal E(\theta),
\qquad
\hat\Psi_{\mathcal E}(s):=\int_0^\infty\!\!\mathrm d\theta\;e^{-s\theta}\,\Psi(\theta)\,
\mathcal E(\theta),
\end{equation}
and $\mathcal E(\theta) = \langle e^{\mathcal L_\xi \theta}\rangle_\xi$ is the averaged frozen propagator [Eq.~\eqref{eq:frozen_propagator}]. Comparing this with the Laplace-transformed equation of motion $s\hat P - P_0 = \hat{\mathcal G}_{\rm tot}\hat P$ gives the total generator for an \textit{arbitrary waiting-time density} $\psi(\theta)$:
\begin{equation}
\hat{\mathcal G}_{\rm tot}(s)=s-\frac{1-\hat\psi_{\mathcal E}(s)}
{\hat\Psi_{\mathcal E}(s)} .
\label{eq:kernel_general_WT}
\end{equation}
Equation~\eqref{eq:kernel_general_WT} is the most general kernel expression obtained in the present work. Unlike Eq.~\eqref{eq:kernel_resummed}, it does not rely on exponential waiting times and therefore remains valid whenever the renewal process is characterized by a single waiting-time density $\psi(\theta)$.
The memory kernel corresponds to the non-drift contribution $\hat{\mathcal G} = \hat{\mathcal G}_{\rm tot} - \mathcal L_a$.

Equation~\eqref{eq:kernel_general_WT} applies whenever all waiting times are drawn from a single distribution $\psi(\theta)$, rendering the renewal train a pure convolution.  For a general equilibrated process the
first waiting time is drawn from $\psi_{\rm eq}(\theta)=\Psi(\theta)/\langle
\theta\rangle$ and the chain acquires a distinct first factor, but the kernel
remains a function of $t-u$ alone. What breaks this structure is aging. In that case, the distribution of the first waiting time depends on the elapsed age of the process and therefore on the initial time itself. The renewal chain loses its convolution structure and the kernel becomes a genuine two-time object $G(t,u)$ rather than a function of the time difference alone.

For exponential waiting times ($\psi(\theta) = e^{-\theta/\tau}/\tau$ and $\Psi(\theta) = e^{-\theta/\tau}$), the relation $\hat\psi_{\mathcal E} = \hat\Psi_{\mathcal E}/\tau$ holds, and the integral reduces to the averaged frozen resolvent
\begin{equation}
\hat\Psi_{\mathcal E}(s)
=\int_0^\infty\!\!\mathrm d\theta\;e^{-(s+1/\tau)\theta}\big\langle e^{\mathcal L_\xi \theta}\big\rangle_\xi
=\Big\langle\big(s+\tfrac1\tau-\mathcal L_\xi\big)^{-1}\Big\rangle_\xi = R(s).
\end{equation}
Equation~\eqref{eq:kernel_general_WT} then simplifies to
\begin{equation}
\hat{\mathcal G}_{\rm tot}(s)=\Big(s+\tfrac1\tau\Big)-R(s)^{-1},
\end{equation}
in exact agreement with Eq.~\eqref{eq:kernel_resummed}.

The two derivations therefore converge to the same kernel from complementary directions. The frozen-noise representation explains the origin of the resolvent structure, whereas the cumulant construction reveals why the resulting coefficients are governed by the Boolean cumulants of the jump distribution. In
Eq.~\eqref{eq:kernel_general_WT} the jump statistics enters only through
$\mathcal E(\theta)=\langle e^{\mathcal L_\xi \theta}\rangle_\xi$, the \emph{moment}
generating function evaluated on the frozen generator: the moments are the
native ingredient of the renewal chain, and the Boolean cumulants appear only
after the renewal structure has been resummed, which requires the exponential
form of $\psi$.

In this sense, Eq.~\eqref{eq:kernel_general_WT} shows that the Boolean construction derived for Poissonian renewal noise is embedded within a broader renewal framework. Exponential waiting times do not merely simplify the kernel: they reveal the specific algebraic structure that makes the Boolean hierarchy emerge.

Since the waiting-time density enters Eq.~\eqref{eq:kernel_general_WT} only
through $\hat\psi_{\mathcal E}$ and $\hat\Psi_{\mathcal E}$, the small-$s$
behaviour of those transforms determines the structure of the kernel directly:
for $\psi(\theta)\sim(T/\theta)^\mu$ with $1<\mu<2$ the numerator
$1-\hat\psi_{\mathcal E}$ vanishes as $s^{\mu-1}$ rather than as $s$, and the
master equation acquires a fractional time derivative of order $\mu-1$. This is
the familiar structure of subdiffusive continuous-time random walks, obtained
here for arbitrary drift and state-dependent coupling; a proper treatment,
including the aging that accompanies that regime, is deferred to separate work.

\section{Exact closure and finite-order criteria\label{sec:when_exact}}

Since each power of $\mathcal X$ in $\eta(\mathcal X)$ carries one factor
$\mathcal L_I=\partial_xI(x)$, hence one power of $\epsilon$, finite-order closure of the memory kernel is equivalent to finite-order termination of the Boolean generating function $\eta$.

Thus, the memory kernel terminates at finite order in the coupling if and only if $\eta$ is a polynomial.

Polynomial termination occurs exclusively for symmetric Bernoulli noise, for which $b_n = 0$ for $n \ge 3$ [Eq.~\eqref{eq:bernoulli_eta}].\footnote{The converse follows from the analytic structure of $\eta$ (Appendix~\ref{sec:analytic_zeta}). If $\eta$ is a polynomial of degree $d$, the reciprocal Cauchy transform $F(y)=y\,[1-\eta(1/y)]$ reads $F(y)=y-b_1-\sum_{k=2}^{d}b_k\,y^{1-k}$. For any probability measure with finite variance, $F$ admits the Nevanlinna representation $F(y)=y-\overline{\xi}+\int\mathrm d\sigma(t)/(t-y)$, with $\sigma$ a finite positive measure~\cite{SpeicherWoroudi1997}. Hence $\int\mathrm d\sigma(t)/(t-y)=-\sum_{k=2}^{d}b_k\,y^{1-k}$ is analytic on $\mathbb C\setminus\{0\}$, and the Stieltjes inversion formula forces $\sigma=b_2\,\delta_0$, i.e.\ $b_k=0$ for $k\ge3$. With $b_1=\overline{\xi}=0$, $\eta(z)=b_2z^2$ is the Boolean generator of the symmetric Bernoulli law with $a^2=b_2$.} In this case, Eq.~\eqref{eq:kernel_resummed} collapses to
\begin{equation}
\hat{\mathcal G}(s)=\epsilon^2a^2\,\mathcal L_I
\Big(s+\tfrac1\tau-\mathcal L_a\Big)^{-1}\mathcal L_I ,
\label{eq:kernel_dichotomous}
\end{equation}
which transforms back to time domain as
\begin{equation}
\label{MEG_2}
\partial_t P(x\pv t)={\mathcal L}_aP(x\pv t)+\epsilon^2a^2\,\mathcal L_I
\int_0^t\mathrm du\; e^{-u/\tau} e^{{\mathcal L}_a u}
\, \mathcal L_I P(x\pv u).
\end{equation}
No higher powers of $\mathcal L_I$ survive. The infinite hierarchy generated by Eq.~\eqref{eq:kernel_resummed} collapses exactly to a single second-order term. The second-order closure is therefore exact at arbitrary coupling strength $\epsilon$, correlation time $\tau$, and state-dependent coupling $I(x)$.

Although Eq.~\eqref{eq:kernel_dichotomous} is second order in $\epsilon$, it is generally of infinite order in spatial derivatives $\partial_x$, because $P_a^{-1}$ involves arbitrary powers of $\mathcal L_a = \partial_x C(x)$. The operator remains strictly second order in spatial derivatives (an exact Fokker--Planck equation with memory) if and only if $\mathcal L_a$ preserves the polynomial degree of the operand. This occurs only for linear drift $C(x) = \gamma x$, where the adjoint of $\mathcal L_a$ acts on degree-$n$ terms as the scalar $-n\gamma$, giving $P_a^{-1} \to (s + \tau^{-1} + n\gamma)^{-1}$.
The preceding observations can be summarized in a precise uniqueness statement.
\begin{theorem}
\label{thm:uniqueness}
For a step renewal process with exponential waiting times, the memory kernel of the master equation for $P(x\pv t)$ terminates at second order in the coupling $\epsilon$ if and only if the jump distribution $p(\xi)$ is a symmetric Bernoulli distribution. If, in addition, the drift $C(x)$ is linear, the master equation reduces to an exact Fokker--Planck equation with memory.
\end{theorem}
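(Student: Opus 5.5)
The plan is to reduce the dynamical statement to a statement about the scalar Boolean generator $\eta$, using the exact resummation Eq.~\eqref{eq:kernel_resummed}, and then to settle the scalar question by an analytic (Nevanlinna) argument.

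\textbf{Step 1 (degree in $\epsilon$ equals degree of $\eta$).} By Eq.~\eqref{eq:Kn_resolvent}, the order-$n$ term of the kernel is $\hat{\mathcal G}_n(s)=\epsilon^n b_n\,\mathcal L_I(P_a^{-1}\mathcal L_I)^{n-1}$. Termination at second order in $\epsilon$ therefore means $b_n\,\mathcal L_I(P_a^{-1}\mathcal L_I)^{n-1}=0$ for all $n\ge3$. For the non-degenerate coupling the theorem intends, this chain operator is nonzero. To see this, I would act on a smooth test density with $\mathcal L_I=\partial_x I(x)$, $I\not\equiv0$, and track the highest derivative; note that $P_a^{-1}$ is invertible and does not lower it. Hence termination is equivalent to $b_n=0$ for all $n\ge3$, that is, $\eta(z)=b_1z+b_2z^2$ with $b_1=\overline\xi=0$. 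This gives $\eta(z)=b_2z^2$. The converse direction (Bernoulli $\Rightarrow$ termination) is immediate from Eq.~\eqref{eq:bernoulli_eta}, which yields Eq.~\eqref{eq:kernel_dichotomous}.

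\textbf{Step 2 (uniqueness of the law with $\eta=b_2z^2$).} This is the main point. Since $\eta$ determines $\langle(1-\xi z)^{-1}\rangle$ through Eq.~\eqref{eq:boolean_ogf}, it determines the Cauchy transform of $p$, and hence $p$ itself by Stieltjes inversion. I would compute directly:
\[
\Big\langle \frac{1}{y-\xi}\Big\rangle=\frac{1}{y}\,\frac{1}{1-\eta(1/y)}=\frac{y}{y^2-b_2}.
\]
This is the Cauchy transform of $\tfrac12\delta_{\sqrt{b_2}}+\tfrac12\delta_{-\sqrt{b_2}}$, so $p$ is symmetric Bernoulli with $a^2=b_2$.

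This explicit inversion is cleaner than the footnote's Nevanlinna route. The only subtlety is justifying that $\eta$, defined via the Cauchy-type average, uniquely fixes the measure when all moments exist. I would handle this with the standard fact that the Cauchy transform on $\mathbb C\setminus\mathbb R$ determines any probability measure; no moment-problem determinacy is needed. One should also check that the case $b_2=0$ gives the degenerate law $\delta_0$, which is excluded by unit variance.

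\textbf{Step 3 (linear drift).} With $C(x)=\gamma x$, I would work with the adjoint (moment) action, or equivalently in Fourier space. On monomials $x^n$, $\mathcal L_a^\dagger=-\gamma x\partial_x$ acts as $-n\gamma$, so $P_a^{-1}$ is diagonal on each degree. The kernel $\epsilon^2a^2\mathcal L_I P_a^{-1}\mathcal L_I$ then involves exactly two $x$-derivatives, with coefficients given by the scalar resolvents $(s+\tau^{-1}+n\gamma)^{-1}$.

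For this to be a genuine Fokker--Planck operator with memory, I need $P_a^{-1}$ to act on $\partial_x(I P)$ without generating higher derivatives. This is cleanest in Fourier variables, where $\mathcal L_a=-\gamma k\partial_k$ is a first-order dilation generator. Then $e^{\mathcal L_a u}$ is a pure rescaling $x\mapsto e^{\gamma u}x$ (up to a Jacobian), and it conjugates $\partial_x$ to $e^{\gamma u}\partial_x$. Therefore $e^{\mathcal L_a u}\mathcal L_I$ in Eq.~\eqref{MEG_2} is a first-order operator with rescaled coefficients, and the whole memory term is second order in $\partial_x$.

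Step 3 is where I expect the care to be needed. The claim is cleanest for constant or linear $I$; for general $I(x)$ the conjugated coefficient $I(e^{-\gamma u}x)$ is still a function, so the order stays two. I would state it in that form.
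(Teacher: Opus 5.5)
Steps 1 and 2 are correct in substance. Step 1 is the paper's own reduction (termination $\iff b_n=0$ for $n\ge3$, via Eq.~\eqref{eq:Kn_resolvent}). The paper simply assumes that the chains $\mathcal L_I(P_a^{-1}\mathcal L_I)^{n-1}$ do not vanish. Your ``highest derivative'' justification for this is loose, because $P_a^{-1}$ is a resolvent, not a differential operator. A clean version uses the short-lag limit of Eq.~\eqref{eq:kernel_chain}, $\mathcal G_n(\Delta)\simeq\epsilon^nb_n\,\Delta^{n-2}\mathcal L_I^n/(n-2)!$ as $\Delta\to0^+$, together with $\mathcal L_I^n\neq0$ whenever $I\not\equiv0$.

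Step 2 takes a different route from the paper. The paper's footnote uses the Nevanlinna representation of $F(y)=y[1-\eta(1/y)]$ to show that \emph{any} polynomial $\eta$ must reduce to $b_2z^2$. You instead use the theorem's hypothesis (degree at most two) directly and invert the Cauchy transform, $\langle(y-\xi)^{-1}\rangle=y/(y^2-b_2)$.
\begin{itemize}
\item Your route is more elementary, and it makes explicit the identification of the law that the paper leaves implicit.
\item The paper's route proves more: no finite-order termination other than second order is possible. This is what the text before the theorem asserts, and what the Pad\'e remark later relies on.
\end{itemize}
You do, however, misplace the subtlety. Step 1 gives you only vanishing Taylor coefficients. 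Passing from these to the identity $\eta(w)=b_2w^2$ of analytic functions, which you feed into Stieltjes inversion, is exactly a determinacy statement. It is harmless here, since the moments $\overline{\xi^{2k}}=b_2^k$ force $\operatorname{supp}p\subseteq[-\sqrt{b_2},\sqrt{b_2}]$. It can also be bypassed entirely: with $\overline\xi=0$, $b_4=\overline{\xi^4}-(\overline{\xi^2})^2$ is the variance of $\xi^2$. So $b_4=0$ alone gives $\xi^2=b_2$ almost surely, i.e.\ $\xi=\pm\sqrt{b_2}$ with equal weights.

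The one real flaw is in Step 3. Its first half, where $P_a^{-1}$ acts as $(s+\tau^{-1}+n\gamma)^{-1}$ on the degree-$n$ sector because $\mathcal L_a$ preserves polynomial degree, is exactly the paper's argument. It is also the only place where you use linearity of $C$. The Fourier/dilation argument you then call the cleanest does not prove the claim, for two reasons.
\begin{itemize}
\item It conflates the product $e^{\mathcal L_au}\mathcal L_I$ with the conjugate $e^{\mathcal L_au}\mathcal L_Ie^{-\mathcal L_au}$. The product equals $[e^{\mathcal L_au}\mathcal L_Ie^{-\mathcal L_au}]\,e^{\mathcal L_au}$, and the leftover dilation acting on the past density is not a differential operator of any finite order. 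For $I=1$, for instance, the Fourier-space memory integrand is proportional to $k^2\hat P(e^{-\gamma u}k,\cdot)$.
\item The conjugation step never uses linearity. For any smooth drift, $e^{\mathcal L_au}$ is a push-forward by a diffeomorphism and maps $\partial_xI(x)$ to another first-order operator $\partial_x\tilde I_u(x)$. This is the Lie-derivative fact the paper invokes for arbitrary $C$ in Section~\ref{sec:beyond_cam}.
\end{itemize}
An argument that works for every drift cannot establish a property the theorem attributes specifically to linear drift. Drop it, and rest the second clause on degree preservation alone.
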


That dichotomous Markovian noise admits exact treatment is well known and
simple; for the step-renewal setting it is established, for instance, in
Ref.~\cite{bblmCSF202}. In the free case, $\mathcal L_a=0$, Eq.~\eqref{MEG_2}
reduces to the familiar telegrapher's equation, and we shall accordingly refer
to it as the \emph{extended telegraph equation}. 
What the Boolean identification adds is both a reason and a uniqueness statement. Exact solvability
 is not a peculiar computational feature of two-state noise, nor a special property of the
  telegraph process itself. Rather, it reflects a deeper algebraic fact: among all finite-variance
jump distributions, the symmetric Bernoulli law is the unique distribution whose Boolean hierarchy
terminates exactly at second order. In this precise sense, the Bernoulli distribution plays,
within the Boolean hierarchy, the same role that the Gaussian distribution plays within the
classical cumulant hierarchy.

The contrast with classical Gaussian reductions is instructive. For classical Gaussian processes, second-order closure of the memory hierarchy is generally obtained only after a local linearization of the dynamics, since Gaussianity is preserved exclusively under linear transformations. By contrast, Poissonian dichotomous noise achieves exact second-order closure for arbitrary nonlinear drift $C(x)$ and arbitrary state-dependent coupling $I(x)$. The underlying process is not Gaussian in the classical sense, but $G$-Gaussian in the sense that all higher $G$-cumulants vanish identically. For every other jump distribution, the Boolean generating function $\eta(z)$ contains infinitely many non-vanishing terms. Consequently, the memory kernel retains powers of $\mathcal L_I$ of arbitrary order and the second-order closure becomes an approximation whose accuracy must be quantified. This is the problem addressed in the following sections.

\section{Heavy-tailed distributions: Non-perturbative applicability\label{sec:heavy}}

An important consequence of the resummed formulation is that its validity does not depend on the existence of a complete moment hierarchy. The operator-valued kernel in Eq.~\eqref{eq:kernel_resummed} remains well defined even for jump distributions for which conventional cumulant expansions diverge. The reason is that $\eta$ is not defined by its moments. From
$1+M_o(w)=\langle(1-\xi w)^{-1}\rangle$,
\begin{equation}
\eta(w)=1-\Big\langle\frac{1}{1-\xi w}\Big\rangle^{-1},
\label{eq:eta_analytic}
\end{equation}
that is, $\eta$ is built from the Cauchy transform of $p(\xi)$, equivalently
$\eta(w)=1-w\,F(1/w)$ with $F$ the reciprocal Cauchy transform (Appendix~\ref{sec:analytic_zeta}). The expectation in Eq.~\eqref{eq:eta_analytic} converges for any probability measure $p(\xi)$ because the integrand decays as $|\xi|^{-1}$. Taylor-series coefficients, by contrast, require the existence of finite moments and therefore of the corresponding Boolean cumulants. Similarly, the operator resolvent norm $\|(s + \tau^{-1} - \mathcal L_a - \epsilon\xi\mathcal L_I)^{-1}\|$ remains bounded as $O(|\xi|^{-1})$.

For example, if $p(\xi) \propto (1 + \xi^6)^{-1}$, the sixth moment diverges and $b_6$ is undefined, causing the cumulant series to fail beyond fourth order. However, the failure occurs at the level of the perturbative hierarchy, not at the level of the exact kernel. In fact, $\eta(\I y)$ remains smooth and bounded for all real $y$. This non-perturbative stability reflects the fact that every probability measure is infinitely divisible in Boolean probability theory~\cite{SpeicherWoroudi1997}, as $\eta(z)/n$ defines a valid measure for all $n \ge 1$ without requiring L\'evy--Khintchine integrability conditions. A particularly striking example is provided by the Cauchy distribution. Although none of its ordinary moments exist, one finds, for real $y$, $1 + M_o(\I y) = (1 + |y|)^{-1}$, yielding $\eta(\I y) = -|y|$ exactly, despite the absence of defined moments.

The practical implication is that the range of applicability of Eq.~\eqref{eq:kernel_resummed}
is considerably broader than that of the Boolean-cumulant expansion itself. Consequently, superdiffusive
L\'evy walks driven by heavy-tailed velocity distributions fall directly within the scope of
Eq.~\eqref{eq:kernel_resummed}. Evaluating $\eta$ on
the \emph{operator} $\mathcal X$ requires $\eta$ to be analytic on a domain containing
the spectrum of $\mathcal X$: in the presence of a drift  the condition must be checked case by case. 

For unbounded jump distributions with multiplicative coupling, $\eta(u)$ may acquire an imaginary
component, signalling a genuine dynamical instability in which state-dependent forcing overwhelms
the restoring drift. Thus, the limitations of the formalism arise not from the divergence of
moments, but from the dynamical properties of the underlying stochastic system.

\section{Role of memorylessness and non-Markovian deformations\label{sec:nonmarkov}}

The interval partition support in Eq.~\eqref{eq:renewal_general} depends solely on $\xi[t]$ remaining constant within each renewal epoch, holding for any waiting-time density $\psi(\theta)$. However, the reduction of block weights to Boolean cumulants requires an additional ingredient, namely the multiplicative factorization of gap probabilities,
\begin{equation}
\prod_i\phi(d_i)=\phi\Big(\sum_i d_i\Big),
\label{eq:memoryless}
\end{equation}
which is satisfied exclusively by exponential waiting times.
Equation~\eqref{eq:memoryless} is the key property responsible for all subsequent simplifications. It is the same multiplicative structure that collapses the renewal correlations onto the total span, produces the Boolean identification of Proposition~\ref{prop:main}, and ultimately allows the kernel to be resummed into the operator-valued transform $\eta(\mathcal X)$.

When memorylessness fails, the interval-partition support survives, but gap
probabilities retain non-multiplicative correlations across blocks. At fourth
order, and for a symmetric jump law, the $G$-cumulant deforms to
\begin{equation}
\doubleangle{\xi(u_1)\xi(u_2)\xi(u_3)\xi(u_4)}^{(G)}
=b_4\,P_1(u_1,u_4)
+b_2^2\Big[A_{\{12\}\{34\}}-P_1(u_1,u_2)\,P_1(u_3,u_4)\Big],
\label{eq:renewal_deformation}
\end{equation}
where $P_1(u,u'):=\Pr[\text{no renewal in }[u,u']]$ is the probability that
$u$ and $u'$ lie in the same renewal epoch, which reduces to $\phi(u'-u)$ in the
Poissonian case, and
\[A_{\{12\}\{34\}}:=\Pr[\text{no renewal in }[u_1,u_2]\text{ and in }[u_3,u_4]].\]
Equation~\eqref{eq:renewal_deformation} shows explicitly how the Boolean
structure is deformed once the multiplicative property~\eqref{eq:memoryless} is
lost. The bracket is the covariance of the no-renewal events on the two
disjoint intervals $[u_1,u_2]$ and $[u_3,u_4]$. It vanishes identically for
Poissonian renewal, where Eq.~\eqref{eq:renewal_deformation} reduces to
Proposition~\ref{prop:main}, but not otherwise. It also vanishes at vanishing
time lags, where all no-renewal probabilities tend to one, so that the
$G$-cumulant reduces to $b_4$ for every waiting-time density, and for large
intermediate gaps ($d_2\to\infty$), where the renewal events on the two
intervals decorrelate and the Boolean form $b_4\,P_1(u_1,u_4)$ is recovered.

For power-law waiting times, single-block configurations dominate at long time
lags~\cite{bblmCSF202}. In Eq.~\eqref{eq:renewal_deformation} this means that
$A_{\{12\}\{34\}}$ is dominated by the single-epoch configuration,
$A_{\{12\}\{34\}}\simeq P_1(u_1,u_4)$, while the product
$P_1(u_1,u_2)P_1(u_3,u_4)$ becomes negligible, so that the $G$-cumulant
approaches $(b_4+b_2^2)\,P_1(u_1,u_4)=\overline{\xi^4}\,P_1(u_1,u_4)$. More
generally, one expects
$\doubleangle{\xi(u_1)\cdots\xi(u_n)}^{(G)}\simeq\overline{\xi^n}\,P_1(u_1,u_n)$:
the per-block coefficients become the ordinary moments rather than the Boolean
cumulants.

In this regime, the interval hierarchy effectively collapses to a single dominant block. The geometric chain whose resummation generates $\eta=M_o/(1+M_o)$ is therefore no longer active, and the Boolean hierarchy degenerates back to the ordinary moment hierarchy.
The asymptotic replacement \[ \eta \longrightarrow M_o \] is therefore not a new statistical structure, but a return to the native renewal description encoded in $\mathcal E(\theta)=\langle e^{\mathcal L_\xi \theta}\rangle_\xi$ of Section~\ref{sec:second_derivation}.

Beyond the coefficients, non-exponential waiting times change the problem
structurally. A power-law renewal process ages: for $\mu>2$ the dependence on
the initial time is a decaying transient, while for $1<\mu<2$ the mean waiting
time diverges, stationarity is never reached, and the dependence on $t_0$ never
disappears. The first epoch then carries a different waiting-time density, the
renewal chain \eqref{eq:frozen_series} is no longer a convolution, and the
memory kernel becomes a genuine two-time object. The explicit Boolean resummations and closure results derived in Sections~\ref{sec:kernel}--\ref{sec:when_exact} no longer survive in their present form. We
therefore restrict the present work to the memoryless case, where the process is
stationary from any origin and aging is absent.

\paragraph{Outlook: non-exponential waiting times.}
The structure displayed by Eq.~\eqref{eq:renewal_deformation} is not specific
to fourth order. For an arbitrary waiting-time density, the $G$-cumulants still
admit an exact expansion over interval partitions in which the coefficients
depend only on the Boolean cumulants of the jump law, whereas the weights of
the partitions depend only on the renewal process. The Poissonian case of
Proposition~\ref{prop:main} is the distinguished point at which all weights
except that of the single block vanish identically. This separation makes it
possible to define indices that measure the distance of a given renewal process
from memorylessness independently of $p(\xi)$, and to characterize how the
per-block coefficients interpolate between the Boolean cumulants $b_n$ and the
ordinary moments $\overline{\xi^n}$ as the time lags grow. Preliminary
numerical evidence indicates that the interpolation towards $\overline{\xi^n}$
is a genuine signature of long memory (power-law waiting times), and not merely
of a departure from the exponential law: waiting-time densities with an
exponential tail or with an increasing hazard rate behave differently. These
developments, together with the aging regime $1<\mu<2$ and the resulting
two-time memory kernels, will be presented in a forthcoming work.
The memoryless case should therefore be viewed not as a generic renewal process, but as the distinguished point at which the interval-partition structure becomes exactly Boolean.

\section{Scope and model-dependent assumptions\label{sec:scope}}

At this stage, it is useful to distinguish between results that depend solely on the renewal structure of the forcing and those that require additional assumptions on the underlying dynamics. The Boolean identification, the resummed memory kernel, and the exact closure criterion are consequences of the stochastic architecture of the renewal process itself. Subsequent quantitative results, including error estimates, stationary densities, and support bounds, require increasingly specific assumptions on the drift and coupling operators.

Constructing quantitative estimates or closed-form densities requires additional physical assumptions. Table~\ref{tab:scope_summary} categorizes these results according to their underlying structural requirements.

\begin{table*}[h!]
\centering
\renewcommand{\arraystretch}{1.25}
\begin{tabular}{>{\raggedright\arraybackslash}p{0.40\textwidth}>{\raggedright\arraybackslash}p{0.33\textwidth}>{\raggedright\arraybackslash}p{0.19\textwidth}}
\hline
Result & Underlying requirement & Reference \\
\hline
\multicolumn{3}{c}{\textbf{Renewal-driven structural results}}\\
\hline
$\doubleangle{\xi(u_1)\cdots\xi(u_n)}^{(G)}=b_n[p(\xi)]\,\phi(\mathrm{span})$
& Exponential waiting times
& Section~\ref{sec:renewal}
\\
$\hat{\mathcal G}=P_a\,\eta(\mathcal X)$
& Exponential waiting times
& Section~\ref{sec:kernel}
\\
$\hat{\mathcal G}_{\rm tot}$ for arbitrary $\psi$
& Pure renewal structure
& Section~\ref{sec:second_derivation}
\\
Termination iff $p(\xi)$ is symmetric Bernoulli
& Exponential waiting times
& Theorem~\ref{thm:uniqueness}
\\
  Correlation preservation by $N$-atom surrogates (orders $\le 2N-1$)
  & Pure renewal structure
  & Proposition~\ref{prop:corr_preservation}
  \\
Resummed kernel for heavy-tailed $p(\xi)$
& Analyticity of $\eta(\mathcal X)$ on the spectrum of $\mathcal X$
& Section~\ref{sec:heavy}
\\
\hline
\multicolumn{3}{c}{\textbf{Dynamical reduction and closure estimates}}\\
\hline
Discarded kernel operator
$\mathcal D=(b_4/b_2)\mathcal X^2$
& Exponential waiting times
& Eq.~\eqref{eq:Delta}
\\
Operator norm estimate
$\|\mathcal D\|\simeq(b_4/b_2)\lambda^2$
& Isolated dominant relaxation rate
& Section~\ref{sec:scales}
\\
Rate parameter
$\lambda=\epsilon\Lambda\tau$
& Linear drift $C(x)$
& Section~\ref{sec:worked}
\\
\hline
\multicolumn{3}{c}{\textbf{Stationary-state and model-specific results}}\\
\hline
Compact support between frozen fixed points
& Ergodicity and flow stability
& Section~\ref{sec:stationary}
\\
Affine recursion and Beta invariant density
& Linear drift $C(x)$ and affine coupling $I(x)$
& Sections~\ref{sec:worked},~\ref{sec:stationary}
\\
Support boundary
$x^*_{\max}/\sigma$
& Linear drift and constant coupling
& Section~\ref{sec:fpe_test}
\\
Singularities generated by discrete atoms
& Discrete jump distribution $p(\xi)$
& Section~\ref{sec:stationary}
\\
\hline
\end{tabular}
\caption{
Hierarchy of results according to their structural assumptions.
The first block contains properties determined solely by the renewal
architecture of the forcing. The second block concerns quantitative closure
estimates and requires dynamical information about relaxation rates. The third
block contains stationary-state properties that depend on specific features of
the underlying drift, coupling, or jump distribution.
}
\label{tab:scope_summary}
\end{table*}

The strongest assumptions enter only when quantitative bounds are sought. In particular, estimates based on a single effective relaxation rate require the existence of a dominant spectral scale and therefore depend on the global connectivity of phase space.
If non-linear drift creates isolated basins that the noise perturbation cannot bridge, the system lacks a single dominant eigenvalue, and bounds must be evaluated locally on the operator level $\mathcal D = (b_4/b_2)\mathcal X^2$. 

Bounded jump distributions guarantee ergodic exploration of the interval between the extreme frozen fixed points whenever the maximal forcing $\epsilon\max|\xi|$ exceeds the relevant basin-separation threshold. In this regime, phase-space connectivity produces a unique dominant relaxation scale, allowing the operator-level bound $\mathcal D=(b_4/b_2)\mathcal X^2$ to be replaced by the effective scalar parameter $\lambda$ used in the quantitative closure theory developed below.

\section{Quantifying the Discarded Operator: Scale Parameters\label{sec:scales}}

Successive terms in the kernel expansion \eqref{eq:Kn_resolvent} differ by a constant operator factor. For symmetric jump distributions ($b_3=0$), the leading-order correction to the second-cumulant kernel reads
\begin{equation}
\hat K_4(s)=\epsilon^4b_4\,\mathcal L_IP_a^{-1}\mathcal L_IP_a^{-1}
\mathcal L_IP_a^{-1}\mathcal L_I
=\hat K_2(s)\cdot\frac{b_4}{b_2}\,\mathcal X^2 ,
\label{eq:K4}
\end{equation}
meaning that retaining only the second $G$-cumulant discards the relative operator contribution
\begin{equation}
\mathcal D=\frac{b_4}{b_2}\,\mathcal X^2 ,
\qquad \mathcal X=\epsilon\,P_a^{-1}(s)\,\mathcal L_I ,
\qquad P_a(s)=s+\tfrac1\tau-\mathcal L_a .
\label{eq:Delta}
\end{equation}
Equation~\eqref{eq:Delta} is exact and exhibits a complete separation between statistical
and dynamical contributions. The factor $b_4/b_2$ depends solely on the jump distribution,
whereas all dependence on the underlying dynamics is contained in the operator $\mathcal X$. 

Standard
second-order closures, including Mori--Zwanzig projections, second-order cumulant
expansions, and Bourret-type resummations, are usually justified by formal power counting in
the coupling strength $\epsilon$. Equation~\eqref{eq:Delta}, by contrast, identifies
explicitly the leading discarded operator. The approximation error is therefore
characterized not only by its perturbative order, but also by two independent quantities: a
purely statistical factor, $b_4/b_2$, and a purely dynamical factor, $\mathcal X^2$.

\subsection{Physical scale parameters\label{sec:phys_par}}

Evaluating $\|\mathcal D\|$ as a scalar estimate requires mapping the operator expressions onto physical scales. This mapping involves three parameters: the resolvent evaluation point $\Gamma$, the effective coupling strength $\Lambda$, and the deterministic relaxation rate $\gamma$.

\paragraph{Evaluation point ($\Gamma$).}
The resolvent $P_a^{-1}(s)$ must be evaluated at the spectral point corresponding to the physical quantity of interest. For exponentially decaying observables, this is the dominant pole $s=-\Gamma$, where $\Gamma$ is the relaxation rate. For stationary properties, it is $s=0$. 

Whenever the reduced density relaxes to equilibrium, $\Gamma$ corresponds to the leading non-zero eigenvalue of the generator. In monostable systems, $\Gamma$ scales with the local potential curvature at the minimum. In multistable systems with separated time scales, fast intrabasin relaxation versus slow interbasin Kramers transitions, the relevant scale for the memory kernel (whose temporal range is bounded by $\tau$) is the fast intrabasin rate. The kernel does not resolve interbasin dynamics directly.

\paragraph{Effective coupling ($\Lambda$).}
The operator $\mathcal L_I$ is represented by its projection onto the subspace of the chosen observable.
Formally, if $\Gamma$ is the decay rate of interest and
$\varphi_\Gamma$ the corresponding eigenfunction of the adjoint generator,
$\mathcal A_a\varphi_\Gamma=-\Gamma\varphi_\Gamma$, then
\begin{equation*}
\Lambda:=\frac{\langle\varphi_\Gamma|\,\mathcal A_I\,|\varphi_\Gamma\rangle}
{\langle\varphi_\Gamma|\varphi_\Gamma\rangle} ,
\end{equation*}
with $\mathcal A_I$ the adjoint of $\mathcal L_I$. Consequently, $\Lambda$ depends on both the coupling operator and the specific observable being tracked.

For example, in the LIMI/CAM model of Section~\ref{sec:affine_recursion_subsec}, if
 $\langle x\rangle$ is the quantity of interest, one has $\varphi_\Gamma=x$ and
$\mathcal A_Ix=1+\beta x$, whose projection on $x$ is $\beta x$: the additive
part is orthogonal to the sector and drops out, leaving $\Lambda=\beta$. Thus, in this case
the first moment is insensitive to the additive component of the coupling.

\paragraph{Deterministic drift rate ($\gamma$).}
The rate $\gamma$ characterizes the unperturbed linear drift $C(x)=\gamma x$. For non-linear drifts, an effective value $\gamma_{\rm eff}\simeq\big\langle C'(x)\big\rangle_{P_{\rm st}}$ (or $\langle|C'|\rangle$ in unstable regions) serves as a local estimate over the explored phase space
over a time window of duration
$\sim\tau$.

\paragraph{Effective expansion parameter ($\lambda$).}
Projecting $\mathcal X$ onto the spectral sector associated with the observable under consideration yields the scalar estimate
\begin{equation}
\|\mathcal D\|\simeq\frac{b_4}{b_2}\,\lambda^2 ,
\qquad
\lambda=\frac{\epsilon\Lambda}{\tau^{-1}-\Gamma-\mathcal L_a} .
\label{eq:lambda_eff}
\end{equation}
In the absence of (or for a very slow) drift ($\mathcal L_a=0$), noise-induced relaxation gives $\Gamma=O(\epsilon^2)$, so $\lambda \to \epsilon\Lambda\tau$ recovers the classical Kubo number (the ratio of noise correlation time to system response time).

At first sight one might expect stronger deterministic relaxation to improve the accuracy of the second-order closure. Remarkably, this is not the case for relaxation rates. However, although deterministic drift shortens the kernel memory duration to $\tau/(1+\gamma\tau)$, it also shifts the resolvent pole by an equal amount. In Eq.~\eqref{eq:lambda_eff}, the drift operator $\mathcal L_a$ in the denominator is offset by the corresponding shift in $\Gamma$. Hence, the expansion parameter governing decay rates remains the Kubo number $\lambda\simeq\epsilon\Lambda\tau$:
  the error of the closure is \emph{rigorously independent} of $\gamma$ for a
linear drift, and expected to be approximately so, with $\gamma_{\rm eff}$ in
place of $\gamma$, for a nonlinear one.

Explicitly, concerning the behavior of $\langle x\rangle$, the adjoint generator acts as
$\mathcal A_ax=-\gamma x$, so $\mathcal L_a\to-\gamma$ there, while the decay
rate of the perturbed system may be written $\Gamma=\gamma-\mu$, with $\mu>0$
the reduction of the relaxation rate caused by the noise. The denominator of
Eq.~\eqref{eq:lambda_eff} is therefore
\begin{equation*}
\tau^{-1}-\Gamma-\mathcal L_a=\tau^{-1}-(\gamma-\mu)+\gamma=\tau^{-1}+\mu ,
\end{equation*}
in which $\gamma$ has cancelled identically: what the drift does is to shift the
pole by exactly the amount by which it shortens the kernel. That $\mu$ is itself independent
of $\gamma$  for a linear drift is shown in Section~\ref{sec:worked}, where $\mu$ is computed in
closed form.

We observe that in practice $\lambda$ need not be computed from Eq.~\eqref{eq:lambda_eff} in
full. For a relaxation rate the denominator is $\tau^{-1}+\mu$ with
$\mu=O(\epsilon^2)$, so $\lambda\simeq\epsilon\Lambda\tau$; for a stationary
quantity it is $\tau^{-1}+\gamma_{\rm eff}$, which can only reduce it. The Kubo
number
\begin{equation}
\lambda\lesssim\epsilon\Lambda\tau
\end{equation}
is therefore a bound valid for every observable considered here, and the
condition $\epsilon\Lambda\tau\ll1$ is the practical criterion for truncating at
second order. 

It is worth noting that for a stationary moment there is no decay, so  $\Gamma=0$ and the resolvent is
evaluated at $s=0$ instead of at a pole; the denominator is then
$\tau^{-1}+\gamma$, the drift has nothing to cancel against, and
$\lambda\simeq\epsilon\Lambda\tau/(1+\gamma\tau)$ for a linear drift. The two
cases are the same expression evaluated at two different points, and they behave
oppositely: the error on a rate is independent of $\gamma$, that on a stationary
moment falls as $(1+\gamma\tau)^{-2}$. Both are verified in
Section~\ref{sec:worked}.

This distinction is essential. Equation~\eqref{eq:Delta} controls the error of the memory-kernel description and of the dynamical observables derived from it; \textit{it does not uniformly bound non-perturbative features of the stationary density, such as support boundaries or local singular points, which depend on the full jump distribution} $p(\xi)$. These are therefore most naturally analyzed through the frozen-noise representation.
\subsection{Reduction to the Fokker--Planck Description\label{sec:fpe_emergence}}

The question raised in the Introduction can now be answered quantitatively. Second-order Fokker--Planck descriptions often perform well far outside the formal regime in which they are usually derived. The exact kernel developed above explains both why this happens and where the approximation ultimately breaks down.
Second-order
Zwanzig projection and the local-linear approximation are formally justified
only under very weak coupling or extreme time-scale separation, conditions that
are rarely met; yet they are used, and they appear to work. The exact results
above explain why, and, more usefully, say exactly where they stop working.

The master equation with a second-cumulant kernel [Eq.~\eqref{MEG_2}] reduces to a Fokker--Planck equation through two distinct approximations, both introducing errors of order $\lambda^2$:

\begin{enumerate} 
\item \emph{Kernel localization.} 
Approximating  $e^{\mathcal L_a u}P(x, t-u) \approx P(x, t) + O(\epsilon^2 u)$ inside the convolution
removes temporal memory and reduces the non-local kernel to a local operator. The induced relative
error is $O(\lambda^2)$. 
\item \emph{Boolean-hierarchy truncation.} Retaining only the second
Boolean cumulant neglects all higher orders and produces a relative error of $(b_4/b_2)\lambda^2$.
\end{enumerate}

Combining both approximations yields the localized Fokker--Planck equation
\begin{equation}
\label{eq:FPE}
\partial_tP(x\pv t)=\partial_xC(x)P(x\pv t)
+\epsilon^2b_2\,\mathcal L_I\!\!\int_0^\infty\!\!\mathrm d\Delta\;e^{-\Delta/\tau}\,
\tilde{\mathcal L}_I(-\Delta)\,P(x\pv t) .
\end{equation}
The meaning of Eq.~\eqref{eq:FPE} is often misunderstood. Small $\lambda$ guarantees that the Fokker--Planck equation reproduces the memory kernel with relative accuracy $O(\lambda^2)$ and therefore accurately captures the relaxation rates and low-order moments governed by that kernel. It does \emph{not} guarantee that the stationary probability density itself is accurate.

When $\gamma\tau\ll1$, this simplifies to Eq.~\eqref{ME_secOrder_FPE} with diffusion coefficient $D=\epsilon^2b_2\tau$. The noise statistics enter $D$ exclusively through the second  cumulant $b_2$ (note that the second cumulant is the same in the Boolean or any other  cumulant approach, including the standard one).

The Fokker--Planck equation \eqref{eq:FPE} cannot tell a Gaussian jump law from a uniform or a dichotomous one. Once the reduction is performed, all three are represented only through the common coefficient $b_2$. This is a statement about the closure, not about the stochastic process itself. The exact dynamics retains information on the higher Boolean cumulants and on the support of the jump distribution, whereas the Fokker--Planck description does not. The practical consequence is that two distinct error mechanisms coexist. For rates and low-order moments the neglected information contributes only through $(b_4/b_2)\lambda^2$. For stationary densities and tail properties its impact is controlled instead by $\gamma\tau$, as discussed below.

\paragraph{Parameter distinction: $\lambda$ versus $\gamma\tau$.}
Two independent dimensionless parameters control two fundamentally different aspects of the approximation:
\begin{itemize}
    \item $\lambda \lesssim \epsilon\Lambda\tau$ governs weak-coupling accuracy for relaxation rates and low-order moments.
    \item $\gamma\tau$ governs the number of renewal epochs within a relaxation time ($N_{\rm eff} \sim 1/\gamma\tau$), determining bulk Gaussianity and tail behavior.
\end{itemize}

The distinction becomes particularly transparent in the simplest solvable case of additive noise and linear drift, i.e. for  $I(x)=1$ and  $C(x)=\gamma x$. In this case the ratio of the maximum support boundary $x^*_{\max}=\epsilon\max|\xi|/\gamma$ to the stationary standard deviation $\sigma$ is
\begin{empheq}[box=\fbox]{equation}
\frac{x^*_{\max}}{\sigma}=\max|\xi|\,\sqrt{\frac{1+\gamma\tau}{b_2\,\gamma\tau}} .
\label{eq:support_in_sigma}
\end{empheq}
This ratio is independent of the coupling and, for a given jump law, controlled by $\gamma\tau$
alone. As $\gamma\tau \to 0$, the exact compact support boundary is pushed into the far tails (e.g., about $12\sigma$ at $\gamma\tau=0.02$ for uniform jumps). Consequently, while the Fokker--Planck approximation accurately describes the bulk distribution when $\gamma\tau \ll 1$, it fails in the far tails for any $\gamma\tau$, as it assigns non-zero probability beyond $x^*_{\max}$.

\paragraph{Why it works, and where it fails silently.} This answers the question
raised in the Introduction. The classical justifications invoke extreme
time-scale separation, but what is actually required is only $\lambda\ll1$ for
the closure and $\gamma\tau\ll1$ for the density to look Gaussian; neither
demands that the noise be fast in an absolute sense, and for dichotomous jumps
the closure is exact at any $\lambda$ (Theorem~\ref{thm:uniqueness}). This is why
the Fokker--Planck route succeeds more often than its derivation would suggest.
Its failure is equally well localized, and it is silent. The support does not
move as $\gamma\tau$ decreases: $x^*_{\max}$ is fixed, and it is $\sigma$ that
contracts within it. Thus, the Fokker--Planck solution never reproduces it, at
any $\lambda$. The error is one of domain rather than accuracy, and it is in the
unsafe direction: extreme events
are systematically overestimated and  the theory predicts a finite first-passage time into regions that are in fact kinematically inaccessible. What $\gamma\tau\ll1$ buys is  that
this region is pushed far in the tail, i.e., beyond the scale one is likely to probe; since recurrence
times of extreme events are controlled by exactly that region, it is there that
the approximation fails without announcing itself.

\paragraph{Comparison with spike noise and Pawula's theorem.}
For spike renewal noise, the master equation is local in time, but the operational jump factor $\hat p\big(\mathrm i\partial_xI(x)\big)-1$ contains infinite spatial derivatives $\partial_x$. Truncating this series corresponds to truncating the Kramers--Moyal expansion in standard moments. By Marcinkiewicz's theorem, no non-degenerate characteristic function $\hat p$ can be a polynomial, meaning the Kramers--Moyal series cannot terminate at finite order (Pawula's theorem). In contrast, step renewal noise allows exact finite-order termination because its expansion is governed by Boolean cumulants, for which the symmetric Bernoulli distribution forms a valid finite generator.
The contrast is therefore structural rather than technical. Pawula's theorem prevents finite-order closure in the moment hierarchy associated with spike noise, whereas the Boolean hierarchy associated with step renewal noise admits a genuine finite generator. The exact closure of dichotomous renewal noise is a consequence of this algebraic distinction and reflects the unique role of the symmetric Bernoulli distribution as the Boolean analogue of the Gaussian law.

\section{Two Exact Descriptions: Kernel Hierarchies and Atom Hierarchies \label{sec:two_routes}}

The first main result we have obtained, i.e., the \emph{resummed master equation}, Eq.~\eqref{eq:kernel_resummed} is valuable  because
it yields the exact memory kernel as the Boolean generator $\eta$ of the jump distribution
evaluated on a resolvent. It  holds for arbitrary drift and
coupling, and, being a function of $P_a^{-1}(s)$, it is a spectral object. Thus,  this
result concerns spectral data: decay rates, located by the pole
condition on the relevant sector, and the low-order moments those rates control.
Its expansion in Boolean cumulants is what makes the accuracy of a truncation
quantifiable, through $\mathcal D=(b_4/b_2)\mathcal X^2$.

The second main result is the \emph{frozen-noise representation}, Eq.~\eqref{eq:frozen_series}:
the propagator factorizes exactly into a renewal chain of frozen-noise
propagators, because the noise is piecewise constant. It is equally exact, and
it is a pathwise object. What follows from it are the properties of the
stationary PDF, such as support, singular points, tail behaviour, which are
determined by $\max|\xi|$ and by the structure of $p(\xi)$, quantities no
cumulant of any order can see.

Therefore, the termination criterion of
Theorem~\ref{thm:uniqueness} is a statement concerning the first main result: for symmetric
Bernoulli jumps the cumulant series stops, and the second-cumulant closure is
exact. However, it is worth stressing that it should \textit{not} be read as saying that the reduced \emph{density} is then
correct for other jump laws, or that truncating brings one closer to it. 
Indeed, the second-cumulant closure is not an approximation in density space: it is exactly
the renewal process generated by the symmetric Bernoulli law having the same variance. This
fact implies that its
stationary density has the support and the singularity
structure of a two-valued law $\xi=\pm a$: two edge divergences, and a support fixed by
$a=\sqrt{b_2}$ rather than by $\max|\xi|$. For a jump law with many values, or a
continuous one, that is not an approximation of the truth but a different
object. The cumulant expansion converges in the space of \emph{kernels}, not in
the space of densities.

Thus, for rates,
relaxation times and low-order moments, we have to use the kernel-expansion approach: truncate at second
cumulant, and read the error off $b_4/b_2$ and the appropriate $\lambda$. On the other hand,
for
the equilibrium distribution, and for anything controlled by its tails, extreme events, first-passage times, we cannot truncate at all: we have to use the
frozen-noise representation, which gives the answer in closed form and without
an expansion.

\section{Frozen Flows, Atoms, and Stationary-Density Geometry
\label{sec:unbounded_expansion}}

The discussion of the previous section highlights a fundamental distinction.
The Boolean hierarchy naturally organizes the reduced kernel, whereas the
stationary density is controlled by different geometric objects.

For jump laws of unbounded support this distinction becomes particularly
transparent. Consider, for example, Gaussian jumps. Their Boolean cumulants
grow factorially,
$b_{2k}=\sigma^{2k}I_k$, 
with $I_k$ the indecomposable pairings and
$I_k\sim (2k-1)!!$.
The resulting series $\eta(u)=\sum_n b_nu^n$
has zero radius of convergence and is therefore only asymptotic.
Adding cumulants improves the approximation up to an optimal order and
degrades it thereafter.

This behaviour should not be interpreted as a failure of the kernel
description. The exact kernel remains perfectly well defined through the
analytic continuation provided by the Cauchy-transform representation
Eq.~\eqref{eq:eta_analytic}. The resummed expression
Eq.~\eqref{eq:kernel_resummed} therefore remains valid even for jump laws
whose cumulant expansion diverges.

The more important lesson is conceptual. Truncating the Boolean hierarchy does
not produce a hierarchy of stationary densities. A second-order truncation
does not generate a density close to that of the original noise; it generates
the exact density of a different stochastic process, namely the Bernoulli
renewal process with the same variance.

This observation motivates a different route. If the goal is to approximate
stationary measures, then the natural object to approximate is not the kernel
but the jump distribution itself. The relevant expansion variable is therefore
not the cumulant order but the number of atoms used to represent the jump law.

In the frozen-noise formulation, each atom of $p(\xi)$ generates a
corresponding frozen flow and, for bounded dynamics, a corresponding frozen
fixed point. The support, singularity structure, and tail properties of the
stationary density are determined by these geometric objects. Consequently,
progressive refinements of the jump distribution lead naturally to a hierarchy
of stationary-density approximations.

This hierarchy is developed below.

\subsection{An Expansion that Converges in Density Space
\label{sec:atom_expansion}}

Whereas cumulants organize successive approximations of
the reduced kernel, atoms organize successive approximations of the stationary
density itself.

The obstruction is therefore not truncation per se, but the object being
truncated. The support, the location of singular points, and the asymptotic
tail structure are controlled primarily by the geometry of the jump
distribution. The natural way to approximate these quantities is to
approximate $p(\xi)$ directly rather than a series constructed from its
cumulants.

\paragraph{The construction.} Replace $p(\xi)$ by a discrete law with $N$ atoms
chosen so as to reproduce its first $2N-1$ moments: the nodes and weights of
the Gauss quadrature associated with the measure $p$. Each surrogate is a
discrete jump law, hence exactly solvable by the machinery of
Section~\ref{sec:stationary}: its support is the interval spanned by its frozen
fixed points, its singular points are those fixed points, and their exponents
follow from Eq.~\eqref{eq:edge_exponents}. One obtains in this way a hierarchy of
exactly solvable problems indexed by $N$, converging to the original as $N$
grows.

\paragraph{Correlation preservation.} The atom hierarchy is more than a
heuristic device: at each level it reproduces \emph{exactly} a finite portion
of the multi-time statistics of the forcing, for any waiting-time law.

\begin{proposition}[Correlation preservation by atom surrogates]
\label{prop:corr_preservation}
Let $p(\xi)$ have finite moments up to order $2N-1$ and at least $N$ support
points. Let $p_N(\xi)=\sum_{i=1}^{N}w_i\,\delta(\xi-\xi_i)$ be its $N$-point Gauss
quadrature, so that $\sum_iw_i\xi_i^k=\overline{\xi^k}$ for $k\le2N-1$.
Consider two step-renewal processes with the same waiting-time density $\psi$
and the same initial time $t_0$, with jump laws $p$ and $p_N$ respectively, and
initial ensembles either both equal to a common $p_0$ or equal to $p$ and $p_N$
respectively. Then, for arbitrary $\psi$, arbitrary ordered times
$t_0\le u_1\le\dots\le u_n$ and every $n\le2N-1$, the two processes have
identical $n$-time correlation functions and identical $G$-cumulants.
\end{proposition}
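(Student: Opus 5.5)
The plan is to prove the correlation statement first and then obtain the $G$-cumulant statement from it as a corollary. The starting point is the general renewal expansion, Eq.~\eqref{eq:renewal_general}, which holds for any waiting-time density. It writes $C_n(u_1,\dots,u_n)$ as a sum over interval partitions $\pi\in\mathcal I(n)$. Each term is the product of jump-law moments $\prod_{B\in\pi}\overline{\xi^{|B|}}$ times a weight $\mathcal C(\{m_i\};u_1,\dots,u_n)$. That weight is the probability that each block lies in a single laminar epoch while consecutive blocks are separated by at least one renewal. It depends only on the renewal process, namely on $\psi$, $t_0$, and the rule used to draw the first epoch. It does not depend on $p(\xi)$.

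The key observation is that the two processes share $\psi$ and $t_0$, so every weight $\mathcal C$ is identical for them. Every block has size $|B|\le n\le 2N-1$, so each moment appearing satisfies $\overline{\xi^{|B|}}=\sum_i w_i\xi_i^{|B|}$ by the Gauss-quadrature exactness property. The two expansions therefore agree term by term, giving $C_n^{(p)}=C_n^{(p_N)}$ for every $n\le2N-1$.

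The hypothesis of at least $N$ support points is needed only so that the $N$-point Gauss rule exists with distinct nodes and positive weights. This is standard: the Hankel moment matrices up to order $N-1$ are positive definite.

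The main obstacle is the first epoch, whose value may be drawn from $p_0$ rather than from the jump law. I would handle it by refining Eq.~\eqref{eq:renewal_general} to separate the block that lies in the initial epoch, when there is one. That block carries $\overline{\xi^{|B|}}_{p_0}$ in place of the jump-law moment, with the same renewal weight. If both processes start from the common $p_0$, this factor is literally the same in both. If they start from $p$ and $p_N$ respectively, it is again a moment of order at most $2N-1$ and quadrature exactness applies. In either case the term-by-term argument goes through. I would also check that every quantity involved is finite: the weights are probabilities, and all moments of order at most $2N-1$ exist by assumption.

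For the $G$-cumulants I would invoke the Boolean inversion formula, Eq.~\eqref{k_nVSm_n}. It expresses $\doubleangle{\xi(u_1)\cdots\xi(u_n)}^{(G)}$ as a signed sum over $\pi\in\mathcal I(n)$ of products of lower-order correlations $\langle\xi(u_{b_1})\cdots\xi(u_{b_{|B|}})\rangle$, evaluated at sub-tuples of the ordered times. Each such sub-tuple has length at most $n\le 2N-1$ and is itself an ordered set of times at least $t_0$. By the first part, each factor coincides for the two processes, hence so do the $G$-cumulants.

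This step uses only the algebraic identity of Section~\ref{sec:model}, not the Boolean factorization of Proposition~\ref{prop:main}. It therefore holds for arbitrary $\psi$, including aging processes, because both processes share the same $t_0$ and so the same time-inhomogeneous weights.
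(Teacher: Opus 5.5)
Your proposal is correct and follows essentially the same route as the paper: you expand $C_n$ via Eq.~\eqref{eq:renewal_general} together with its initial-ensemble generalization, and you observe that the renewal weights depend only on $\psi$, $t_0$ and the times, while every moment that appears has order at most $n\le 2N-1$ and is therefore reproduced exactly by Gauss quadrature; you then transfer the equality to the $G$-cumulants through the interval-partition inversion formula, Eq.~\eqref{k_nVSm_n}. Your explicit treatment of the initial-epoch block and of the existence of the Gauss rule only spells out details that the paper's short proof leaves implicit.
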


\noindent\emph{Proof.} By Eq.~\eqref{eq:renewal_general}, and by its
generalization to an arbitrary initial ensemble in Ref.~\cite{bblmCSF202},
$C_n$ is a linear combination of products
$\overline{\xi_0^{m_1}}\,\overline{\xi^{m_2}}\cdots\overline{\xi^{m_p}}$ with
all $m_i\le n$, where $\overline{\xi_0^{m}}$ denotes a moment of the initial
ensemble. Since the jump values are independent of the renewal times, the
coefficients of this combination depend only on $\psi$, on $t_0$ and on the
times $u_1,\dots,u_n$, and not on the jump law. Gauss quadrature is exact for
all moments of order $\le2N-1$. The $G$-cumulants of order $n$ are polynomials
in the correlation functions of order $k\le n$ [Eq.~\eqref{k_nVSm_n}].
$\square$

\begin{corollary}
\label{cor:kernel_preservation}
For arbitrary drift $C(x)$, coupling $I(x)$ and waiting-time density $\psi$,
the $G$-cumulant expansion~\eqref{GCum_x_L} of the memory kernel generated by
$p_N$ coincides with that generated by $p$ up to and including order
$\epsilon^{2N-1}$. In particular, the Boolean cumulants of $p_N$ and $p$
coincide up to order $2N-1$, so that the Boolean generator of $p_N$,
$\eta_N(z)$, satisfies $\eta_N(z)=\eta(z)+O(z^{2N})$.
\end{corollary}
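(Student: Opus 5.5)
The corollary follows almost directly from Proposition~\ref{prop:corr_preservation} combined with the structure of the kernel expansion~\eqref{GCum_x_L}.

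\emph{Step 1: identify the order-$n$ term of the kernel.} In Eq.~\eqref{GCum_x_L} the term of order $\epsilon^n$ is the time-ordered integral of the operator string $\tilde{\mathcal L}_I(t)\tilde{\mathcal L}_I(u_{n-1})\cdots\tilde{\mathcal L}_I(u)$, multiplied by the scalar $G$-cumulant $\doubleangle{\xi(u)\xi(u_2)\cdots\xi(t)}^{(G)}$. This factorization is the inheritance property~\eqref{G-cumulant_Omega}. The operators depend only on $C(x)$ and $I(x)$, and are therefore the same for $p$ and $p_N$. For a general $\psi$ the kernel is a two-time object $G(t,u)$, but Eq.~\eqref{GCum_x_L} still holds order by order. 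The only noise-dependent ingredient at order $\epsilon^n$ is therefore the $n$-time $G$-cumulant, evaluated on ordered times $u\le u_2\le\dots\le t$.

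\emph{Step 2: invoke the proposition.} Proposition~\ref{prop:corr_preservation} states that for every $n\le 2N-1$ and every ordered set of times, these $G$-cumulants coincide for $p$ and $p_N$. This holds for arbitrary $\psi$, with the same $t_0$ and a compatible initial ensemble. Integrating over the same simplex, the $\epsilon^n$ terms agree for all $n\le 2N-1$, which gives the first claim.

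\emph{Step 3: Boolean cumulants.} For the second claim I would not need any renewal structure. The Boolean moment--cumulant relation $\overline{\xi^n}=\sum_{\pi\in\mathcal I(n)}\prod_{B\in\pi}b_{|B|}$ is triangular: $b_n$ equals $\overline{\xi^n}$ minus a polynomial in $b_k$ with $k<n$. By induction on $n$, $b_n$ is a polynomial in $\overline{\xi^1},\dots,\overline{\xi^n}$ only. Gauss quadrature reproduces $\overline{\xi^k}$ for $k\le 2N-1$, so $b_n[p_N]=b_n[p]$ for $n\le 2N-1$.

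Equivalently, one can read this off Eq.~\eqref{eq:boolean_ogf}. The two ordinary generating functions satisfy $M_o^{(N)}(z)-M_o(z)=O(z^{2N})$, and $\eta=M_o/(1+M_o)$ is analytic in $M_o$ near $z=0$ with $M_o(0)=0$. Hence $\eta_N(z)-\eta(z)=O(z^{2N})$ as formal power series. As a cross-check in the Poissonian case, the coefficients of Eq.~\eqref{eq:Kn_resolvent} give the same conclusion directly from Step 2.

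\emph{Main obstacle.} The one subtle point is interpretive rather than technical: $p$ may have only $2N-1$ finite moments. In that case $\eta(z)$ is understood as the Cauchy-transform expression~\eqref{eq:eta_analytic}, and the statement $\eta_N=\eta+O(z^{2N})$ must be read as an asymptotic expansion as $z\to0$ along nonreal directions rather than as a convergent series. To justify this, I would expand $\langle(1-\xi z)^{-1}\rangle$ to order $2N-1$ with a remainder term, $\langle \xi^{2N}z^{2N}/(1-\xi z)\rangle$ type. Bounding that remainder needs only the available moments together with the $|\xi|^{-1}$ decay used in Section~\ref{sec:heavy}.
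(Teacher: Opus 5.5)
Your Steps 1--3 are correct and coincide with the argument the paper leaves implicit. By the inheritance property, the $\epsilon^n$ term of Eq.~\eqref{GCum_x_L} sees the noise only through the $n$-time $G$-cumulant, and Proposition~\ref{prop:corr_preservation} makes these agree for $n\le 2N-1$. The Boolean cumulants then agree because $b_n$ is a triangular polynomial in $\overline{\xi^1},\dots,\overline{\xi^n}$ and Gauss quadrature is exact through degree $2N-1$.

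The fix proposed in your last paragraph does not go through. Suppose $p$ has only $2N-1$ finite moments. The best control on the remainder $R(z)=z^{2N}\langle\xi^{2N}/(1-\xi z)\rangle$ then comes from $|1-\xi z|\ge|\xi z|\,|\sin\arg z|$. This gives $|R(z)|\le|z|^{2N-1}\,\overline{|\xi|^{2N-1}}/|\sin\arg z|$ and, by dominated convergence, $R=o(z^{2N-1})$, not $O(z^{2N})$.

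This is sharp. Take $N=2$ and a symmetric unit-variance law with tails $p(\xi)\sim c|\xi|^{-5}$, so that $\overline{|\xi|^3}$ exists but $\overline{\xi^4}=\infty$. Along $z=\I y$ one finds $\langle(1-\xi z)^{-1}\rangle-(1-z^2)^{-1}=y^4\bigl[\langle\xi^4/(1+\xi^2y^2)\rangle-(1+y^2)^{-1}\bigr]\simeq 2c\,y^4\log(1/y)$. Hence $\eta_2-\eta$ is not $O(z^4)$.

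An asymptotic $O(z^{2N})$ needs $\overline{|\xi|^{2N}}<\infty$. In that case $|R(z)|\le|z|^{2N}\,\overline{\xi^{2N}}/|\sin\arg z|$, and $\eta_N-\eta=(M_o^{(N)}-M_o)/\bigl[(1+M_o)(1+M_o^{(N)})\bigr]$ transfers the bound to $\eta$. Otherwise the last clause of the corollary must be read, as the paper does, as equality of Taylor coefficients through order $2N-1$, which your Step 3 already establishes. Either drop the asymptotic claim or add the extra moment hypothesis.
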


\begin{remark}[The atom hierarchy as a Pad\'e resummation of the kernel]
\label{rem:pade}
The function
\[
1+M_o^{(N)}(z)=\sum_{i=1}^{N}\frac{w_i}{1-\xi_iz}
\]
is rational of type $[N-1/N]$ in $z$ and agrees with $1+M_o(z)$ through order
$z^{2N-1}$. It is therefore the $[N-1/N]$ Pad\'e approximant of $1+M_o(z)$: this
is the classical correspondence between Gauss quadrature and Pad\'e
approximants of Markov functions, i.e.\ of Cauchy--Stieltjes transforms of
positive measures~\cite{BakerGravesMorris}. Since the reciprocal of an $[L/M]$
Pad\'e approximant is the $[M/L]$ approximant of the reciprocal function,
$\eta_N=1-[1+M_o^{(N)}]^{-1}$ is precisely the $[N/N-1]$ Pad\'e approximant of
the Boolean generator $\eta$. These statements concern the jump law alone and
hold for any waiting-time density. For exponential waiting times the surrogate
kernel is $P_a\,\eta_N(\mathcal X)$, a Pad\'e resummation of the Boolean kernel
series $\sum_nb_n\,P_a\mathcal X^n$; no ordering ambiguity arises, since
$\eta_N(\mathcal X)$ involves powers of a single operator. The two hierarchies of
Fig.~\ref{fig:conceptual_map} therefore truncate the \emph{same} object in two
different ways at the same order: a polynomial truncation of $\eta$ (kernel
hierarchy) and a Pad\'e resummation (atom hierarchy). They agree in kernel
space through $O(\epsilon^{2N-1})$, but only the latter is the exact kernel of a
genuine renewal process: a polynomial of degree larger than two is not the
Boolean generator of any probability measure (Section~\ref{sec:when_exact}).
This is why only the atom hierarchy produces legitimate stationary densities,
with the correct type of support and singularities. For $N=2$ and a symmetric
law both reduce to $\eta(z)=b_2z^2$: the ``Bernoulli equivalence'' of
Fig.~\ref{fig:conceptual_map} is the lowest instance of this correspondence.
For an arbitrary waiting-time density the same mechanism operates on the exact
kernel~\eqref{eq:kernel_general_WT}: replacing $p$ by $p_N$ amounts to
evaluating the average $\langle\cdot\rangle_\xi$ in
$\mathcal E(\theta)=\langle e^{(\mathcal L_a+\epsilon\xi\mathcal L_I)\theta}\rangle_\xi$
by $N$-point Gauss quadrature. Because the term of order $\epsilon^k$ of the
exponential is a polynomial of degree $k$ in $\xi$, this is exact through
$O(\epsilon^{2N-1})$, in agreement with Corollary~\ref{cor:kernel_preservation}.
\end{remark}

\paragraph{Advantages of the atom-based hierarchy.}
Three features distinguish this construction from the cumulant expansion.

First, it converges in the space of densities. Each surrogate contains a larger
number of atoms and therefore a larger number of singular points and a wider
support. Both approach those of the target distribution monotonically as $N$
increases.

Second, its first non-trivial member coincides exactly with the
second-cumulant closure. Two atoms of equal weight reproducing the variance are
precisely the symmetric Bernoulli law, so the truncation studied throughout
this paper is nothing but the $N=2$ element of the present hierarchy. This
provides a geometric interpretation of the exact closure theorem:
Theorem~\ref{thm:uniqueness} reflects the fact that the Bernoulli distribution
is the lowest-order member of a hierarchy defined directly in density space.
Consequently, exact closure for dichotomous noise and the impossibility of
improving the stationary density by adding cumulant orders are two aspects of
the same phenomenon. \emph{For stationary densities, improvement comes from
adding atoms rather than from adding cumulant orders.}

Third, every member of the hierarchy is itself an exact solution of a genuine
renewal problem. The approximate density therefore remains a legitimate
probability density at every stage: correctly normalized, non-negative, and
supported on a finite interval determined by the corresponding frozen fixed
points. Nothing analogous holds for the Fokker--Planck approximation, whose
support remains unbounded regardless of how accurately its kernel reproduces
the underlying dynamics.

\paragraph{Scope and limitations.} The construction uses nothing of the drift or
of the coupling: replacing $p(\xi)$ by an $N$-atom law is a statement about the
noise alone, and the hierarchy is defined for any $C$ and $I$. What a linear
drift and coupling buy is that each surrogate is solvable in closed form; for
nonlinear ones the surrogates remain exactly defined renewal problems, and their
supports still follow from the frozen fixed points, but their densities must be
obtained numerically. The approximation is variational rather than perturbative 
(there is no small parameter and $N$ may be increased at will) and its
convergence is correspondingly slow, so it is useful for a qualitatively correct
density rather than for high precision, the residual error being concentrated
near the edges where the surrogate support still falls short.

For a law of unbounded support the same construction behaves as it should: the
quadrature nodes are finite in number but unbounded in magnitude, those of a
Gaussian law extend as $\sqrt N$, so the surrogate support grows with $N$
instead of converging, the true support being infinite. Section~\ref{sec:CAM}
quantifies the convergence on the LIMI/CAM model.

The hierarchy therefore preserves the qualitative geometry of the stationary measure at 
every truncation level, unlike cumulant-based approximations, which preserve only 
the low-order dynamical information contained in the kernel.

\section{A worked example: the LIMI/CAM model\label{sec:CAM}}

The previous sections
developed the general framework. We now consider a model for which both exact descriptions
introduced in this work can be carried through analytically and compared in detail. 

The
LIMI/CAM model is not merely an illustrative example: it is the simplest renewal-driven
system in which the Boolean-kernel formulation and the frozen-noise formulation remain fully
non-trivial while still admitting extensive analytical treatment. This makes the model an
ideal laboratory for understanding the distinct roles played by the two hierarchies
developed above. On the one hand, it permits an explicit evaluation of the exact kernel, the
corresponding closure approximations, and the error estimate $(b_4/b_2)\lambda^2$. On the
other hand, it allows the stationary density to be constructed directly from frozen flows,
making it possible to investigate support boundaries, singular structures, and tail
behaviour in closed form. 

The discussion therefore proceeds from kernel space to density
space. Section~\ref{sec:worked} derives the exact relaxation rate, while
Section~\ref{sec:closure_cost} quantifies the cost of the second-cumulant truncation and
verifies the estimate $(b_4/b_2)\lambda^2$. Section~\ref{sec:fpe_test} introduces the
additional approximation of temporal localization and shows that the resulting
Fokker--Planck equation accurately reproduces rates and low-order moments while failing to
reproduce the geometry of the stationary density. 

We then turn to the complementary
density-space description. Section~\ref{sec:atoms_cam} develops the atom-based approximation
hierarchy introduced in Section~\ref{sec:atom_expansion}, illustrating how stationary
densities improve through the addition of atoms rather than cumulant orders.
Sections~\ref{sec:stationary} and~\ref{sec:numres} finally abandon approximations
altogether, deriving the exact stationary density from the frozen-noise representation and
validating the resulting predictions numerically. 

The LIMI/CAM model is defined by
 
\begin{equation} 
\dot x=-\gamma x+\epsilon\,\xi[t]\,(1+\beta
x). 
\label{eq:worked_sde} 
\end{equation}

We refer to Eq.~\eqref{eq:worked_sde} as the \emph{linear system with linear multiplicative
interaction} (LIMI) model~\cite{bmCHAOS34}. The terminology is chosen deliberately. In much of the
literature, especially when $\xi[t]$ is assumed Gaussian or white, the same equation is
commonly described as a correlated additive--multiplicative (CAM) noise model. Since the
present work allows for a much broader class of renewal forcings, the acronym LIMI better
emphasizes the structural ingredients of the problem: a linear restoring drift and a linear
state-dependent coupling.
Because the acronym CAM is widely used in the literature, we shall use the combined designation LIMI/CAM when referring to this model. The term LIMI emphasizes the underlying dynamical structure, whereas CAM preserves continuity with the existing literature. 

The model combines \[ C(x)=\gamma x, \qquad I(x)=1+\beta x, \] that is, a linear
deterministic drift and the lowest-order nonlinear correction to purely additive forcing.
Despite this simple structure, it is capable of generating highly non-Gaussian stationary
states, intermittent fluctuations, heavy tails, and anomalous transport, depending on the
value of the multiplicative parameter $\beta$~\cite{lhuEPJB78,wkmCHAOS27,bcmmCHAOS28}.

When $\xi[t]$ is specialized to Gaussian or, more commonly, white noise, the model has been
employed in a variety of physical contexts. In climate and ocean dynamics it has been used
as a reduced description of large-scale variability, including conceptual studies of the El
Ni\~no Southern Oscillation (ENSO) and extreme climate fluctuations
~\cite{sOD60,sAR101,scpJC28,bGRL43,bcmmA2018,bcmmCHAOS28,bmCHAOS31,bmCHAOS34}. Related forms
also arise in particle transport and active-matter models, including descriptions of active
Brownian particles interacting with complex environments~\cite{zzlzSR6}.

Previous analytical studies of the LIMI/CAM model have largely relied on Gaussian,
white-noise, or perturbative assumptions on the driving process. The present work approaches
the problem from a different perspective. The forcing $\xi[t]$ is modeled as a step-renewal
process with arbitrary jump distribution and finite correlation time, allowing the entire
hierarchy developed in Sections~\ref{sec:renewal}--\ref{sec:two_routes} to be tested on a
concrete system.

The LIMI model is particularly useful in this respect because its algebraic structure
permits exact evaluation of both the resummed kernel and the frozen-noise dynamics, making
it possible to compare directly exact results, Boolean-cumulant truncations, Fokker--Planck
reductions, and density-space approximations.

The reason this model is analytically tractable becomes apparent at the operator level. The
interaction-picture coupling operator reads \begin{equation} \tilde{\mathcal L}_I(t) =
\partial_x\big[e^{\gamma t}+\beta x\big] = e^{\gamma t}A+B, \qquad A:=\partial_x, \qquad
B:=\beta\partial_xx, \label{eq:LI_tilde_worked} \end{equation} and satisfies \[ [A,B]=\beta
A, \] thus generating a two-dimensional solvable Lie algebra.

The operator $B$ acts diagonally, \[ Bx^n=\beta(n+1)x^n, \] whereas $A$ lowers the
polynomial degree, \[ Ax^n=nx^{n-1}. \] Since the drift generator is proportional to $B$, \[
\mathcal L_a = \gamma\partial_xx = (\gamma/\beta)B, \] the resolvent 
$P_a^{-1} =(s+\tau^{-1}-\mathcal L_a)^{-1}$ is diagonal as well, acting on $x^n$ as multiplication by $[s+\tau^{-1}-\gamma(n+1)]^{-1}$.

Every term of Eq.~\eqref{eq:Kn_resolvent} is therefore a product of operators that either
act diagonally on the polynomial degree or shift it by one. As a result, all contributions
can be evaluated explicitly within each finite-degree sector.
Equivalently, the commutation relation allows every
$A$ to be moved to the right of every function of $B$ at the cost of a shift of
argument, $A\,g(B)=g(B+\beta)\,A$, bringing any product to the form
(function of $B$)$\times A^k$.

This algebraic closure is the key reason why the LIMI/CAM model is particularly well 
suited as a
worked example: the abstract operator relations developed in the previous sections become
explicitly computable, allowing direct comparison between exact results and the various
approximation schemes.

Note that operator non-commutativity is genuinely relevant only when $\gamma\beta\neq0$. The
additive-noise limit ($\beta=0$) and the zero-drift limit ($\gamma=0$) are effectively
commutative and provide useful benchmark cases against which the fully coupled problem can
be compared.

\subsection{Exact relaxation rate\label{sec:worked}}

For the LIMI/CAM model in Eq.~\eqref{eq:worked_sde} the moment hierarchy closes
exactly, and the closure provides a stringent test of the whole construction.
The reason is the algebraic structure noted after
Eq.~\eqref{eq:LI_tilde_worked}: the drift generator is diagonal on powers of
$x$ and the coupling lowers the degree by one, so the sector of degree $n$ feeds
only on the sector of degree $n-1$. Thus, for a discrete jump PDF $p(\xi)=\sum_iw_i\,
\delta(\xi-\xi_i)$, define
\begin{equation}
v_i^{(n)}(t):=\big\langle x^n(t)\,\chi_i(t)\big\rangle ,
\qquad
\chi_i(t):=\begin{cases}1 & \text{if }\xi[t]=\xi_i,\\ 0 & \text{otherwise,}\end{cases}
\end{equation}
the mean of $x^n$ restricted to the $i$-th noise state, so that
$\langle x^n\rangle=\sum_iv_i^{(n)}$ and $\langle\chi_i\rangle=w_i$ in the stationary
state, the frozen flow
$\dot x=-\lambda_ix+\epsilon\xi_i$ gives
\begin{equation}
\dot v_i^{(n)}=-n\lambda_iv_i^{(n)}+n\epsilon\xi_i\,v_i^{(n-1)}
+\frac1\tau\Big[w_i\sum_jv_j^{(n)}-v_i^{(n)}\Big],
\label{eq:moment_hierarchy}
\end{equation}
a triangular system solvable order by order from $v_i^{(0)}=w_i$. We develop the
case $n=1$ in full, since it is the one for which the relaxation rate can be
compared directly with the resummed kernel; the higher sectors are used in
Sections~\ref{sec:fpe_test} and~\ref{sec:numres}, where the stationary
moments are needed.

 Since $(x,\xi)$ is jointly Markovian, the $v_i:=v_i^{(1)}$ obey
\begin{equation}
\dot v_i=-\lambda_iv_i+\epsilon\xi_iw_i
+\frac1\tau\Big[w_i\sum_jv_j-v_i\Big],
\qquad \lambda_i=\gamma-\epsilon\beta\xi_i ,
\label{eq:first_moment_system}
\end{equation}
a closed linear system whose slowest decay rate governs $\langle x\rangle=\sum_iv_i$: the first term is transport along the frozen flow, the
second the drive from the constant part of $I(x)$, and the bracket the gain and
loss through resampling of the noise at rate $1/\tau$.

Two consequences follow immediately.
First, the matrix governing Eq.~\eqref{eq:first_moment_system} can be written as
\begin{equation}
M=-\gamma\, \mathbbm{1}
+\Big[\mathrm{diag}(\epsilon\beta\xi_i)-\tfrac1\tau\, \mathbbm{1}
+\tfrac1\tau\,w\,\mathbf 1^{\!\top}\Big],
\label{eq:M_matrix}
\end{equation}
where $\mathbbm{1}$ is the identity matrix, $w:=(w_1,\dots,w_N)^{\!\top}$ the vector of
weights and $\mathbf 1:=(1,\dots,1)^{\!\top}$, so that $w\,\mathbf 1^{\!\top}$ is
the rank-one matrix with entries $(w\mathbf 1^{\!\top})_{ij}=w_i$. 
Because the drift enters solely through the shift 
$-\gamma\mathbbm{1}$, the noise-induced modification $\mu$ to the total relaxation rate $\Gamma = \gamma - \mu$ satisfies:
\begin{equation}
\boxed{\ \mu \text{ is strictly independent of }\gamma\ }
\label{eq:mu_indep}
\end{equation}
The second consequence is a closed form for $\mu$. The decay rate sits at a
\emph{pole} of the resolvent. From Eq.~\eqref{eq:kernel_general_WT} with
exponential waiting times, $\hat{\mathcal G}_{\rm tot}=(s+\tau^{-1})-R(s)^{-1}$,
so the Laplace-transformed evolution equation
$s\widehat{\langle x\rangle}-x_0=\hat{\mathcal G}_{\rm tot}\widehat{\langle x\rangle}$
has a pole where $s-\hat{\mathcal G}_{\rm tot}=0$, that is where
\begin{equation}
R_{xx}(s)=\tau ,
\qquad
R_{xx}(s)=\Big\langle\big(s+\tau^{-1}+\gamma-\epsilon\beta\xi\big)^{-1}\Big\rangle_\xi ,
\end{equation}
$R_{xx}$ being the $x$-component of the averaged resolvent: on that sector the
adjoint generator acts as $\mathcal A_\xi x=(-\gamma+\epsilon\beta\xi)x+
\epsilon\xi$, whose coefficient of $x$ is $-\lambda(\xi)$.

Setting $s=-\Gamma=-(\gamma-\mu)$ makes the drift cancel, $s+\tau^{-1}+\gamma=
\tau^{-1}+\mu$, and the condition becomes
$\langle(\tau^{-1}+\mu-\epsilon\beta\xi)^{-1}\rangle_\xi=\tau$. Factoring
$\tau^{-1}+\mu$ out of the bracket,
\begin{equation}
\frac{1}{\tau^{-1}+\mu}\Big\langle\frac{1}{1-\xi u}\Big\rangle_\xi=\tau ,
\qquad
u:=\frac{\epsilon\beta}{\tau^{-1}+\mu}=\frac{\epsilon\beta\tau}{1+\mu\tau} ,
\end{equation}
and since $\langle(1-\xi u)^{-1}\rangle=1+M_o(u)$ this reads
$1+M_o(u)=\tau(\tau^{-1}+\mu)=1+\mu\tau$, i.e.\ $M_o(u)=\mu\tau$. Using
$M_o=\eta/(1-\eta)$,
\begin{empheq}[box=\fbox]{equation}
\mu\,\tau=\frac{\eta(u)}{1-\eta(u)},
\qquad
u=\frac{\epsilon\beta\,\tau}{1+\mu\tau} ,
\label{eq:exact_rate}
\end{empheq}
a self-consistent pair determining $\mu$, with $\eta$ the Boolean cumulant
generating function of $p(\xi)$ and $M_o=\eta/(1-\eta)$ the ordinary moment
generating function. The effective coupling is $\epsilon\beta$: for the first
moment only the multiplicative part of $\mathcal L_I$ contributes, the additive
part being annihilated by $\partial_x$ acting on a constant. Note that $u$ is
fixed by $\mu$ itself and \emph{not} by $\gamma$; in the perturbative
regime, $\mu\tau\ll1$, one has $u\simeq\epsilon\beta\tau$.

\subsection{Closure error analysis\label{sec:closure_cost}}
The general error estimate derived in Section~\ref{sec:scales} can be tested explicitly in the LIMI model. Truncating the Boolean (equivalently, the $G$-cumulant) expansion at second order amounts to replacing $\eta$ by its leading term, $\eta(u)\to b_2u^2$, in Eq.~\eqref{eq:exact_rate}. Two concrete predictions follow, and both are borne out exactly.

\paragraph{The truncated pair, in closed form.} Truncating the cumulant series
at second order means solving Eq.~\eqref{eq:exact_rate} with $\eta\to b_2u^2$;
call $\mu_2$ its root. That truncated pair is elementary: writing $m=\mu_2\tau$
and $v^2=b_2(\epsilon\beta\tau)^2$, it reduces to $m^2+m-v^2=0$, whence
\begin{equation}
\mu_2\tau=\frac{\sqrt{1+4v^2}-1}{2},
\qquad v=\sqrt{b_2}\,\epsilon\beta\tau .
\label{eq:mu_truncated}
\end{equation}

Of course, according to  Theorem~\ref{thm:uniqueness} for symmetric Bernoulli jumps
$\eta(u)=b_2u^2$ \emph{identically}, so the truncation is no truncation at all
and Eq.~\eqref{eq:mu_truncated} gives the true $\mu$, at any coupling and with
$b_2=a^2$.

For any other jump law
Eq.~\eqref{eq:mu_truncated} is an approximation, and expanding both roots in
$u$, with $\eta/(1-\eta)=\eta+\eta^2+O(\eta^3)$ and
$\eta=b_2u^2+b_4u^4+O(u^6)$,
\begin{equation}
\mu\tau=b_2u^2+\big(b_4+b_2^2\big)u^4+O(u^6),
\qquad
\mu_2\tau=b_2u^2+b_2^2u^4+O(u^6),
\end{equation}
the two differing first at order $u^4$ and by $b_4u^4$ exactly. Since
$u=\epsilon\beta\tau+O(\epsilon^3)$, the shift in $u$ between the two
self-consistent solutions contributes only at $O(u^6)$, and
\begin{equation}
\frac{\mu_2-\mu}{\mu}\simeq-\frac{b_4}{b_2}\,u^2 ,
\qquad u\simeq\epsilon\beta\tau .
\label{eq:closure_error}
\end{equation}
This is precisely the scalar realization of the operator estimate $\mathcal D=(b_4/b_2)\mathcal X^2$ derived
in Section~\ref{sec:scales}: the closure error factorizes into a purely statistical contribution,
$b_4/b_2$, and a purely dynamical scale, $u^2$.

 For uniform jumps,
$b_4/b_2=0.8$ (with $\epsilon=\tau=1$):

\begin{center}
\renewcommand{\arraystretch}{1.2}
\begin{tabular}{cccccc}
\hline
$\beta$ & $u$ & $\mu$ exact & $\mu$ second order & measured error & $(b_4/b_2)u^2$\\
\hline
$0.15$ & $0.1467$ & $0.022411$ & $0.022015$ & $1.76\%$ & $1.72\%$\\
$0.20$ & $0.1924$ & $0.039703$ & $0.038516$ & $2.99\%$ & $2.96\%$\\
$0.25$ & $0.2355$ & $0.061763$ & $0.059017$ & $4.45\%$ & $4.44\%$\\
$0.30$ & $0.2756$ & $0.088464$ & $0.083095$ & $6.07\%$ & $6.08\%$\\
$0.40$ & $0.3463$ & $0.155179$ & $0.140312$ & $9.58\%$ & $9.59\%$\\
$0.50$ & $0.4038$ & $0.238444$ & $0.207107$ & $13.14\%$ & $13.04\%$\\
\hline
\end{tabular}
\end{center}

The agreement between the last two columns is to two significant figures over
an eightfold range in the error, and remains within $1\%$ relative even at
$u=0.4$, where the expansion parameter is no longer small. Equation
\eqref{eq:closure_error} is therefore not merely an asymptotic estimate but a
usable formula. 

At least within the parameter range relevant for the present model, the leading Boolean correction already captures essentially the entire truncation error.

A potentially counterintuitive consequence now emerges. The reader should appreciate that because  $\mu$ is independent of $\gamma$, so is $u$, and
so is the error \eqref{eq:closure_error}: a fast deterministic relaxation does
\emph{not} improve the accuracy of the second $G$-cumulant closure for the first
moment of this model. What does decrease with $\gamma$ is the error relative to
the \emph{total} rate $\Gamma=\gamma-\mu$, but only because $\Gamma$ grows while
the absolute error stays fixed, which is not a statement about the quality of
the closure.

The relevant expansion parameter is therefore $u\simeq\epsilon\beta\tau$, the
coupling strength times the noise correlation time: the Kubo number, with the
multiplicative part of the coupling as the relevant amplitude.

\paragraph{Moment existence versus closure accuracy.} Moment existence and closure accuracy are controlled by two entirely different mechanisms and should not be confused. Equation~\eqref{eq:exact_rate} governs
the first moment. As shown by the hierarchy~\eqref{eq:moment_hierarchy}, higher-order
moments follow the same structure, with the $n$-th sector characterized by the coupling
$n\epsilon\beta$ and the same resampling rate. This immediately leads to two distinct
regimes. If $\lambda_i>0$ for every $\xi_i$ in the support of the noise distribution, then
$n\lambda_i+\tau^{-1}>\tau^{-1}$ for all $n$. Consequently, every sector remains stable and
all moments are finite. Conversely, if $\lambda_i<0$ for some $\xi_i$, the quantity
$n\lambda_i+\tau^{-1}$ changes sign once $n>1/(\tau|\lambda_{\min}|)$. Moments of that order
and higher then diverge, yielding precisely the same threshold identified by the Kesten
exponent in Section~\ref{sec:kesten_subsec}. The corresponding stability criterion is
therefore \begin{equation} \epsilon\beta\max|\xi|<\gamma , \label{eq:uniform_criterion}
\end{equation} or, equivalently, $\lambda(\xi)>0$ throughout the support of the
distribution. Condition~\eqref{eq:uniform_criterion} admits a simple physical
interpretation. It guarantees that the frozen fixed point
$x^*(\xi)=\epsilon\xi/\lambda(\xi)$ remains regular for all $\xi$, namely that the
multiplicative fluctuations never overcome the restoring drift. This is also precisely the
condition under which the stationary probability density has compact support; see
Eq.~\eqref{eq:compact_support}. Equation~\eqref{eq:uniform_criterion} determines which
moments exist, whereas the accuracy of the second-cumulant closure is controlled by the
parameter $(b_4/b_2)u^2$, with $u\simeq\epsilon\beta\tau$. The latter is therefore
independent of both Eq.~\eqref{eq:uniform_criterion} and the drift strength $\gamma$. Under
the unit-variance normalization adopted throughout this work, $\max|\xi|=1$ for the
symmetric Bernoulli distribution and $\max|\xi|=\sqrt{3}$ for the uniform distribution,
whereas $\max|\xi|$ is unbounded for a Gaussian distribution. As a result,
Eq.~\eqref{eq:uniform_criterion} cannot be satisfied in the Gaussian case, and high-order
moments diverge regardless of how small the coupling strength is. This conclusion is fully consistent with the non-perturbative discussion of Section~\ref{sec:heavy} and with the Kesten-type tail analysis of Section~\ref{sec:kesten_subsec}. Divergence of sufficiently high moments is therefore a property of the stationary distribution itself and should not be interpreted as a failure of the Boolean-kernel description.

\subsection{Localization and truncation: which parameter controls what\label{sec:fpe_test}}

So far the cost of the closure has been quantified on a spectral observable, namely the
relaxation rate, where the prediction $(b_4/b_2)\lambda^2$ was verified quantitatively. This
analysis concerns the memory kernel and the quantities controlled directly by it. A
different question is how accurately the localized Fokker--Planck description reproduces the
\emph{stationary density}. In that case, the Boolean-kernel expansion alone is no longer
sufficient. As discussed in Section~\ref{sec:fpe_emergence}, two independent parameters
enter: $\lambda$, which governs the error associated with truncation and localization of the
kernel, and $\gamma\tau$, which controls the relative position of the compact support with
respect to the bulk of the stationary distribution. The distinction is important. Reducing
$\lambda$ improves the accuracy of relaxation rates and low-order moments, but does not by
itself guarantee a better approximation of the stationary density. The latter is controlled
by the independent parameter $\gamma\tau$.

\paragraph{Stationary distribution comparison.}
Applying both localization and truncation to the LIMI model yields the Stratonovich Fokker--Planck stationary solution:
\begin{equation}
P^{\rm FP}_{\rm st}(x)\ \propto\ \big(1+\beta x\big)^{-(\nu+1)}
\exp\!\Big[-\frac{\nu}{1+\beta x}\Big] ,
\qquad
\nu=\frac{\gamma}{D\beta^{2}},
\qquad
D=\frac{\epsilon^{2}b_2\tau}{1+\gamma\tau} .
\label{eq:fpe_stationary}
\end{equation}
A direct comparison between Eq.~\eqref{eq:fpe_stationary} and exact numerical sampling of the renewal map confirms that $\lambda$ dictates closure error in observables, while $\gamma\tau$ determines the spatial domain over which the Fokker--Planck approximation remains valid.

\begin{table}[h!]
\centering
\renewcommand{\arraystretch}{1.25}
\begin{tabular}{lccccccc}
\hline
& $\epsilon$ & $\gamma\tau$ & $\lambda$ & $(b_4/b_2)\lambda^2$ &
error on $\sigma$ & support in $\sigma$ & misplaced mass\\
\hline
(a) & $0.5$  & $1$   & $0.150$ & $1.80\%$ & $0.15\%$ & $[-2.0,\,+3.1]$ & $1.46\%$\\
(b) & $0.2$  & $1$   & $0.060$ & $0.29\%$ & $0.02\%$ & $[-2.3,\,+2.7]$ & $1.43\%$\\
(c) & $0.15$ & $0.1$ & $0.045$ & $0.16\%$ & $0.14\%$ & $[-3.2,\,+24.9]$ & $0.00\%$\\
\hline
\end{tabular}
\caption{Comparison between exact renewal statistics and the localized Fokker--Planck approximation. We use
uniform jumps with $\beta=0.3$ and $\tau=1$ throughout, and vary $\epsilon$ and
$\gamma$.}
\label{tab:fpe_comparison}
\end{table}
The results illustrate the practical distinction between kernel accuracy and density accuracy. Reducing $\lambda$ improves the accuracy of observables such as the variance, in agreement with the closure estimate $(b_4/b_2)\lambda^2$, whereas reducing $\gamma\tau$ pushes the exact compact support farther into the tails and therefore improves the density approximation over the physically relevant region.

\subsection{The atom hierarchy at work\label{sec:atoms_cam}}

The Fokker--Planck description of Section~\ref{sec:fpe_test} reproduces the stationary
variance to $0.02\%$ while misplacing $1.43\%$ of the probability outside the true support.
Halving the coupling again would improve the first number while leaving the second
essentially unchanged. This is not a failure of accuracy but of expansion variable: the
cumulant hierarchy controls the kernel, whereas the support is not a property of the kernel.

Section~\ref{sec:atom_expansion} argued that a controlled approximation to the stationary
density exists, but in the number of atoms retained in $p(\xi)$ rather than in the number of
cumulants. The LIMI model provides a natural test case because every member of the resulting
hierarchy remains analytically tractable.

We consider uniform jumps with $\gamma=\tau=1$, $\epsilon=0.5$, and $\beta=0.4$, the same
parameters used for the discrete examples of Section~\ref{sec:numres}. The multiplicative
coupling makes the stationary supports asymmetric. We then replace $p(\xi)$ by the $N$-point
Gauss--Legendre quadrature on $[-\sqrt3,\sqrt3]$, which reproduces the first $2N-1$ moments
of the uniform distribution.

Each surrogate is therefore a discrete jump law that can be solved exactly through the machinery
of Section~\ref{sec:stationary} and compared directly against numerical sampling:
\begin{center} \renewcommand{\arraystretch}{1.2} \begin{tabular}{cccc} \hline $N$ & moments
matched & support & Kolmogorov distance\\ \hline $2$ & $3$ & $[-0.417,\,0.625]$ & $0.092$\\
$3$ & $5$ & $[-0.529,\,0.917]$ & $0.056$\\ $4$ & $7$ & $[-0.574,\,1.063]$ & $0.034$\\ $6$ &
$11$ & $[-0.610,\,1.193]$ & $0.017$\\ \hline exact & --- & $[-0.643,\,1.317]$ & ---\\ \hline
\end{tabular} \end{center}

Several features emerge from these results.

First, convergence is monotonic in both the support and the Kolmogorov distance, the latter
decreasing approximately as $N^{-3/2}$ over the range considered.

Second, the $N=2$ surrogate coincides exactly with the second-cumulant closure. Two equally
weighted atoms reproducing the variance are precisely the symmetric Bernoulli distribution,
and therefore generate the same support, $[-0.417,\,0.625]$, together with the same
Beta-type stationary density obtained previously.

Third, the residual discrepancy is concentrated near the support boundaries. For example, at
$N=6$ the upper edge still underestimates the exact value by approximately $9\%$, whereas
the bulk distribution is reproduced to better than $2\%$.

These observations make the distinction between the cumulant hierarchy and the atomic
hierarchy completely transparent. The cumulant expansion acts in kernel space: increasing
the number of retained cumulants modifies the reduced dynamics but leaves unchanged the
underlying two-state description associated with the $N=2$ Bernoulli law. Consequently, the
stationary density retains the same compact support and the same pair of edge singularities.

This is not merely a matter of convergence rate. For $N=2$ the jump law is exactly the
symmetric Bernoulli distribution, for which the Boolean hierarchy terminates identically at
second order. The $N=2$ density is therefore not the first term of a cumulant hierarchy
waiting to be improved: it is already the exact stationary density of a different stochastic
process.

The atomic hierarchy behaves differently. Increasing $N$ changes the jump law itself and
therefore modifies the support, the number and location of singular points, and the overall
geometry of the stationary density. In this sense, improvement of the density comes from
adding atoms rather than from adding cumulant orders.

The atom hierarchy thus provides in density space the analogue of what the Boolean-cumulant
expansion provides in kernel space: a systematic sequence of controlled approximations,
organized around the geometry of the stationary measure rather than around the dynamics of
the reduced kernel.

\subsection{Exact Stationary-distribution approach for the LIMI model \label{sec:stationary}} We now abandon the kernel description altogether and turn to the second exact route developed in this work, namely the frozen-noise representation of Section~\ref{sec:frozen}. Unlike the Boolean-kernel construction, which is naturally adapted to rates, moments, and closure estimates, the frozen-noise representation provides direct access to stationary densities, support boundaries, singular points, and tail behaviour. Although the analysis below is carried out for the LIMI model \eqref{eq:worked_sde}, several of the underlying structural features are more general. Within each renewal interval the dynamics is deterministic, for arbitrary drift $C(x)$ and coupling $I(x)$, and the stationary state is therefore built from a sequence of frozen deterministic evolutions. What makes the LIMI model special is the linearity of the frozen drift. In that case the interval-to-interval dynamics reduces to a random affine recursion, allowing the machinery of random affine maps to be applied. This additional algebraic structure makes it possible to obtain explicit results for the stationary density and to identify its support, singularities, and tail properties in closed form.

\subsubsection{Mapping the LIMI Model to a Random Affine Recursion
\label{sec:affine_recursion_subsec}} 
The frozen-noise representation becomes particularly transparent in the LIMI model because the dynamics within each renewal interval is linear. The continuous-time process can therefore be reduced exactly to a discrete-time recursion between successive renewal events.

Within a laminar interval during which the noise
remains frozen at the value $\xi$, Eq.~\eqref{eq:worked_sde} reduces to the linear equation
\begin{equation} 
\dot x = -\gamma x + \epsilon\xi(1+\beta x) =
-\big(\gamma-\epsilon\beta\xi\big)x + \epsilon\xi. 
\end{equation} 
The dynamics can therefore
be written as 
\begin{equation} 
\dot x = -\lambda(\xi)\big[x-x^*(\xi)\big], \qquad
\lambda(\xi):=\gamma-\epsilon\beta\xi, \qquad 
x^*(\xi):=\frac{\epsilon\xi}{\lambda(\xi)},
\label{eq:frozen_flow} 
\end{equation}
 where $\lambda(\xi)$ is an effective relaxation rate
and $x^*(\xi)$ denotes the corresponding frozen fixed point. 

Whenever $\epsilon\beta\xi>\gamma$, the effective relaxation rate becomes negative and the
frozen fixed point turns repulsive. This is precisely the local instability responsible for
the Kesten-type heavy tails discussed later in Section~\ref{sec:kesten_subsec}.

Integrating Eq.~\eqref{eq:frozen_flow} yields \(
x(t)=x^*+[x(0)-x^*]e^{-\lambda t} \). Sampling the process at successive renewal times
$t_k$, the evolution during the $k$-th epoch, characterized by duration $\theta_k$, noise
value $\xi_k$, and initial condition $x_k$, is described exactly by 
\begin{equation} 
x_{k+1}
= x^*(\xi_k) + \big[x_k-x^*(\xi_k)\big]e^{-\lambda(\xi_k)\theta_k} = A_k x_k + B_k,
\end{equation} 
with 
\begin{equation} 
A_k = e^{-\lambda(\xi_k)\theta_k}, \qquad B_k =
x^*(\xi_k)\big[1-e^{-\lambda(\xi_k)\theta_k}\big]. 
\label{eq:affine_recursion}
\end{equation} 
The pairs $(\xi_k,\theta_k)$ are independently drawn from $p(\xi)$ and
$\psi(\theta)$, respectively. 

The continuous stochastic dynamics has thus been reduced exactly to the iteration of random affine maps.

Equation~\eqref{eq:affine_recursion} belongs to the class of random affine recursions originally studied by Kesten~\cite{Kesten1973}. Although $A_k$ and $B_k$ are statistically dependent within a given epoch, being generated by the same pair $(\xi_k,\theta_k)$, successive pairs $(A_k,B_k)$ are independent and identically distributed. Consequently, the standard theory of random affine maps applies. In particular, under the usual integrability conditions, a unique invariant measure exists whenever $\mathbb E[\log|A|]<0.$

 Equations~\eqref{eq:affine_recursion} and~\eqref{eq:frozen_series} are the pathwise and distributional manifestations of the same renewal structure. Both arise directly from the piecewise-constant nature of the noise. The affine recursion propagates a single realization exactly from renewal event to renewal event, whereas Eq.~\eqref{eq:frozen_series} describes the evolution of the corresponding probability law after averaging over the renewal ensemble.  Unlike the former,
Eq.~\eqref{eq:frozen_series} contains the survival factor $\Psi(\theta_{N+1})$ associated with
the incomplete final epoch. Apart from this term, neither representation requires
assumptions on $\psi(\theta)$ beyond the renewal property itself.
 
The affine structure is therefore a special consequence of the linear LIMI dynamics. For nonlinear drifts or nonlinear coupling functions, the renewal mapping generally becomes a non-affine iterated function system. The extent to which the present construction survives beyond the affine case is discussed in Section~\ref{sec:beyond_cam}.

\subsubsection{Compact Support Under Bounded Jumps}

When $\lambda(\xi)>0$ throughout the support of $p(\xi)$, corresponding to the uniform
stability condition $\epsilon\beta\max|\xi|<\gamma$ of Eq.~\eqref{eq:uniform_criterion},
the multiplier satisfies $A_k\in(0,1)$ for every realization and the affine recursion is
uniformly contractive. Every frozen map therefore sends the interval bounded by the extrema
of the fixed points $x^*(\xi)$ into itself.

As a consequence, the stationary measure is confined to the compact interval

\begin{equation} \mathrm{supp}\,P_{\rm st} = \big[\min_\xi x^*(\xi),\,\max_\xi
x^*(\xi)\big]. \label{eq:compact_support} \end{equation}

This support depends only on the frozen fixed points and therefore only on the range of the
jump distribution. It is not determined by the memory kernel and cannot be inferred from any
finite collection of cumulants.

For dichotomous jumps with parameters $\gamma=\tau=1$, $\epsilon=0.5$, $\beta=0.3$, and
$a=1$, Eq.~\eqref{eq:compact_support} gives

\begin{equation} \left[ -\frac{\epsilon a}{\gamma+\epsilon\beta a}, \, \frac{\epsilon
a}{\gamma-\epsilon\beta a} \right] = [-0.4348,\,0.5882]. \end{equation}

The existence of a compact support provides a concrete illustration of the distinction
emphasized throughout Sections~\ref{sec:fpe_emergence} and \ref{sec:two_routes}. The support
is a property of the frozen-noise dynamics, not of the reduced kernel. Consequently,
reducing the closure parameter $\lambda$ improves rates and low-order moments but leaves the
support unchanged.

This behaviour differs fundamentally from the Gaussian white-noise limit, where the
stationary density develops algebraic tails extending over the whole real axis. Under step
renewal noise with bounded jumps, the invariant measure remains confined to a finite
interval and any singular or algebraic behaviour can only occur near its boundaries.

From this perspective, the heavy tails encountered in Gaussian LIMI/CAM models should not be
attributed solely to the multiplicative coupling. They arise because the driving process
itself has unbounded support. Multiplicative amplification controls how the stationary
measure is reshaped, whereas the existence of an infinite support ultimately originates from
the possibility of arbitrarily large noise realizations.

\subsubsection{Analytical Singularities and Edge Exponents}

For discrete jump distributions, the frozen-noise representation allows not only the support but also the singular structure of the stationary density to be obtained analytically. The resulting exponents are determined entirely by local properties of the frozen dynamics. 

Let $p(\xi)$ be supported on the
discrete set $\{\xi_i\}$ with associated probabilities $\{w_i\}$. Denoting by
$P_i(x):=P_{\rm st}(x,\xi=\xi_i)$ the stationary joint density conditioned on
the noise state $\xi_i$, the stationary marginal density is given by
$P_{\rm st}(x)=\sum_i P_i(x)$.

For exponentially distributed waiting times, the process $(x,\xi)$ is
Markovian. The functions $P_i(x)$ therefore satisfy the stationary forward
Kolmogorov equations
\begin{equation}
\frac{\mathrm d}{\mathrm dx}
\big[v_i(x)P_i(x)\big]
=
\frac1\tau
\Big[
w_i\sum_j P_j(x)-P_i(x)
\Big],
\qquad
v_i(x)
=
-\lambda_i\big(x-x_i^*\big),
\label{eq:stationary_system}
\end{equation}
which balance deterministic transport against stochastic transitions between
noise states.
 Equation~\eqref{eq:stationary_system} reveals that singularities can arise only at the frozen fixed points, where the deterministic velocity $v_i(x)$ vanishes.

Summing Eq.~\eqref{eq:stationary_system} over $i$ eliminates the
transition terms and yields
$\mathrm d/\mathrm dx\sum_i v_iP_i=0$. Imposing a zero-flux boundary condition
then gives

\begin{equation}
\sum_i v_i(x)P_i(x)=0.
\end{equation}

Near a fixed point $x_i^*$ the velocity field vanishes linearly, $v_i(x)\simeq-\lambda_i(x-x_i^*)$, causing the $i$-th component  $P_i(x)$ to dominate the balance in Eq.~\eqref{eq:stationary_system}. Matching the leading singular terms then gives
\begin{equation}
\boxed{
\sigma_i
=
\frac{1-w_i}{\lambda_i\tau}
-1
}
\label{eq:edge_exponents}
\end{equation}
for the local exponent governing the behavior of $P_{\rm st}(x)$ near
$x_i^*$.
The exponent depends only on two quantities: 
the local contraction rate $\lambda_i$ and the probability $w_i$ of remaining in the corresponding
noise state. 
Neither the global structure of the density nor the behavior of the other branches enters explicitly.

Equation~\eqref{eq:edge_exponents} applies to every fixed point $x_i^*$,
including those located in the interior of the support. At interior points the
singularity is two-sided, whereas at the outer boundaries the density is
defined only on one side. The singularity corresponds to an integrable
divergence whenever

\begin{equation}
\lambda_i\tau>1-w_i.
\label{eq:divergence_criterion}
\end{equation}

This criterion admits a simple interpretation. The quantity $\lambda_i\tau$
measures the typical relaxation achieved during a single epoch, whereas
$1-w_i$ is the probability of leaving state $i$ at the next renewal event.
Whenever relaxation dominates escape, probability accumulates near the fixed
point, generating a local divergence.

For the three-state distribution
$\xi\in\{-\sqrt2,0,\sqrt2\}$ with weights
$\{1/4,1/2,1/4\}$ and parameters
$\gamma=\tau=1$, $\epsilon=0.5$, and $\beta=0.4$,
Eq.~\eqref{eq:edge_exponents} predicts the value
$\sigma_++1=1.0458$ at the upper boundary. Numerical fits of the cumulative
distribution over shrinking intervals $d<3\times10^{-2}$ and
$d<10^{-2}$ yield 1.0427 and 1.0391, respectively, in good agreement with the
theoretical prediction.
 The agreement confirms that the singular structure of the stationary density is entirely encoded in the frozen dynamics and can be predicted analytically without solving the full invariant measure.

\subsubsection{Closed-Form Solutions and Multi-State Extensions}

For dichotomous noise ($N=2$), the stationary system \eqref{eq:stationary_system} can be
solved explicitly. Imposing the zero-flux condition $v_+P_+ = -v_-P_-$, and integrating
the resulting first-order equation yields
\begin{equation} 
P_{\rm st}(x) \propto (1+\beta x) \big(x^*_+ -
x\big)^{\frac{1}{2\tau\lambda_+}-1} \big(x - x^*_-\big)^{\frac{1}{2\tau\lambda_-}-1}, \qquad
\lambda_\pm = \gamma\mp\epsilon\beta a, 
\label{eq:beta_pdf} 
\end{equation}
namely a modified Beta distribution supported on $[x^*_-,x^*_+]$. The prefactor $(1+\beta
x)$ originates from the multiplicative coupling and reduces to unity in the additive-noise
limit. Equation~\eqref{eq:beta_pdf} therefore provides the exact stationary density of the
dichotomous LIMI model.

The special role of the dichotomous case mirrors the exact closure theorem of
Section~\ref{sec:when_exact}. For $N=2$, the stationary density is available in closed form
and the Boolean hierarchy terminates identically at second order. The same Bernoulli law is
therefore distinguished from both the dynamical and the stationary-distribution viewpoints.

For $N\ge3$, the situation becomes more intricate. Equation \eqref{eq:stationary_system}
defines an $N$-dimensional Fuchsian system with regular singular points at the frozen fixed
points $x_i^*$ and at infinity. The singularity structure remains completely explicit, but
the global solution is generally no longer elementary. In particular, the case $N=3$ reduces
to an equation of Heun type, while a continuous jump distribution $p(\xi)$ leads to a
Fredholm integral equation.

Despite the loss of a closed-form density, the geometric information obtained in the
previous subsections survives unchanged. The support formula \eqref{eq:compact_support} and
the exponent formula \eqref{eq:edge_exponents} remain exact for arbitrary discrete jump laws
and continue to characterize the support and singular structure of the stationary measure.

The three-state distribution considered above provides the simplest non-trivial example
beyond the Bernoulli case. It is the smallest discrete law for which the fourth Boolean
cumulant does not vanish, yielding $b_4=1$. It therefore constitutes the simplest setting in which
the effects of non-closure can be investigated directly at the level of the stationary
density, providing a natural bridge between the exactly solvable Bernoulli case and the
generic multi-state situation.

\subsubsection{Power-Law Tails and the Kesten Exponent
\label{sec:kesten_subsec}}

The compact-support scenario of the previous subsection relies on the uniform stability condition
$\lambda(\xi)>0$ throughout the support of the jump distribution. Once this condition is violated,
the stationary measure changes qualitatively.

Whenever the noise amplitude is sufficiently large that
$\lambda(\xi)<0$, i.e.\
$\epsilon\beta\xi>\gamma$, the corresponding frozen fixed point becomes
repulsive. In this regime, the random multiplier $A_k$ exceeds unity with
non-zero probability, and the affine recursion is no longer uniformly
contractive.

Under these conditions, Kesten's theorem~\cite{Kesten1973} implies that the
stationary density develops a power-law tail of the form
$P_{\rm st}(x)\sim x^{-1-\kappa}$, where $\kappa>0$ is the unique positive
solution of

\begin{equation}
\mathbb{E}[A^\kappa]=1.
\end{equation}

For exponentially distributed waiting times,

\begin{equation}
\mathbb{E}[A^\kappa\mid\xi]
=
\frac{1}{1+\kappa\lambda(\xi)\tau},
\end{equation}

which leads to the characteristic equation

\begin{empheq}[box=\fbox]{equation}
\left\langle
\frac{1}{1+\kappa\,\lambda(\xi)\,\tau}
\right\rangle_\xi
=
1,
\qquad
\lambda(\xi)
=
\gamma-\epsilon\beta\xi.
\label{eq:kesten}
\end{empheq}
Equation~\eqref{eq:kesten} shows that the tail exponent is determined entirely by the frozen dynamics. The memory kernel no longer enters explicitly; only the distribution of the local contraction rates $\lambda(\xi)$ matters. A positive solution exists if and only if the system admits locally unstable epochs, namely if $\lambda(\xi)<0$ on a set of non-zero probability.  Indeed, the left-hand side diverges as
$\kappa\rightarrow 1/(\tau|\lambda_{\min}|)$ whenever
$\lambda_{\min}<0$.

Table~\ref{tab:kesten_comparison} compares the theoretical prediction from
Eq.~\eqref{eq:kesten} with numerical estimates obtained using Hill's method for
dichotomous noise with $a=1$ and $\gamma=\tau=1$.

\begin{table}[h!]
\centering
\renewcommand{\arraystretch}{1.2}
\begin{tabular}{cccc}
\hline
$\beta$ & $\epsilon$ &
Analytical exponent $\kappa$ from Eq.~\eqref{eq:kesten} &
Hill estimate \\
\hline
$1.5$ & $1.0$ & $0.800$ & $0.823$ \\
$2.0$ & $1.0$ & $0.333$ & $0.329$ \\
$1.5$ & $1.2$ & $0.446$ & $0.433$ \\
\hline
\end{tabular}
\caption{Comparison between the analytical Kesten exponent obtained from
Eq.~\eqref{eq:kesten} and numerical Hill estimates.}
\label{tab:kesten_comparison}
\end{table}

The agreement confirms that the tail exponent is controlled by the affine multipliers alone and is accurately predicted by the Kesten condition \eqref{eq:kesten}.

Equation~\eqref{eq:kesten} applies to both discrete and continuous jump distributions $p(\xi)$.
The condition $\epsilon\beta\max\xi>\gamma$ coincides with the loss of uniform stability expressed
by Eq.~\eqref{eq:uniform_criterion} and is therefore equivalent to three apparently different
statements: loss of uniform contractivity of the affine recursion, existence of a positive
Kesten exponent, and the appearance of an imaginary component in the Boolean generator $\eta(u)$
discussed in Section~\ref{sec:heavy}. These are simply different manifestations of the same
underlying dynamical instability.
These equivalent criteria identify the onset of the
heavy-tailed regime and the breakdown of the globally stable fixed-point
structure.

If $p(\xi)$ has unbounded support, the event $\lambda(\xi)<0$ occurs with
non-zero probability independently of the asymptotic form of the tails.
Consequently, compact support is lost and the system generically enters the
Kesten regime. In that case, the tail exponent is determined from the more
general condition
\begin{equation}
\mathbb{E}
\!\left[
e^{-\kappa\lambda(\xi)\theta}
\right]
=
1,
\end{equation}
provided the expectation exists. 
This requirement is not automatic. For sufficiently broad jump distributions, the existence of the Kesten exponent becomes a question of integrability of the joint renewal measure rather than of the affine recursion alone.
The onset of heavy tails is therefore controlled by the frozen dynamics, whereas their precise exponent depends on the statistical properties of the renewal process through Eq.~\eqref{eq:kesten}.

\section{Numerical Validation of Stationary-Density Predictions \label{sec:numres}} This
section tests the exact analytical predictions derived in Section~\ref{sec:stationary}.
Unlike the weak-coupling closure approximations examined in Section~\ref{sec:fpe_test}, the
results of Section~\ref{sec:stationary} are non-perturbative and follow directly from the
frozen-noise representation. The objective is therefore not to assess an approximation, but
to verify the predicted support boundaries, singularities, and tail behaviour of the
stationary measure.

Two discrete jump distributions are considered in order to probe different structural
features of the stationary state. The first is a symmetric three-state distribution,
representing the simplest jump law with non-vanishing fourth Boolean cumulant ($b_4\neq0$).
The second is an asymmetric four-state distribution, included to test the generality of the
analytical predictions beyond symmetric cases.

Throughout this section, the system parameters are those used in Section~\ref{sec:atoms_cam},
\[ \gamma=\tau=1, \qquad \epsilon=0.5, \qquad \beta=0.4. \] Each trajectory consists of
$8\times10^6$ renewal steps following an equilibration stage of $2\times10^5$ steps. The
jump distributions are

\begin{equation} 
\begin{aligned} 
\text{(a)}\quad & \xi \in \{-\sqrt2,\,0,\,\sqrt2\}, &&
w=\{\tfrac14,\tfrac12,\tfrac14\}, \\ \text{(b)}\quad & \xi \in \{-1.3,-0.4,0.4,1.3\}, &&
w=\{0.2,0.3,0.3,0.2\}. 
\end{aligned} 
\end{equation}

Both distributions are centered, $\langle\xi\rangle=0.$ The corresponding Boolean
cumulants are $(b_2,b_4)=(1,1)$ for law~(a) and $(b_2,b_4)=(0.772,0.562)$ for
law~(b).

A distinction should be made between the theoretical and numerical sampling procedures. The
analytical results of Section~\ref{sec:stationary} describe the continuous-time stationary
measure, whereas the simulations naturally generate a sequence of states observed at renewal
events through the affine recursion \eqref{eq:affine_recursion}.

For exponentially distributed waiting times, these two sampling procedures are equivalent
because of the PASTA property (Poisson Arrivals See Time Averages): the distribution
observed at renewal epochs coincides with the continuous-time stationary distribution. The
numerical histograms can therefore be compared directly with the analytical predictions
derived from Eq.~\eqref{eq:stationary_system}.

This equivalence is specific to the memoryless case. For non-exponential waiting-time
distributions, renewal events are sampled with a length bias and the distribution observed
at renewal times generally differs from the continuous-time stationary measure. In that
setting, the analytical framework developed in Section~\ref{sec:stationary} would need to be
supplemented by the appropriate age-weighted sampling prescription.

\subsection{Numerical Accuracy and Protocol}

Equilibration was assessed by comparing simulations initialized at $x_0=0$
with simulations started from the extreme boundaries of the support. Both sets
of initial conditions converged to identical stationary profiles within
statistical uncertainty, indicating that the sampling procedure reached the
same invariant measure.
Error bars were estimated by partitioning each dataset into ten independent
blocks. In all figures, the resulting uncertainties are smaller than the symbol
size. Consequently, residual discrepancies between measured and predicted
exponents are dominated by systematic effects, primarily the contribution of
the regular background density within the fitting window, rather than by
sampling fluctuations.

Since the numerical scheme integrates the linear dynamics exactly within each
renewal interval, no time-discretization error is introduced. This allows the
density to be sampled arbitrarily close to singular points and permits
distances as small as $10^{-4}$ from a frozen fixed point to be resolved,
avoiding the step-size limitations of conventional numerical integrators.

Special care is required in the unstable regime considered in
Figs.~\ref{fig:stationary_exponents}(b) and
\ref{fig:stationary_exponents}(d), where the invariant measure develops a
power-law tail and the variance diverges. For these cases, trajectories consisted of
$3\times10^6$ renewal steps. The tail exponent was estimated using
Hill's method applied to the upper $2\%$ of positive values, choosing the
largest threshold for which the estimate remained stable under successive
halving. This provides a numerical counterpart to the analytical prediction of
the Kesten exponent derived in Section~\ref{sec:kesten_subsec}.

\subsection{Support Boundaries and Singular Structure}

The first prediction tested is the exact support formula
\eqref{eq:compact_support}. According to the frozen-noise theory, the support
of the invariant measure is determined solely by the extreme frozen fixed
points and is therefore independent of any kernel approximation or cumulant
truncation.

Table~\ref{tab:support_comparison} compares the theoretical prediction with the
numerical observations. Agreement is obtained to all reported significant
figures, confirming that the support is exactly given by the interval spanned
by the extreme frozen fixed points.
\begin{table}[h!]
\centering
\renewcommand{\arraystretch}{1.2}
\begin{tabular}{lcc}
\hline
Jump Law & Predicted Support Bounds & Observed Numerical Support \\
\hline
Three-valued (a) & $[-0.5512,\ 0.9860]$ & $[-0.5512,\ 0.9860]$ \\
Four-valued (b)  & $[-0.5159,\ 0.8784]$ & $[-0.5159,\ 0.8784]$ \\
\hline
\end{tabular}
\caption{Comparison between the support predicted by
Eq.~\eqref{eq:compact_support} and the numerical observations.}
\label{tab:support_comparison}
\end{table}

Figure~\ref{fig:stationary_density} compares the simulated stationary
densities with the asymptotic scaling laws
$P_{\rm st}(x)\sim |x-x_i^*|^{\sigma_i}$
predicted by Eq.~\eqref{eq:edge_exponents}. The results confirm that every
atom of the jump distribution generates a corresponding singular point at its
frozen fixed point $x_i^*$.

Singularities associated with interior fixed points are two-sided, whereas
those located at the boundaries are one-sided. The three-state distribution
produces a central singularity at $x=0$, associated with the noise state
$\xi=0$, while the four-state distribution generates two distinct interior
singularities.

These observations illustrate an important feature of the frozen-noise
description: the singular structure of the stationary density reflects the
geometry of the jump distribution itself. Adding noise states produces new
fixed points and therefore new singularities, in agreement with the atom
hierarchy discussed in Section~\ref{sec:atoms_cam}.

The selection criterion for integrable divergences,
Eq.~\eqref{eq:divergence_criterion}, is verified as well. A divergent peak
($\sigma_i<0$) occurs only when
$\lambda_i\tau>1-w_i$.
For both jump laws, the uppermost fixed point does not satisfy this condition,
yielding positive exponents,
$\sigma=0.046$ for law~(a) and
$\sigma=0.081$ for law~(b).
The density therefore vanishes at the upper boundary rather than diverging.

The results demonstrate that the number of singular points is determined by the
number of discrete noise states, whereas the number of divergent peaks is
controlled by the local balance between relaxation and escape embodied in
Eq.~\eqref{eq:divergence_criterion}.

\subsection{Validation of Singular Exponents}

The support boundaries and locations of the singular points are fixed by the
frozen fixed points. The remaining prediction of the theory concerns the local
exponents governing the behavior of the density near those points,
Eq.~\eqref{eq:edge_exponents}.

Exponents were extracted by fitting the cumulative distribution function over
the interval $3\times10^{-3}<d<3\times10^{-2}$ around each divergent fixed
point. The results are summarized in
Table~\ref{tab:exponents_comparison}.

\begin{table}[h!]
\centering
\renewcommand{\arraystretch}{1.2}
\begin{tabular}{llccc}
\hline
Jump Law & Fixed Point $x^*$ &
Predicted $\sigma+1$ &
Measured $\sigma+1$ &
Relative Error \\
\hline
Three-valued & $-0.5512$ (boundary)
& $0.5846$ & $0.5928$ & $1.4\%$ \\
             & $\ \ 0.0000$ (interior)
& $0.5000$ & $0.4973$ & $0.5\%$ \\
Four-valued  & $-0.5159$ (boundary)
& $0.6349$ & $0.6452$ & $1.6\%$ \\
             & $-0.1852$ (interior)
& $0.6481$ & $0.6618$ & $2.1\%$ \\
             & $+0.2174$ (interior)
& $0.7609$ & $0.7332$ & $3.6\%$ \\
\hline
\end{tabular}
\caption{Comparison between the analytical prediction
Eq.~\eqref{eq:edge_exponents} and numerical estimates obtained from cumulative
distribution fits.}
\label{tab:exponents_comparison}
\end{table}

Agreement is obtained at the percent level for every singularity considered,
including both boundary and interior fixed points. The largest discrepancy
remains below $4\%$, despite the finite fitting window and the presence of a
regular background contribution to the density. The results therefore confirm
that Eq.~\eqref{eq:edge_exponents} captures the local structure of the
stationary measure with high accuracy.

The interior fixed point of the three-state distribution provides a
particularly stringent test of the theory. For $\xi_2=0$, one has
$\lambda_2=\gamma$ and $w_2=1/2$, yielding the parameter-free prediction
$\sigma_2=-\frac12$
when $\gamma\tau=1$.
Estimates obtained independently from the left and right sides of the
singularity, and over progressively shrinking fitting windows, cluster around
the predicted value. This confirms both the exponent itself and the expected
left-right symmetry of the interior singularity.

Taken together with the support verification reported in
Table~\ref{tab:support_comparison}, these measurements validate the complete
local description of the stationary density provided by the frozen-noise
theory. The support boundaries, the locations of the singular points, and the
associated singular exponents are all correctly predicted from the frozen
fixed-point structure.

\subsection{Power-Law Tails in the Unstable Regime}

The final prediction of Section~\ref{sec:kesten_subsec} concerns the unstable regime, where the loss of uniform contractivity leads to Kesten-type power-law tails. When $\epsilon\beta\max|\xi|>\gamma$, the stationary distribution is no longer compactly concentrated around stable frozen fixed points and develops an algebraic tail.
Equation~\eqref{eq:kesten} predicts a tail exponent $\kappa$ satisfying

\[
\left\langle
\frac{1}{1+\kappa\lambda(\xi)\tau}
\right\rangle_\xi
=
1.
\]

Table~\ref{tab:kesten_comparison_num} compares the corresponding analytical
predictions with Hill estimates obtained from the upper $2\%$ of sampled
positive values.

\begin{table}[h!]
\centering
\renewcommand{\arraystretch}{1.2}
\begin{tabular}{llccc}
\hline
Jump Law & $(\beta,\epsilon)$ &
$\lambda_{\min}$ &
Analytical $\kappa$ &
Hill Estimate \\
\hline
Three-valued & $(1.2,1.0)$ & $-0.697$ & $0.8933$ & $0.8847$ \\
             & $(1.5,1.0)$ & $-1.121$ & $0.5000$ & $0.5002$ \\
             & $(2.0,1.0)$ & $-1.828$ & $0.2612$ & $0.2581$ \\
             & $(1.5,1.2)$ & $-1.546$ & $0.3288$ & $0.3310$ \\
Four-valued  & $(1.2,1.0)$ & $-0.560$ & $1.2488$ & $1.1955$ \\
             & $(1.5,1.0)$ & $-0.950$ & $0.6710$ & $0.6677$ \\
             & $(1.8,1.0)$ & $-1.340$ & $0.4349$ & $0.4328$ \\
             & $(1.5,1.3)$ & $-1.535$ & $0.3636$ & $0.3631$ \\
\hline
\end{tabular}
\caption{Comparison between the Kesten exponent predicted by
Eq.~\eqref{eq:kesten} and numerical Hill estimates.}
\label{tab:kesten_comparison_num}
\end{table}

Agreement is typically better than $1.2\%$ despite the substantial variation of $\kappa$ across the parameter range considered. The largest discrepancy, about $4.3\%$, occurs for the largest exponent ($\kappa\simeq1.25$), corresponding to the lightest tail. This behaviour is consistent with the well-known finite-sample bias of Hill estimators when the asymptotic tail region occupies only a small fraction of the sampled distribution.
 Representative comparisons are shown in
Figs.~\ref{fig:stationary_exponents}(b) and
\ref{fig:stationary_exponents}(d).

Taken together, the numerical evidence validates the entire stationary-density
theory developed in Section~\ref{sec:stationary}. Support boundaries agree with
Eq.~\eqref{eq:compact_support}, singular points obey the exponent formula
\eqref{eq:edge_exponents}, and power-law tails are governed by the Kesten
condition \eqref{eq:kesten}. The frozen-noise approach therefore captures both
the local and global structure of the invariant measure.

Equally importantly, the results clarify the role of higher Boolean cumulants.
Their effect is structural rather than perturbative: they modify the support,
the number and location of singular points, and ultimately the tail properties
of the stationary density. These changes cannot be reproduced by adding
cumulant orders within the two-state closure, because the latter remains the
exact stationary measure of the Bernoulli process associated with
$b_4=b_6=\cdots=0$.

The numerical evidence thus illustrates in concrete terms the distinction
established in Section~\ref{sec:two_routes}. Boolean-cumulant expansions
provide a systematic hierarchy in kernel space and accurately describe
relaxation rates, low-order moments, and closure errors. The geometry of the
stationary density, by contrast, is controlled by the frozen-noise dynamics and
by the structure of the jump distribution itself. Support boundaries,
singularities, and Kesten tails are therefore naturally analyzed through the
frozen-noise representation rather than through truncations of the kernel
expansion.

\begin{figure}[t]
\centering
\includegraphics[width=\textwidth]{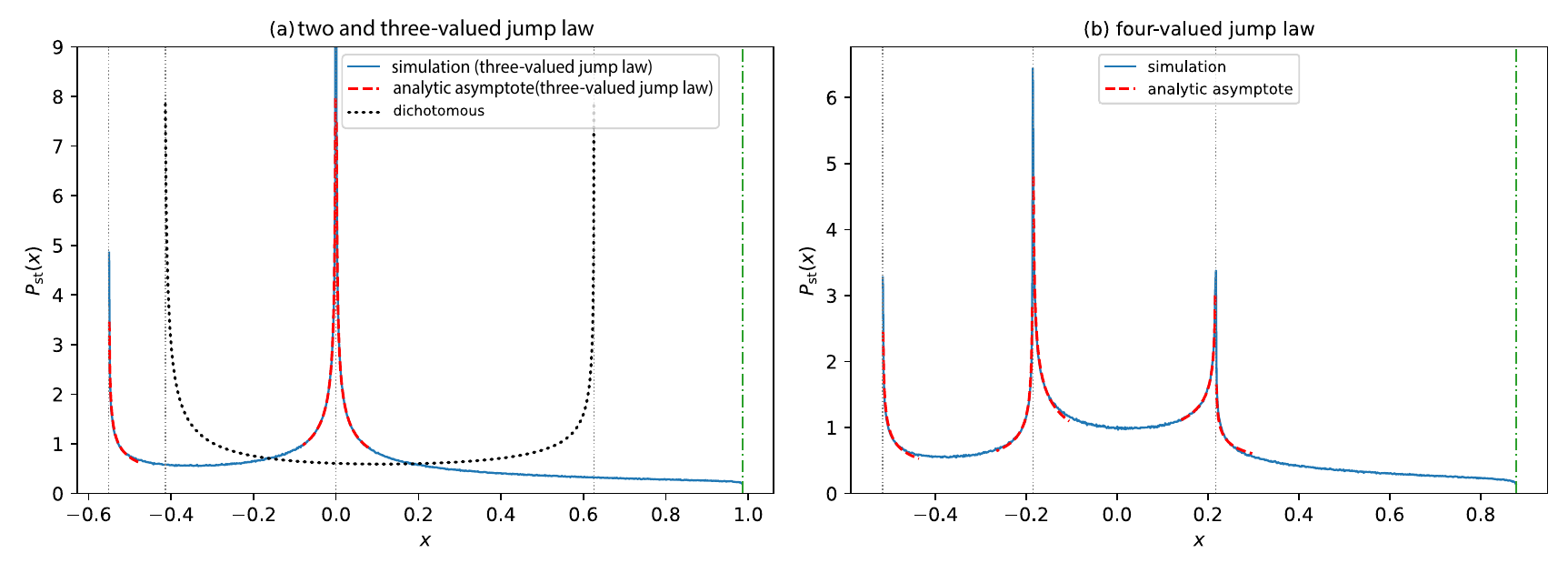}
\caption{Stationary density of the LIMI model \eqref{eq:worked_sde} driven by step renewal noise ($\gamma=\tau=1$, $\epsilon=0.5$, $\beta=0.4$, $8\times10^{6}$ renewal steps). 
(a) Three-valued jump distribution $\xi \in \{-\sqrt2, 0, \sqrt2\}$ with weights $\{1/4, 1/2, 1/4\}$. Frozen fixed points lie at $x^* = -0.5512, 0, 0.9860$ with exponents $\sigma = -0.415, -0.500, +0.046$. 
(b) Four-valued distribution $\xi \in \{-1.3, -0.4, 0.4, 1.3\}$ with weights $\{0.2, 0.3, 0.3, 0.2\}$. Fixed points lie at $x^* = -0.5159, -0.1852, 0.2174, 0.8784$ with exponents $\sigma = -0.365, -0.352, -0.239, +0.081$. 
Dotted vertical lines indicate divergent points; dash-dotted lines mark non-divergent boundary points ($\sigma > 0$). Red dashed curves show the asymptotic scalings $|x-x^*_i|^{\sigma_i}$ from Eq.~\eqref{eq:edge_exponents} with fitted prefactors. The dotted curve in (a) depicts the dichotomous baseline solution, Eq.~\eqref{eq:beta_pdf}.}
\label{fig:stationary_density}
\end{figure}

\begin{figure}[t]
\centering
\includegraphics[width=\textwidth]{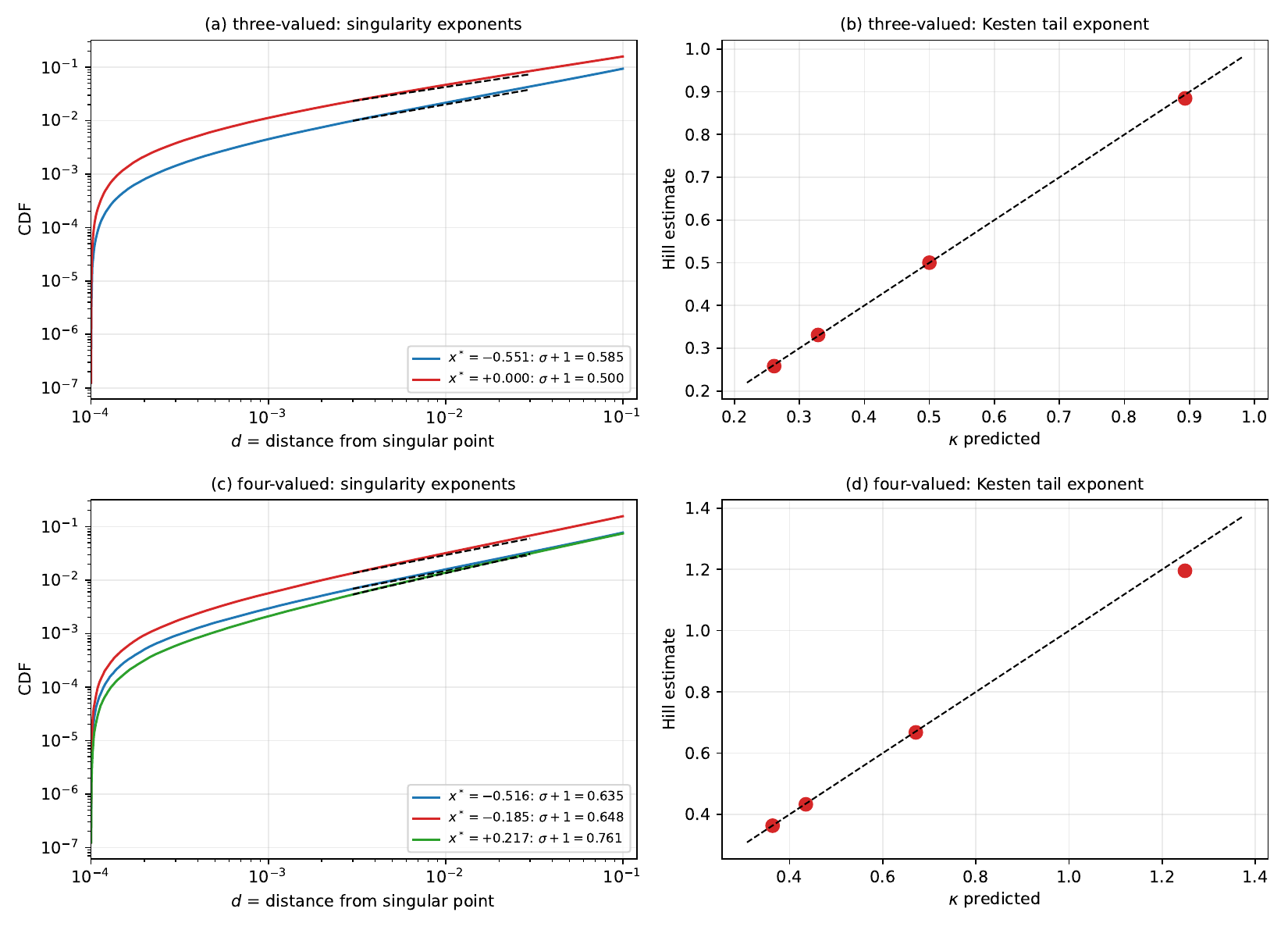}
\caption{(a),(c) Log-log plots of the cumulative distribution near divergent fixed points for the three- and four-valued jump distributions. Black dashed lines indicate the theoretical slopes $\sigma_i+1$ from Eq.~\eqref{eq:edge_exponents}. 
(b),(d) Comparison between numerical Hill estimates and the theoretical Kesten exponent $\kappa$ from Eq.~\eqref{eq:kesten} in the unstable regime ($\epsilon\beta\max|\xi| > \gamma$). Dashed lines represent exact equality. Trajectories use $3\times10^{6}$ steps, with tail indices estimated over the upper $2\%$ of positive state values.}
\label{fig:stationary_exponents}
\end{figure}

\section{Scope and Generalizability Beyond Linear Dynamics
\label{sec:beyond_cam}}

The results derived in the previous sections were obtained for the LIMI/CAM
model, characterized by the linear drift $C(x)=\gamma x$ and the linear
coupling $I(x)=1+\beta x$. It is therefore natural to ask which conclusions are
structural and survive beyond the linear setting, and which instead rely on the
special affine form of the frozen dynamics.

The distinction developed throughout this work between kernel-space and
density-space descriptions is particularly useful in this regard. Some results
follow directly from the renewal structure of the driving process and therefore
remain valid for arbitrary drift and coupling functions. Others exploit the
fact that the LIMI model reduces, under frozen noise, to a random affine
recursion and are correspondingly more restrictive.

\subsection{Model-Independent Properties}

Several results depend exclusively on the statistical structure of the renewal
noise and therefore remain valid for arbitrary drift and coupling functions
$C(x)$ and $I(x)$.

\begin{enumerate}

\item \emph{Exactness of Dichotomous Noise.}

Theorem~\ref{thm:uniqueness} establishes that the memory-kernel expansion
terminates at second order if and only if the jump distribution is a symmetric
Bernoulli law. This property is determined entirely by the Boolean generator
$\eta(u)$ of the driving process and is therefore independent of the drift, the
coupling, and the associated spatial operators.

\item \emph{Operator Error Structure.}

The exact expression for the leading discarded contribution,
\[
\mathcal D=\frac{b_4}{b_2}\mathcal X^2,
\]
holds for arbitrary systems. The separation between the statistical factor
$b_4/b_2$ and the dynamical factor $\mathcal X^2$ is therefore a structural feature of
the renewal kernel itself rather than a consequence of the LIMI model. In
particular, the ratio $b_4/b_2$ remains a model-independent measure of the
relative closure error.

\item \emph{Fokker--Planck Structure.}

For general drift and coupling functions $C(x)$ and $I(x)$, the interaction
operator $\tilde{\mathcal L}_I$ remains a first-order differential operator
through the Lie-derivative representation~\cite{bJMP59}.
Consequently, localization of the second-cumulant kernel again produces a
second-order differential operator, yielding a well-defined Fokker--Planck
equation even for nonlinear systems.

What changes from one model to another is not the existence of the reduction
itself, but the quantitative accuracy with which it reproduces the underlying
dynamics. The distinction between the weak-coupling parameter $\lambda$ and the
density-control parameter $\gamma\tau$ must therefore be regarded as a
structural consequence of the renewal framework rather than as a peculiarity of
the LIMI model.

\end{enumerate}

\subsection{Conditional Generalizations}

Some geometric properties of the stationary measure extend far beyond linear
systems, but only under additional dynamical assumptions.

The compact-support property,
Eq.~\eqref{eq:compact_support}, provides a representative example. Consider a
general frozen dynamics
$\dot x=-C(x)+\epsilon\xi I(x)$.
Whenever every frozen realization possesses a stable attracting fixed point and
all such fixed points are bounded, trajectories remain confined to the region
spanned by their extrema. In this sense, compact support is not a consequence
of linearity but of the existence of a bounded family of attracting frozen
states.

As an illustration, consider the bistable system
\[\dot x
=
1.2x-x^3+\epsilon\xi,\]
driven by additive uniform noise. For $\epsilon=0.6$, numerical integration
using a fourth-order Runge--Kutta scheme yields the support
$[-1.3947,\,1.3947]$, in agreement with the extrema of the frozen fixed points
to four significant figures.

The situation becomes more subtle in multistable systems. For the same model,
but with a smaller noise amplitude ($\epsilon=0.3$), the dynamics remains
confined within a single potential well and the observed support is restricted
to approximately $[-1.269,\,-0.640]$. Although the frozen fixed points still
define a larger admissible region, the associated invariant measure explores
only one connected component of phase space because barrier-crossing events are
effectively absent.

The essential requirement is therefore not merely the existence of frozen
attractors, but also sufficient exploration of the dynamically accessible
phase space. For multistable systems,
Eq.~\eqref{eq:compact_support} characterizes the support of the full invariant
measure only when the renewal forcing generates transitions among all relevant
basins on the time scale of observation. Otherwise, the same theory applies
separately within each dynamically isolated basin.

This distinction is typical of nonlinear systems. The geometric predictions of
the frozen-noise approach remain valid, but their realization depends on the
ergodic properties of the dynamics and on the ability of the noise to connect
the different attracting regions of phase space.

\subsection{Limitations of the Linear Theory}

Several quantitative results derived for the LIMI model rely directly on the
linearity of the frozen dynamics and therefore do not extend unchanged to
nonlinear systems.

\begin{itemize}

\item \emph{Drift-Independent Relaxation Rates.}

The invariance of the rate correction $\mu$
[Eq.~\eqref{eq:mu_indep}] is a consequence of the fact that
$\mathcal L_a$ acts proportionally to the identity on the one-dimensional
subspace associated with $\langle x\rangle$. For nonlinear drifts, the local
relaxation rate $|C'(x)|$ becomes state dependent. In general, neither a
single parameter $\gamma$ nor a globally defined relaxation exponent can then
be introduced, and the cancellation mechanism responsible for the
$\gamma$-independence of the LIMI result need no longer hold.

\item \emph{Affine Recursions and Kesten Tails.}

For the LIMI model, sampling at renewal times reduces the dynamics exactly to
the affine recursion \eqref{eq:affine_recursion}. Nonlinear drift or coupling
functions generally destroy this structure, replacing it with a nonlinear
iterated function system. The frozen-noise description itself remains valid,
but the classical Kesten theory no longer applies directly and explicit
expressions such as Eq.~\eqref{eq:kesten} are not generally available.

\item \emph{Edge Exponents and Closed-Form Densities.}

The modified Beta density \eqref{eq:beta_pdf} and the exponent formula
\eqref{eq:edge_exponents} ultimately derive from the linearity of the frozen
velocity field near each fixed point. In nonlinear systems, local
linearization often provides accurate asymptotic approximations, but exact
closed-form expressions are generally not expected. What survives is the
geometric interpretation: singularities remain associated with frozen
attractors and their local stability properties.

\end{itemize}

Despite these limitations, it is important to note that they concern the
availability of explicit formulas rather than the validity of the underlying
renewal framework. The frozen-noise representation and the resummed kernel
remain exact for arbitrary $C(x)$ and $I(x)$; what is generally lost is the
ability to reduce the resulting dynamics to analytically solvable forms.

\subsection{Bifurcations and Barrier Crossings}

Nonlinear drift fields introduce phenomena that are entirely absent in the
linear LIMI model, most notably bifurcations of the frozen dynamics itself.
These effects provide a natural arena in which the frozen-noise
representation becomes more informative than perturbative approaches.

Consider again the bistable system $\dot x=1.2x-x^3+\epsilon\xi$, whose frozen
fixed points solve
\[
x^3-1.2x=\epsilon\xi .
\]

For a fixed realization of the noise, the number and nature of the stationary
points depend on the value of $\epsilon\xi$. The frozen dynamics possesses
three fixed points whenever
\[
|\epsilon\xi|
<
2(0.4)^{3/2}
\simeq 0.506 ,
\]
whereas only a single fixed point survives beyond this threshold.

For sufficiently large noise amplitudes, such as
$\epsilon\max|\xi|=1.04$, certain realizations remove one of the potential
wells altogether. During such epochs, the deterministic dynamics no longer has
to cross a barrier; instead, the barrier temporarily ceases to exist and the
trajectory relaxes toward the remaining attractor. The resulting transition
mechanism is therefore qualitatively different from conventional
thermally activated escape, which relies on fluctuations overcoming a
persistent barrier.

This observation illustrates a broader advantage of the frozen-noise
perspective. Because the dynamics is analyzed through the family of frozen
flows, changes in the topology of the phase portrait, such as the creation or
annihilation of fixed points, become directly visible at the level of the
renewal epochs themselves.

Moreover, Theorem~\ref{thm:uniqueness} shows that dichotomous renewal noise
admits an exact second-order closure without requiring any weak-noise
assumption. The present framework therefore provides a natural starting point
for the study of strongly non-perturbative barrier-crossing phenomena,
noise-induced bifurcations, and rare-event transitions in nonlinear systems,
where traditional weak-coupling expansions are often difficult to justify.

\section{Conclusions
\label{sec:conclusions}}

This work identifies two exact and complementary descriptions of dynamics
driven by memoryless step-renewal noise. The first is a resummed kernel
representation, naturally adapted to relaxation rates, spectral observables,
moment hierarchies, and closure estimates. The second is a frozen-noise
representation, naturally adapted to stationary densities, support
boundaries, singular structures, and tail behaviour. Together, these two
descriptions provide a unified framework for finite-correlation stochastic
forcing beyond the conventional Gaussian paradigm.

\paragraph{Boolean structure of renewal dynamics.}

The central theoretical result is that memoryless renewal noise is naturally
organized by Boolean rather than classical cumulants. For exponentially
distributed waiting times, the totally time-ordered ($G$-)cumulants coincide
exactly with the Boolean cumulants of the jump distribution, establishing an
explicit connection between renewal processes and Boolean probability theory.
This identification yields an exact resummation of the memory kernel in terms
of the Boolean generator $\eta$, providing a closed spectral description of
the reduced dynamics for arbitrary drift and state-dependent coupling.

The appearance of Boolean cumulants is ultimately rooted in the memoryless
property of the exponential waiting-time distribution. The factorization
\[
\prod_i \phi(d_i)
=
\phi\!\left(\sum_i d_i\right)
\]
is precisely the mechanism that converts interval partitions into Boolean
cumulants and gives rise to the exact kernel structure derived throughout
this work. For non-exponential waiting times this factorization is lost: the
interval-partition support survives and the Boolean cumulants still appear as
block coefficients, but the weights of the partitions no longer reduce to a
single-epoch term (Section~\ref{sec:nonmarkov}).

\paragraph{Closure, Bernoulli noise, and the kernel hierarchy.}

Within the Boolean hierarchy, the symmetric Bernoulli distribution occupies a
distinguished position. Theorem~\ref{thm:uniqueness} shows that the hierarchy
terminates at second order if and only if the jump distribution is symmetric
Bernoulli, yielding the extended telegraph equation
\eqref{MEG_2} without requiring weak coupling, short correlation times, or
other perturbative assumptions.

In this precise algebraic sense, Bernoulli noise plays within the Boolean
kernel hierarchy the same role that Gaussian noise plays within the hierarchy
of classical cumulants: it is the unique distribution for which all higher
cumulants vanish and the associated hierarchy closes exactly.

The analogy should not be pushed further, however. In contrast with Edgeworth
or Gram--Charlier expansions, where cumulants directly organize successive
approximations of a probability density, the Boolean hierarchy acts on the
reduced memory kernel. Truncating the hierarchy therefore approximates the
dynamics rather than the stationary measure itself.

For general jump distributions, the leading closure error is controlled by the
universal combination
\[
\mathcal D=\frac{b_4}{b_2}\mathcal X^2,
\]
which reduces to the scalar estimate $(b_4/b_2)\lambda^2$, verified
quantitatively in the LIMI model. The error therefore factorizes naturally into a
statistical contribution determined by the jump law and a dynamical
contribution determined by the system.

\paragraph{Kernel space versus density space.}

A second major conclusion is that reduced dynamics and stationary densities
belong naturally to different approximation hierarchies.

The Boolean expansion provides a hierarchy in kernel space. It governs
relaxation rates, low-order moments, spectral observables, and closure
errors. The frozen-noise representation instead provides a hierarchy in
density space. There the natural approximation parameter is not the cumulant
order but the number of atoms retained in the jump distribution.

This distinction explains one of the central observations of the present work.
Second-order closures can reproduce rates and moments with remarkable accuracy
while simultaneously failing to reproduce support boundaries, singularities,
or tail behaviour. Such failures are not consequences of poor convergence but
of approximating the wrong object.

The Bernoulli case illustrates this point particularly clearly. The $N=2$
approximation is not the first element of a hierarchy of stationary densities
waiting to be improved by higher cumulants. It is already the exact density of
a different stochastic process, namely the dichotomous renewal process.
Improvement of stationary densities therefore comes from adding atoms rather
than from adding cumulant orders.

The two hierarchies are nevertheless closely related. Replacing the jump law
with its $N$-point Gauss quadrature preserves all correlation functions and
$G$-cumulants of order $n\le2N-1$, for any waiting-time law
(Proposition~\ref{prop:corr_preservation}), and for Poissonian renewal the
Boolean generator of the surrogate is the $[N/N-1]$ Pad\'e approximant of
$\eta$. Kernel and atom hierarchies are therefore two truncations of the same
object at the same order, polynomial and Pad\'e respectively; only the latter
is the kernel of a genuine renewal process, and the Bernoulli case is its
lowest instance.

\paragraph{Stationary measures and extreme events.}

The frozen-noise representation converts the stationary problem into one of
classical probability theory. For the LIMI model, sampling at renewal times
reduces the dynamics exactly to a random affine recursion, yielding explicit
results for support boundaries, singularity exponents, and Kesten tails.

Whenever
\[
\epsilon\beta\max|\xi|<\gamma,
\]
bounded jump distributions generate compactly supported stationary measures.
The support is determined by the extrema of the frozen fixed points, while
each atom of the jump distribution generates a corresponding singular point in
the stationary density. When the stability condition is violated, the system
enters the Kesten regime and develops power-law tails whose exponent is
determined by Eq.~\eqref{eq:kesten}. Numerical simulations confirm all of
these predictions.

These results clarify the role of finite-correlation forcing in the
generation of rare and extreme events. In contrast with white-noise models,
the existence and nature of extreme excursions depend not only on the
variance and correlation time of the forcing but also on the support and
geometry of the jump distribution itself. Bounded jumps may completely exclude
arbitrarily large fluctuations, while unbounded jumps can generate heavy
tails and divergent moments.

\paragraph{Universality, scope, and outlook.}

The exact kernel resummation, the Boolean interpretation of renewal
correlations, the Bernoulli closure theorem, and the operator error structure
are independent of the specific dynamical model considered, as is the
correlation-preservation property of atom surrogates, which moreover holds for
any waiting-time law. These results are properties of the renewal process
itself.

The explicit formulas obtained for the LIMI model rely on the additional
simplification that frozen dynamics reduce to a random affine recursion.
Nevertheless, the frozen-noise representation remains exact for arbitrary
drift and coupling functions and naturally extends to systems with multiple
attractors, barrier crossings, and noise-induced bifurcations.

Several directions remain open. For non-exponential waiting times, preliminary evidence indicates that the
per-block coefficients interpolate between Boolean cumulants and ordinary
moments, the latter being approached only for power-law waiting times; this
regime, together with aging, will be addressed in a forthcoming work. Multi-state jump distributions lead naturally to Fuchsian systems and
Heun-type equations, suggesting the possibility of additional exact stationary
solutions. More broadly, renewal processes appear to provide a natural meeting
point between projection-operator methods, random dynamical systems, and
non-commutative probability.

\paragraph{Final perspective.}

The main message of this work is that memoryless renewal dynamics admit two
different but complementary exact descriptions. One is organized by Boolean
cumulants and naturally expressed through reduced memory kernels. The other is
organized by frozen deterministic flows and naturally expressed through
stationary measures.

The first explains closure, rates, and moments. The second explains support,
singularities, and rare events. Taken together, they show that finite-
correlation reductions are controlled not only by the variance and correlation
time of the forcing, but also by the combinatorial structure of the renewal
process itself. For memoryless renewal noise, that structure is Boolean.

\section*{Acknowledgements}
This research was carried out using ISMAR--CNR institutional funds.

\appendix

\section{Boolean Cumulants: Definitions and Properties\label{app:boolean}}

This appendix summarizes the definition and properties of Boolean cumulants used in the main text. While well established in the context of non-commutative probability \cite{SpeicherWoroudi1997,NicaSpeicher2006}, these concepts are less widely known in statistical mechanics, where equivalent objects frequently appear under different names.

\subsection{Physical Applications and Contexts}

Unlike classical cumulants, which describe arbitrary sets of spatial or temporal correlations, Boolean cumulants arise naturally whenever physical quantities are generated by a sequential \emph{chain of consecutive, non-overlapping events}. 

Historically, partial cumulants were introduced by von Waldenfels \cite{vonWaldenfels1973} in the theory of spectral line broadening, and later further developed via M\"obius calculus on interval partitions to treat multi-time expectations in Poisson jump processes \cite{vonWaldenfels1975}. In these systems, interactions or spectral shifts occur sequentially in time, restricting the relevant configurations to ordered interval structures.

A second physical occurrence appears in the resummation of the Dyson equation for the full propagator, $G = G_0 + G_0\Sigma G$, or equivalently $G = G_0 / (1 - \Sigma G_0)$. This expression matches the ordinary generating function relation of the Boolean moment--cumulant transform, Eq.~\eqref{eq:app_ogf}, with the self-energy $\Sigma$ acting as the Boolean cumulant generator. As shown in Section~\ref{sec:four}, for Gaussian jump processes this correspondence reflects an exact property: Boolean cumulants directly isolate the one-particle-irreducible (1PI) content of the moments.

Furthermore, Boolean cumulants describe combinatorial properties of first-return probabilities for random walks, particularly on free-product groups \cite{FreeIntegralCalculusI}, where moments represent total returns and Boolean cumulants isolate irreducible first returns. Finally, as exploited in the present work, step-renewal noise exhibits correlation functions supported strictly on interval partitions, as observation times within a single renewal epoch form connected temporal intervals (Section~\ref{sec:renewal}).

\subsection{Interval Partitions and Combinatorics}

Let $\mathcal P(n)$ be the set of all partitions of $\{1,\dots,n\}$. A partition is defined as an \emph{interval partition} if every block consists of contiguous integers. The subset of interval partitions is denoted by $\mathcal I(n) \subset \mathcal P(n)$. 

Interval partitions correspond bijectively to compositions of an ordered sequence. Cutting the sequence at a subset $S \subseteq \{1,\dots,n-1\}$ of the $n-1$ available gaps generates a unique interval partition. Consequently, the cardinality of $\mathcal I(n)$ is given by
\begin{equation}
|\mathcal I(n)| = 2^{n-1}.
\end{equation}
By comparison, the total number of partitions $|\mathcal P(n)|$ is given by the Bell numbers, while non-crossing partitions $|NC(n)|$ are counted by the Catalan numbers. For $n=4$, the eight possible interval partitions are
\begin{equation}
\begin{aligned}
&\{1234\},\quad \{1\}\{234\},\quad \{123\}\{4\},\quad \{12\}\{34\},\\
&\{1\}\{2\}\{34\},\quad \{1\}\{23\}\{4\},\quad \{12\}\{3\}\{4\},\quad \{1\}\{2\}\{3\}\{4\}.
\end{aligned}
\end{equation}
Partitions exhibiting crossing structures (such as $\{13\}\{24\}$) or nested configurations (such as $\{14\}\{23\}$) are excluded from $\mathcal I(n)$. Interval partitions enforce a strictly sequential order, making them the natural combinatorial structure for stochastic processes that remain constant between renewal events.

\subsection{Formal Definitions and M\"obius Inversion}

Given a moment sequence $\{m_n\}$, the Boolean cumulants $b_n$ are defined by restricting the moment expansion to interval partitions:
\begin{equation}
m_n = \sum_{\pi \in \mathcal I(n)} \prod_{B \in \pi} b_{|B|},
\label{eq:app_moment}
\end{equation}
which replaces the summation over the full partition lattice $\mathcal P(n)$ used for classical cumulants. Inversion of Eq.~\eqref{eq:app_moment} is obtained via M\"obius inversion on the poset $\mathcal I(n)$ \cite{Rota1964}:
\begin{equation}
b_n = \sum_{\pi \in \mathcal I(n)} \mu_{\mathcal I}(\pi, 1_n) \prod_{B \in \pi} m_{|B|}, \qquad \mu_{\mathcal I}(\pi, 1_n) = (-1)^{|\pi|-1}.
\label{eq:app_mobius}
\end{equation}
Because the interval $[\pi, 1_n]$ is isomorphic to the Boolean lattice of subsets on $|\pi|-1$ elements, the M\"obius function reduces to a pure sign factor $(-1)^{|\pi|-1}$, omitting the factorial or Catalan weights present in classical or free probability. Equation~\eqref{eq:app_mobius} corresponds to the inclusion--exclusion identity \eqref{k_nVSm_n} for $G$-cumulants \cite[\S4.4.3]{bbJSTAT4}.

Since the sum in Eq.~\eqref{eq:app_moment} runs over compositions rather than arbitrary set partitions, the associated generating functions are ordinary rather than exponential. Defining $M_o(z) = \sum_{n \ge 1} m_n z^n$ and $\eta(z) = \sum_{n \ge 1} b_n z^n$, summation over the number of blocks $p$ yields the geometric series
\begin{equation}
M_o(z) = \sum_{p \ge 1} \eta(z)^p = \frac{\eta(z)}{1 - \eta(z)} \quad \iff \quad \eta(z) = \frac{M_o(z)}{1 + M_o(z)}.
\label{eq:app_ogf}
\end{equation}
This algebraic relation leads to two key structural consequences:
\begin{itemize}
    \item \emph{Additivity under Boolean Independence:} When all mixed Boolean cumulants of two variables vanish, the generating function $\eta(z)$ is additive. In terms of moments, this condition implies
    \begin{equation}
    \frac{1}{1 + M_o^{(A \sqcup B)}} = \frac{1}{1 + M_o^{(A)}} + \frac{1}{1 + M_o^{(B)}} - 1,
    \end{equation}
    which defines Boolean convolution \cite{SpeicherWoroudi1997}.
    
    \item \emph{The Boolean Gaussian Distribution:} Truncating the cumulants at second order ($b_n = 0$ for $n \ge 3$) yields $\eta(z) = b_2 z^2$, leading to $M_o(z) = b_2 z^2 / (1 - b_2 z^2)$. This generating function corresponds to moments $m_{2k} = b_2^k$ and $m_{2k+1} = 0$, representing the symmetric Bernoulli distribution $\frac{1}{2}(\delta_{-a} + \delta_a)$ with $a^2 = b_2$. In Muraki's classification \cite{Muraki2003}, the Bernoulli distribution serves as the central limit law for the Boolean independence hierarchy, analogous to the Gaussian distribution in classical probability. This property forms the foundation of Theorem~\ref{thm:uniqueness}.
\end{itemize}

\subsection{Analytic Formulation and Heavy-Tailed Laws\label{sec:analytic_zeta}}

While Equations~\eqref{eq:app_moment}--\eqref{eq:app_ogf} apply to formal power series requiring the existence of all moments, the Boolean generator $\eta(w)$ can also be defined analytically, allowing the results of Section~\ref{sec:heavy} to extend to heavy-tailed jump distributions.

For a random variable $\xi$ with probability measure $p$, the identity $1 + M_o(w) = \sum_{n \ge 0} \langle \xi^n \rangle w^n = \langle (1 - \xi w)^{-1} \rangle$ allows rewriting Eq.~\eqref{eq:app_ogf} as
\begin{equation}
\eta(w) = 1 - \frac{1}{1 + M_o(w)} = 1 - \left\langle \frac{1}{1 - \xi w} \right\rangle^{-1} = 1 - w \, F(1/w),
\end{equation}
where $G(z) = \langle (z - \xi)^{-1} \rangle$ denotes the Cauchy transform of $p$, and $F(z) = 1 / G(z)$ is its reciprocal. For example, in the symmetric Bernoulli case $F(z) = (z^2 - a^2)/z$, recovering $\eta(w) = a^2 w^2$.

Because the expectation $\langle (1 - \xi w)^{-1} \rangle$ converges for any probability measure, $\eta(w)$ remains well-defined even when moment expansions diverge. A notable consequence is that every probability measure is infinitely divisible under Boolean convolution \cite{SpeicherWoroudi1997}.

\section{Derivation of the General Order Formula \eqref{eq:renewal_boolean_form}\label{app:cancellation}}

To express Equation~\eqref{eq:renewal_Cn} in terms of a Boolean moment--cumulant structure, we exploit the fact that its support is restricted to $\mathcal I(n)$. Although the weighting factors $(1 - x_i)$ depend on pairs of adjacent blocks across cuts, the transformation to a strictly multiplicative form over blocks can be established in three steps.

\paragraph{(i) Refinement expansion.}
Inserting the Boolean moment--cumulant expansion for the jump variable, $\overline{\xi^k} = \sum_{\pi \in \mathcal I(k)} \prod_{j} b_{|B_j|}$, into Eq.~\eqref{eq:renewal_Cn} expands each block of $\pi_S$ into sub-blocks. The product $\prod_{B \in \pi_S} \overline{\xi^{|B|}}$ is thereby expressed as a summation over all compositions $\pi$ that refine $\pi_S$, or equivalently over all $\pi$ whose cut set $K = \mathrm{cut}(\pi)$ contains $S$:
\begin{equation}
\prod_{B \in \pi_S} \overline{\xi^{|B|}} = \sum_{\pi:\, K \supseteq S} \prod_{B \in \pi} b_{|B|}.
\end{equation}

\paragraph{(ii) Summation exchange.}
Exchanging the order of summation over the subset $S$ and the refining partition $\pi$ yields
\begin{equation}
C_n(u_1,\dots,u_n) = \sum_{\pi \in \mathcal I(n)} \left[ \prod_{B \in \pi} b_{|B|} \right] \sum_{S \subseteq K} \prod_{i \notin S} x_i \prod_{i \in S} (1 - x_i).
\label{eq:C_temp}
\end{equation}

\paragraph{(iii) Factorization and cancellation.}
The internal sum over $S \subseteq K$ separates the $n-1$ gaps into internal block gaps $\mathrm{int}(\pi)$ and boundary cut gaps $K$. Internal gaps $i \in \mathrm{int}(\pi)$ are excluded from $S$ by construction and contribute a factor $x_i$. For cut gaps $i \in K$, summing over inclusion or exclusion in $S$ simplifies to $x_i + (1 - x_i) = 1$. The sum therefore factorizes as
\begin{equation}
\sum_{S \subseteq K} \prod_{i \notin S} x_i \prod_{i \in S} (1 - x_i) = \left[ \prod_{i \in \mathrm{int}(\pi)} x_i \right] \prod_{i \in K} \left[ x_i + (1 - x_i) \right] = \prod_{i \in \mathrm{int}(\pi)} x_i.
\label{eq:cut_cancellation}
\end{equation}
Thus, boundary factors cancel identically, leaving only intra-block survival probabilities. For exponentially distributed waiting times, the product over the interior gaps of a block $B$ telescopes:
\begin{equation}
\prod_{i \in \mathrm{int}(B)} x_i = e^{-\sum_{i} d_i / \tau} = \phi(\mathrm{span}(B)).
\end{equation}
Substituting Eq.~\eqref{eq:cut_cancellation} back into Eq.~\eqref{eq:C_temp} completes the derivation of Eq.~\eqref{eq:renewal_boolean_form}.

 \newcommand{\noop}[1]{}

\end{document}